%% file: Manuscript_Trigonometric_NH.tex
\documentclass[aip,cha,reprint,floatfix]{revtex4-2}
\usepackage{graphicx}
\usepackage{dcolumn}
\usepackage{bm}
\usepackage[utf8]{inputenc}
\usepackage[T1]{fontenc}
\usepackage{algpseudocode}
\usepackage{mathptmx}
\usepackage{amsthm}
\usepackage{enumerate}
\usepackage{booktabs}
\usepackage{tabularx}
\usepackage[mathscr]{eucal}
\usepackage{amsmath,amssymb}
\usepackage{mathtools}
\usepackage{subfigure}
\usepackage[english]{babel}
\usepackage{lmodern}
\newtheorem{theorem}{Theorem}

\newtheorem{lemma}{Lemma}
\newtheorem{remark}{Remark}

\newcounter{algorithm}

\input{mathdef.tex}

\begin{document}
\mathtoolsset{
mathic 
}
\title{document}
\title{Trigonometric Nos\'e--Hoover oscillator: chaos, periodic orbits and integrability}
\author{Wojciech Szumi{\'n}ski}
\email{w.szuminski@if.uz.zgora.pl}
\affiliation{Institute of Physics, University of Zielona G\'ora,
Licealna 9, PL-65-407 Zielona G\'ora, Poland}

\author{Jaume Llibre}
\email{jaume.llibre@uab.cat}
\affiliation{Departament de Matem\`atiques,
Universitat Aut\`onoma de Barcelona,
08193 Bellaterra, Barcelona, Catalonia, Spain}
\affiliation{Reial Acad\`emia de Ci\`encies i Arts de Barcelona,
La Rambla 115, 08002 Barcelona, Catalonia, Spain}
\begin{abstract}
We introduce a trigonometric version of the Nos\'e--Hoover oscillator
in which the quadratic mechanical terms and unbounded thermostat
coupling are replaced by bounded trigonometric functions. This
formulation replaces the harmonic potential by a pendulum-type
potential and confines the thermostat interaction to a bounded periodic
form. The resulting two-parameter system is naturally defined on the
three-dimensional torus and reduces near the origin, to leading order,
to the classical polynomial Nos\'e--Hoover model.

We investigate its global dynamics using Poincar\'e sections,
bifurcation diagrams, Lyapunov spectra, Kaplan--Yorke dimensions, and
the Lyapunov Integrability Test (LIT). The numerical results reveal the
coexistence of regular and chaotic dynamics and characterize changes in
dissipative behavior across the parameter plane.

We then analyze the two limiting cases associated with the parameter
axes. For $a=0$, we construct two functionally independent first
integrals on regular domains, whereas for $b=0$ the dynamics reduces to
a family of two-dimensional systems on invariant tori, which are
analyzed using Darboux polynomials and exponential factors.

First-order averaging near the intersection of these integrable limits
yields periodic solutions bifurcating from unperturbed periodic orbits
and an obstruction to regular $C^1$ first integrals in their
neighborhoods. Independently, differential Galois theory applied to the
normal variational equation, together with the Ayoul--Zung and Li--Shi
criteria, excludes meromorphic $B$-integrability and non-constant
meromorphic first integrals near a particular non-equilibrium phase
curve for $ab\neq0$. Thus, despite retaining the local structure of the
classical Nos\'e--Hoover oscillator, its trigonometric counterpart
exhibits markedly different global dynamics and integrability
properties.
\end{abstract}
\maketitle
\noindent\textbf{The Nos\'e--Hoover oscillator provides a remarkably simple
setting in which regular and chaotic dynamics can coexist. In this
work we ask what remains of this familiar picture when the classical
polynomial model is replaced by a fully periodic one. The resulting
trigonometric oscillator retains the local leading-order structure of
the classical system, but its dynamics now takes place on the compact
three-dimensional torus. We show that this change in global geometry
has substantial consequences: regular and chaotic motion coexist over
broad parameter ranges, period-doubling cascades and phase-space
contraction appear, and the integrable limits of the classical model
are significantly altered. To explore this dynamics, we also use our
recent Lyapunov Integrability Test (LIT), which provides both qualitative
and quantitative information about chaotic behavior and indications of
the possible existence of first integrals. On the analytical side, averaging theory identifies periodic orbits
emerging near the integrable regime and provides a local obstruction
to regular first integrals. Differential Galois theory independently
excludes meromorphic $B$-integrability and non-constant meromorphic
first integrals near a particular non-equilibrium phase curve for
arbitrary nonzero values of both parameters. To the best of
our knowledge, this is one of the first applications in which
non-integrability conditions obtained from averaging theory and
differential Galois theory are combined within the analysis of the same
dynamical system. The trigonometric Nos\'e--Hoover oscillator therefore
provides a simple setting in which numerical diagnostics, periodic
dynamics, and complementary notions of integrability and
non-integrability can be studied together.}

\section{Introduction}
\label{sec:introduction}

The Nos\'e--Hoover oscillator originates from the extended Hamiltonian
formulation introduced by Nos\'e~\cite{Nose:84::}. For a one-dimensional harmonic oscillator, after normalization of the physical constants, the extended Hamiltonian can be written as
\begin{equation}
\label{eq:Nose_Hamiltonian}
\mathcal{H}_{\mathrm{N}}
=
\frac{y^{2}}{2s^{2}}
+\frac{x^{2}}{2}
+\frac{p_s^{2}}{2}
+T\log s,
\end{equation}
where $x$ and $y$ denote the normalized position and momentum,
respectively, while $s$ is the additional scaling variable and $p_s$
is its conjugate momentum. The logarithmic term introduces the prescribed temperature $T$ into
the extended thermostat dynamics.

Hamilton's equations are
\begin{equation}
\label{eq:Nose_Hamilton_equations}
\dot x=\frac{y}{s^{2}},
\quad
\dot y=-x,
\quad
\dot s=p_s,
\quad
\dot p_s=\frac{y^{2}}{s^{3}}-\frac{T}{s}.
\end{equation}

After the Nos\'e time transformation and the corresponding rescaling
of the physical momentum, the variable $s$ can be eliminated from the
physical equations, while the additional thermostat degree of freedom
is represented by a dynamical friction coefficient $\zeta$. To make
the structure of the resulting equations explicit, we first separate
the mechanical part of the dynamics from the thermostat contribution.

In the normalized physical variables, the mechanical subsystem is the
harmonic oscillator with Hamiltonian
\begin{equation}
\label{eq:Hclassical_mechanical}
H_{\mathrm{cl}}(x,y)
=
K_{\mathrm{cl}}(y)+V_{\mathrm{cl}}(x)
=
\frac{y^{2}}{2}+\frac{x^{2}}{2},
\end{equation}
where
\begin{equation}
\label{eq:classical_KV}
K_{\mathrm{cl}}(y)=\frac{y^{2}}{2},
\qquad
V_{\mathrm{cl}}(x)=\frac{x^{2}}{2}.
\end{equation}
The Nos\'e--Hoover thermostat augments this mechanical dynamics by
introducing the friction term $-\zeta\,\partial H_{\mathrm{cl}}/\partial y$
in the momentum equation, while the evolution of $\zeta$ is governed
by the deviation of the instantaneous kinetic energy from its prescribed
reference value. The reduced equations thus take the form
\begin{equation}
\label{eq:classical_general_thermostat}
\begin{aligned}
\dot x
&=
\frac{\partial H_{\mathrm{cl}}}{\partial y},
\\
\dot y
&=
-\frac{\partial H_{\mathrm{cl}}}{\partial x}
-\zeta
 \frac{\partial H_{\mathrm{cl}}}{\partial y},
\\
\dot\zeta
&=
2\alpha\left(K_{\mathrm{cl}}(y)-K_{0}\right),
\qquad
K_{0}=\frac{T}{2}.
\end{aligned}
\end{equation}
Substituting~\eqref{eq:classical_KV} into
system~\eqref{eq:classical_general_thermostat} yields the standard
normalized Nos\'e--Hoover oscillator
\begin{equation}
\label{eq:NH_standard_physical}
\begin{cases}
\dot x=y,\\
\dot y=-x-\zeta y,\\
\dot\zeta=\alpha(y^{2}-T),
\end{cases}
\end{equation}
where $\alpha$ determines the response rate of the thermostat.

This form makes the role of the thermostat transparent. The variable
$\zeta$ acts as a time-dependent friction coefficient, while its
evolution is controlled by the deviation of the instantaneous kinetic
energy from the reference value $K_{0}$. Although
system~\eqref{eq:NH_standard_physical} is not Hamiltonian in
$(x,y,\zeta)$, it originates from the extended Hamiltonian
construction of Nos\'e.

The same representation gives a simple balance law for the mechanical
energy. Along solutions of~\eqref{eq:classical_general_thermostat},
\begin{equation*}
\label{eq:Hclassical_balance_standard}
\begin{split}
\frac{\rmd H_{\mathrm{cl}}}{\rmd t}
=
\frac{\partial H_{\mathrm{cl}}}{\partial x}\dot x
+
\frac{\partial H_{\mathrm{cl}}}{\partial y}\dot y
=
-\zeta
\left(
\frac{\partial H_{\mathrm{cl}}}{\partial y}
\right)^2
=
-\zeta y^{2}.
\end{split}
\end{equation*}
Thus, $\zeta>0$ corresponds to instantaneous energy extraction from
the mechanical subsystem, whereas $\zeta<0$ corresponds to energy
injection.

The dynamics of the Nos\'e--Hoover oscillator has been studied
extensively. Posch, Hoover and Vesely showed that even a harmonic
oscillator with a single mechanical degree of freedom, when coupled
to the Nos\'e--Hoover thermostat, can exhibit coexisting regular and
chaotic trajectories. They also found that the resulting dynamics is
not sufficiently chaotic to provide a fully ergodic description of
the canonical ensemble~\cite{PoschHooverVesely:86::}. In the slow-thermostat regime,
Nos\'e showed that the dynamics displays long recurrence cycles and slow
energy modulation, which can be described perturbatively by an
integrable averaged Hamiltonian~\cite{Nose:93::}. Related studies have also shown that Nos\'e--Hoover thermostated
systems can exhibit Lyapunov instability, limit cycles, and phase-space
contraction~\cite{HooverPoschHolian1987}, while numerical investigations of
thermostated harmonic oscillators revealed regular oscillations for
weak coupling and increasingly irregular dynamics as the thermostat
coupling is increased~\cite{GoloSalnikovShaitan:04::}. These features,
together with the coexistence of invariant tori and chaotic motion,
remain characteristic of the classical Nos\'e--Hoover oscillator and
its variants~\cite{Sprott:20::}.

A particularly useful two-parameter representation
was considered in a previous study~\cite{Llibre:21::}. Setting $T=1$ and, for $a\neq0$, introducing
the rescaling
\begin{equation*}
\label{eq:NH_parameter_transformation}
\zeta=az,
\qquad
\alpha=ab,
\end{equation*}
transforms~\eqref{eq:NH_standard_physical} into the family
\begin{equation}
\label{eq:modified_NH}
\begin{cases}
\dot x=y,\\
\dot y=-x-ayz,\\
\dot z=b(y^{2}-1),
\end{cases}
\qquad
a,b\in\mathbb R.
\end{equation}
For $a\neq0$, this is therefore a reparametrized form of the standard
Nos\'e--Hoover oscillator, with $\alpha=ab$. At the same time,
representation~\eqref{eq:modified_NH} separates the friction coupling,
controlled by $a$, from the thermostat feedback, controlled by $b$,
and extends naturally to the limiting cases in which one of these
mechanisms is switched off.

These limiting cases are especially important for the present work.
System~\eqref{eq:modified_NH} is completely integrable when either
$a=0$ or $b=0$.  The complete integrability of a three-dimensional
autonomous vector field means the existence of two functionally
independent first integrals on an open dense subset of the phase
space.

For $a=0$, two functionally
independent first integrals are
\begin{equation*}
\label{eq:NH_integrals_a0}
H_{1}=H_{\mathrm{cl}}(x,y),
\end{equation*}
and
\begin{equation*}
\label{eq:NH_integral_F1}
F_{1}
=
2z-bxy
-b\left(x^{2}+y^{2}-2\right)
\arctan\left(\frac{x}{y}\right),
\end{equation*}
with $y\neq0$. 
The level sets of $H_{1}$ are invariant circular cylinders
\[
\mathcal C_c
=
\left\{(x,y,z)\in\mathbb R^3:
x^{2}+y^{2}=c^2\right\},
\qquad c>0,
\]
whose intersections with the planes $z=\mathrm{const}$ are circles
of radius $c$.

For $b=0$, two functionally independent first integrals are
\begin{equation*}
\label{eq:NH_integral_H2}
H_{2}=z,
\end{equation*}
and
\begin{equation*}
\begin{split}
\label{eq:NH_integral_F2}
F_{2}
&=
\left(x^{2}+y^{2}+axyz\right)
\\
&\quad\times
\exp\left[
\frac{2az}{\sqrt{4-a^{2}z^{2}}}
\arctan\left(
\frac{2x+ayz}
{y\sqrt{4-a^{2}z^{2}}}
\right)
\right],
\end{split}
\end{equation*}
defined for $y\neq0$ and $z\neq\pm2/a$. In this case, the level
sets of $H_{2}$ are the invariant planes $z=c$. Hence, the
two-parameter polynomial family contains two distinguished completely
integrable limits embedded in a system whose generic dynamics may be
considerably more complicated.

The classical Nos\'e--Hoover oscillator is built around the harmonic
oscillator, but the mechanical part need not be restricted to a
quadratic potential. Periodic potentials appeared in Nos\'e--Hoover
dynamics already in early studies of nonequilibrium systems. Hoover
et al.~\cite{HooverPoschHolian1987} considered a thermostatted particle
moving in the periodic pendulum potential
\[
V(x)=\varepsilon(1-\cos x),
\]
and subject to a constant external field. In dimensionless variables
their model takes the form
\[
\dot x=p,\qquad
\dot p=F-\varepsilon\sin x-\zeta p,\qquad
\dot\zeta=\alpha(p^2-1).
\]
It may be viewed as a one-particle limit of the Frenkel--Kontorova
model coupled to a Nos\'e--Hoover thermostat~\cite{HooverPoschHolian1987}. For small oscillations
around a minimum of the periodic potential,
$
\varepsilon\sin x \approx \varepsilon x,
$
so the harmonic restoring force is recovered locally. Thus, a
periodic mechanical potential is already compatible with the basic
Nos\'e--Hoover construction.

This observation suggests a natural extension. Instead of making only
the potential periodic, we impose periodicity on the entire
three-dimensional system. The quadratic kinetic and potential energies
are replaced by bounded trigonometric functions, while the thermostat
coupling is periodicized accordingly. The thermostat feedback retains
the same dependence on the deviation of the kinetic energy from a
prescribed reference value.

Thus, we introduce the trigonometric Nos\'e--Hoover oscillator
\begin{equation}
\label{eq:periodic_NH}
\begin{cases}
\dot x=\sin y,\\
\dot y=-\sin x-a\sin y\sin z,\\
\dot z=b(1-2\cos y),
\end{cases}
\qquad
a,b\in\mathbb R.
\end{equation}
The vector field is $2\pi$-periodic in each phase variable, and hence
system~\eqref{eq:periodic_NH} naturally defines an autonomous flow on
the three-dimensional torus $\mathbb T^3$. Its relation to the
normalized classical Nos\'e--Hoover oscillator and the motivation for
the trigonometric construction are discussed in
Sec.~\ref{sec:periodic_generalization}.

Three-dimensional trigonometric systems on periodic domains are common
in mathematical physics. Well-known examples include the
Arnold--Beltrami--Childress (ABC) flow and the Roberts
flow~\cite{Arnold1965,Beltrami1889,Childress1970,Roberts1972}. Their
vector fields are periodic in the spatial variables and are naturally
studied on compact periodic domains. The ABC flow, in particular,
provides a standard example of a simple autonomous system on the torus
with a mixed phase-space structure, where regular and chaotic motions
may coexist. These examples place system~\eqref{eq:periodic_NH} in the
broader class of three-dimensional trigonometric flows on compact phase
spaces.

The main question is how the global periodicity changes the dynamics and integrability
of the Nos\'e--Hoover oscillator. Of particular interest are the two
integrable limits of the polynomial family. The classical system is
completely integrable for $a=0$ and for $b=0$, and it is not clear a
priori whether both integrable families survive after
trigonometricization. More generally, it is natural to ask to what
extent the characteristic dynamical features of the classical
Nos\'e--Hoover oscillator persist in the trigonometric model. Although
the two systems have the same leading local structure, their global
phase spaces are fundamentally different: the polynomial model evolves
in an unbounded phase space, whereas the trigonometric system is defined
on the compact torus $\mathbb T^3$. This change may affect not only the
existence and form of first integrals, but also the organization of
regular and chaotic regions, the bifurcation structure, and the
phase-space contraction properties of the flow.

We investigate these questions using complementary analytical and
numerical methods. After introducing the physical interpretation of
the trigonometric construction, we describe the basic structural
properties of system~(8) and examine the limiting cases corresponding
to the parameter axes. The global dynamics is then explored through
Poincar\'e sections, bifurcation diagrams, Lyapunov spectra,
Kaplan--Yorke dimensions, and Lyapunov maps, which reveal the
coexistence of regular and chaotic motion and characterize
phase-space contraction across the parameter space. The integrability
problem is subsequently studied analytically. A tangent half-angle
transformation and Darboux theory are used to investigate the reduced
dynamics in the limit $b=0$, while first-order averaging near the
intersection of the parameter axes yields periodic solutions and
non-integrability conditions. Finally, meromorphic non-integrability
is studied by means of Morales--Ramis theory and the differential
Galois groups of the corresponding normal variational equations.
Together, these approaches allow us to distinguish the dynamical
features inherited locally from the classical Nos\'e--Hoover
oscillator from those produced by the global trigonometric geometry.

\section{Construction of the trigonometric model}
\label{sec:periodic_generalization}

From this point onward, all variables and energies are understood in
the normalized dimensionless form used for the classical
Nos\'e--Hoover oscillator. In particular, $x$ and $y$ are dimensionless
canonical variables of the mechanical subsystem, whose classical
Hamiltonian is given by in~\eqref{eq:Hclassical_mechanical}.

The construction introduced in system~\eqref{eq:periodic_NH} should
therefore be understood as a trigonometric counterpart of this
normalized model. In particular, it does not amount to replacing a
dimensional physical momentum by a trigonometric function.

The quadratic kinetic and potential terms are replaced by
\[
K_{\mathrm{tr}}(y)=1-\cos y,
\qquad
V_{\mathrm{tr}}(x)=1-\cos x.
\]
These functions are dimensionless, bounded, and $2\pi$-periodic. 
The corresponding mechanical Hamiltonian is
\begin{equation}
\label{eq:Hperiodic}
H_{\mathrm{tr}}(x,y)
=
K_{\mathrm{tr}}(y)+V_{\mathrm{tr}}(x)
=
2-\cos y-\cos x.
\end{equation}
Its derivatives are
\[
\frac{\partial H_{\mathrm{tr}}}{\partial y}
=
\sin y,
\qquad
-\frac{\partial H_{\mathrm{tr}}}{\partial x}
=
-\sin x.
\]
Thus, the first equation of system~\eqref{eq:periodic_NH} is generated
by the periodic kinetic term, while the conservative part of the
second equation is generated by the periodic potential.

The thermostat coupling is modified in the same spirit. In the
polynomial model, the effective friction coefficient is proportional
to $az$. Here it is replaced by the bounded periodic function
\[
\zeta_{\mathrm{tr}}(z)=a\sin z.
\]
The corresponding friction term becomes
\[
-\zeta_{\mathrm{tr}}(z)
\frac{\partial H_{\mathrm{tr}}}{\partial y}
=
-a\sin z\sin y.
\]

The feedback equation is obtained by preserving the same dependence
on the deviation of the kinetic term from its reference value. We set
\begin{equation}
\label{eq:periodic_feedback_energy}
\dot z
=
2b\left(K_{\mathrm{tr}}(y)-K_0\right),
\qquad
K_0=\frac12.
\end{equation}
The value $K_0=1/2$ corresponds to the normalization $T=1$ used in the
polynomial Nos\'e--Hoover system. Since
\[
2K_{\mathrm{tr}}(y)-1
=
1-2\cos y,
\]
equation~\eqref{eq:periodic_feedback_energy} gives precisely the third
equation of system~\eqref{eq:periodic_NH}.

The trigonometric system~\eqref{eq:periodic_NH} can therefore be written
in the structural form
\begin{equation}
\label{eq:periodic_general_thermostat}
\begin{aligned}
\dot x
&=
\frac{\partial H_{\mathrm{tr}}}{\partial y},
\\
\dot y
&=
-\frac{\partial H_{\mathrm{tr}}}{\partial x}
-\zeta_{\mathrm{tr}}(z)
 \frac{\partial H_{\mathrm{tr}}}{\partial y},
\\
\dot z
&=
2b\left(K_{\mathrm{tr}}(y)-K_0\right).
\end{aligned}
\end{equation}
This representation makes the relation with the classical
Nos\'e--Hoover mechanism explicit. The mechanical subsystem is coupled
to the thermostat variable through the periodic friction coefficient
$\zeta_{\mathrm{tr}}(z)$, while the thermostat responds to the
difference between the instantaneous kinetic term and the prescribed
reference value $K_0$.

The parameters $a$ and $b$ have distinct roles. The parameter $a$
sets the amplitude of the thermostat-dependent friction,
\[
-|a|
\leq
\zeta_{\mathrm{tr}}(z)
\leq
|a|,
\]
whereas $b$ controls the rate at which the thermostat responds to the
deviation of $K_{\mathrm{tr}}$ from $K_0$. Unlike in the polynomial
family, these parameters cannot in general be combined by a linear
rescaling of the thermostat variable. Indeed, setting $\zeta=az$
gives
\[
a\sin z
=
a\sin\left(\frac{\zeta}{a}\right),
\qquad
\dot\zeta
=
ab(1-2\cos y).
\]
Hence, $a$ remains explicitly in the nonlinear periodic coupling and
cannot be absorbed into the product $ab$ as in the classical case. The parameters $a$ and $b$
therefore remain independent parameters of the trigonometric model.

The mechanical energy balance follows directly
from~\eqref{eq:periodic_general_thermostat}. Along its solutions,
\begin{equation}
\label{eq:Hperiodic_balance}
\begin{split}
\frac{\rmd H_{\mathrm{tr}}}{\rmd t}
&=
\frac{\partial H_{\mathrm{tr}}}{\partial x}\dot x
+
\frac{\partial H_{\mathrm{tr}}}{\partial y}\dot y
\\
&=
-\zeta_{\mathrm{tr}}(z)
\left(
\frac{\partial H_{\mathrm{tr}}}{\partial y}
\right)^2
=
-a\sin z\,\sin^2 y.
\end{split}
\end{equation}
Hence, the sign of $a\sin z$ determines the instantaneous direction of
energy exchange. If $a\sin z>0$, the mechanical energy decreases,
whereas if $a\sin z<0$, the thermostat injects energy into the
mechanical subsystem.

The local relation with the polynomial Nos\'e--Hoover
family~\eqref{eq:modified_NH} follows directly from the Taylor
expansions
\[
\sin u=u+O(u^3),
\qquad
1-\cos u=\frac{u^2}{2}+O(u^4).
\]
Consequently, the Taylor expansion of
system~\eqref{eq:periodic_NH} about the origin takes the form
\[
\begin{cases}
\dot x=y+O(y^3),\\
\dot y=-x-ayz+O\!\left(\|(x,y,z)\|^3\right),\\
\dot z=b(y^2-1)+O(y^4).
\end{cases}
\]
Thus, the leading-order terms reproduce the modified polynomial
Nos\'e--Hoover family~\eqref{eq:modified_NH}.

The construction should not be interpreted as an exact canonical
transformation of the original extended Nos\'e Hamiltonian.
System~\eqref{eq:periodic_NH} is instead a trigonometric realization
of the normalized Nos\'e--Hoover dynamics. It preserves the structural
form of the mechanical energy exchange and thermostat feedback, while
its global dynamics is defined on the compact phase space
$\mathbb T^3$. Whether this thermostat admits an invariant statistical
measure analogous to the canonical measure of the classical
Nos\'e--Hoover system is a separate question and is not assumed here.

\section{Structural properties}
\label{sec:basic_properties}

Before proceeding to the numerical and analytical study of
system~\eqref{eq:periodic_NH}, we describe some of its basic structural
properties. In particular, we discuss its natural phase space,
symmetries, divergence, and equilibrium structure.
\subsection{Periodicity and phase space}

The vector field associated with system~\eqref{eq:periodic_NH} is
$2\pi$-periodic in each variable. More precisely, if
$
\boldsymbol{x}=(x,y,z)
$
and the vector field is denoted by $\vv=\vv(\vx)$, then
\begin{equation*}
\label{eq:periodicity_vector_field}
\vv(x+2k_1\pi,y+2k_2\pi,z+2k_3\pi)
=
\vv(x,y,z),
\end{equation*}
with $k_1,k_2,k_3\in\mathbb Z$. 
Consequently, system~\eqref{eq:periodic_NH} naturally defines a smooth
vector field on the three-dimensional torus
\begin{equation*}
\label{eq:phase_space_torus}
\mathbb T^3
=
(\mathbb R/2\pi\mathbb Z)^3,
\end{equation*}
which will be regarded as its natural phase space.

When the system is represented in the covering space $\mathbb R^3$,
every orbit and invariant set is reproduced under translations by
integer multiples of $2\pi$ in each coordinate direction. This
periodic phase-space geometry constitutes a fundamental global
difference from the modified polynomial Nos\'e--Hoover
system~\eqref{eq:modified_NH}, whose natural phase space is
$\mathbb R^3$.

All components of the vector field are globally bounded. Indeed, from
system~\eqref{eq:periodic_NH} one immediately obtains
\begin{equation*}
\label{eq:vector_field_bounds}
|\dot x|\leq 1,
\qquad
|\dot y|\leq 1+|a|,
\qquad
|\dot z|\leq 3|b|.
\end{equation*}
Thus, in contrast to the polynomial Nos\'e--Hoover oscillator, the
nonlinear interactions remain bounded throughout the phase space.
Moreover, since the vector field is smooth and $\mathbb T^3$ is
compact, the corresponding flow is complete. Hence, for every initial
condition $\boldsymbol{x}_0\in\mathbb T^3$, the solution
$\Phi^t(\boldsymbol{x}_0)$ is defined for all $t\in\mathbb R$.

\subsection{Symmetries}

System~\eqref{eq:periodic_NH} possesses both an ordinary symmetry and
a reversing symmetry. First, consider the involution
\begin{equation*}
\label{eq:symmetry_R}
R:\mathbb T^3\longrightarrow\mathbb T^3,
\qquad
R(x,y,z)=(-x,-y,z).
\end{equation*}
The vector field satisfies
\begin{equation*}
\label{eq:equivariance_R}
\vv(R\boldsymbol{x})
=
DR\,\vv(\boldsymbol{x}),
\end{equation*}
where
\[
DR=\operatorname{diag}(-1,-1,1).
\]
Thus, $R$ is a symmetry of the system. In particular, if
\[
\boldsymbol{x}(t)=(x(t),y(t),z(t))
\]
is a solution of~\eqref{eq:periodic_NH}, then
\[
R\boldsymbol{x}(t)
=
(-x(t),-y(t),z(t))
\]
is also a solution. Consequently, invariant sets and periodic orbits
which are not themselves invariant under $R$ occur in
symmetry-related pairs.

The system is also reversible with respect to the involution
\begin{equation*}
\label{eq:reversing_symmetry}
S:\mathbb T^3\longrightarrow\mathbb T^3,
\qquad
S(x,y,z)=(x,-y,-z).
\end{equation*}
Indeed,
\begin{equation*}
\label{eq:reversibility_S}
\vv(S\boldsymbol{x})
=
-DS\,\vv(\boldsymbol{x}),
\end{equation*}
where
\[
DS=\operatorname{diag}(1,-1,-1).
\]
Therefore, if $\boldsymbol{x}(t)$ is a solution, then
\begin{equation*}
\label{eq:reversed_solution}
S\boldsymbol{x}(-t)
=
\bigl(x(-t),-y(-t),-z(-t)\bigr)
\end{equation*}
is also a solution. This reversibility imposes an additional
constraint on the organization of periodic orbits and other invariant
sets in the phase space.

\subsection{Divergence and phase-space volume}
\label{subsec:divergence}

The divergence of the vector field $\vv(\vx)$ associated with
system~\eqref{eq:periodic_NH} is
\begin{equation}
\label{eq:divergence_periodic_NH}
\nabla\cdot\vv
=
\frac{\partial\dot x}{\partial x}
+
\frac{\partial\dot y}{\partial y}
+
\frac{\partial\dot z}{\partial z}
=
-a\cos y\sin z.
\end{equation}
Hence, the system is not volume preserving in general. The local rate
of phase-space expansion or contraction depends periodically on $y$
and $z$ and satisfies
\begin{equation*}
\label{eq:divergence_bound}
|\nabla\cdot\vv|
\leq |a|.
\end{equation*}
The divergence vanishes identically if and only if $a=0$.
Consequently, for $a=0$ the flow preserves the standard volume form
$
dx\wedge dy\wedge dz
$
on $\mathbb T^3$. For $a\neq0$, the divergence takes both positive and negative values
on $\mathbb T^3$, while vanishing on the hypersurfaces
$\cos y=0$ or $\sin z=0$.

More precisely, let $\Phi^t$ denote the flow generated by
system~\eqref{eq:periodic_NH}. By Liouville's formula
\cite{KatokHasselblatt1995},
\begin{equation}
\label{eq:Liouville_formula}
\det D_{\vx}\Phi^t(\boldsymbol{x}_0)
=
\exp\left[
\int_0^t
\nabla\cdot\vv\bigl(\Phi^s(\boldsymbol{x}_0)\bigr)\,ds
\right].
\end{equation}
Consequently, whenever the Lyapunov exponents exist, their sum satisfies~\cite{Benettin:80::,Wolf:85::, Pikovsky:16::}
\begin{equation}
\begin{split}
\label{eq:Lyapunov_sum_divergence}
\mathrm{\Lambda}_{\Sigma}(\boldsymbol{x}_0)
&:=
\sum_{i=1}^{3}\lambda_i(\boldsymbol{x}_0)
\\ & =
\lim_{T\to\infty}
\frac{1}{T}
\int_0^T
\nabla\cdot\vv\bigl(\Phi^t(\boldsymbol{x}_0)\bigr)\,dt,
\end{split}
\end{equation}
Thus, $\mathrm{\Lambda}_{\Sigma}$ measures the asymptotic mean
phase-space contraction or expansion along a trajectory. In particular,
for $a=0$ the divergence vanishes identically and
$\mathrm{\Lambda}_{\Sigma}=0$. For $a\neq0$, the divergence takes both
signs in phase space, and its long-time average depends on the
trajectory. This quantity will be examined numerically below through
the computed Lyapunov spectrum.

\subsection{Equilibrium set}
\label{subsec:equilibria}

The equilibrium set of system~\eqref{eq:periodic_NH} is determined by
\begin{equation*}
\label{eq:equilibrium_conditions}
\sin y=0,
\quad
\sin x+a\sin y\sin z=0,
\quad
b(1-2\cos y)=0.
\end{equation*}
For $b\neq0$, these conditions admit no solution on $\mathbb T^3$,
since $\sin y=0$ and $1-2\cos y=0$ cannot hold simultaneously.
Therefore,
\begin{equation*}
\label{eq:no_equilibria}
\mathcal E=\varnothing,
\qquad b\neq0.
\end{equation*}
In particular, as in the modified polynomial Nos\'e--Hoover
system~\eqref{eq:modified_NH}, the generic thermostat regime is
equilibrium-free.

For $b=0$, the angular coordinate $z$ is constant along every
trajectory. Consequently, the phase space is foliated by the
invariant tori 
\begin{equation*}
\label{eq:invariant_tori_b0}
\mathcal T_c
=
\left\{
(x,y,z)\in\mathbb T^3:z=c
\right\},
\qquad
c\in\mathbb S^1.
\end{equation*}
On each $\mathcal T_c$, the dynamics is governed by
\begin{equation*}
\label{eq:reduced_b0}
\begin{cases}
\dot x=\sin y,\\
\dot y=-\sin x-a\sin c\,\sin y.
\end{cases}
\end{equation*}
Its equilibria are given by
\[
(x,y)\in\{0,\pi\}\times\{0,\pi\}.
\]
Consequently, the full equilibrium set is
\begin{equation*}
\label{eq:equilibrium_b0}
\mathcal E
=
\bigcup_{(\xi,\eta)\in\{0,\pi\}^2}
\left\{(\xi,\eta,z):z\in\mathbb S^1\right\},
\end{equation*}
and hence consists of four disjoint equilibrium circles.

Thus, $b=0$ is a singular parameter limit from the dynamical point of
view: the equilibrium-free three-dimensional flow for $b\neq0$ is
replaced by a foliation into invariant two-dimensional tori, each
containing four equilibria. This reduction will also play an important
role in the analysis of the integrability properties of the system.

\section{Global dynamics}
\label{sec:numerical_dynamics}

We now investigate the global dynamics of the trigonometric
Nos\'e--Hoover oscillator~\eqref{eq:periodic_NH}. We begin near the
integrable limit $a=0$ and examine how its regular phase-space
organization changes as the coupling parameter $a$ is increased.

Poincar\'e sections are first used to visualize the transition from
regular to increasingly complex dynamics and to compare the
trigonometric system with the classical Nos\'e--Hoover oscillator in
the weakly perturbed regime. We then examine the bifurcation structure
as the parameter $a$ is varied. The observed transition to chaotic
dynamics is quantified through the Lyapunov spectrum, the
Kaplan--Yorke dimension, and the sum of the Lyapunov exponents, which
characterize chaotic instability and phase-space contraction.
Lyapunov maps over the initial-condition plane are subsequently used
to reveal the coexistence and spatial organization of regular,
chaotic, and contracting dynamics for representative parameter
values. Finally, the analysis is extended to the $(a,b)$ parameter
plane using the Lyapunov Integrability Test (LIT), together with
ensemble-based diagnostics obtained from the computed Lyapunov
spectra. To facilitate reproducibility of the numerical results
the implementation of LIT and its extensions used
throughout this analysis is publicly available -- details and access
information are provided in the Code Availability section.

\subsection{Integrable and near-integrable dynamics}
\label{sec:near_integrable_dynamics}

To obtain a first qualitative picture of the dynamics of the trigonometric
Nos\'e--Hoover oscillator, we consider the Poincar\'e sections shown in
Figs.~\ref{fig:integrable}--\ref{fig:NH_a01_comparison}. They were
computed for
\[
b=\frac{1}{2},\qquad a\in\{0,10^{-3},0.1\},
\]
using several distinct initial conditions distributed over the
fundamental domain. The sections were constructed from the intersections
of numerically integrated trajectories with the surface
\[
z=0\pmod{2\pi}.
\]
Both transverse crossing directions are retained. The resulting intersection points are represented in the $(x,y)$
plane, with both angular coordinates restricted to the fundamental
interval $[-\pi,\pi)$ for visualization.

\begin{figure}[t]
\centering
\includegraphics[width=0.8\linewidth]{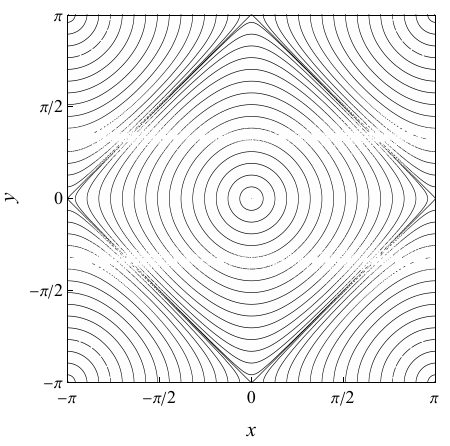}
\caption{
Poincar\'e section of the trigonometric Nos\'e--Hoover system~\eqref{eq:periodic_NH} for
$a=0$ and $b=1/2$. The section exhibits a completely regular
phase-space structure formed by invariant curves, with no visible
chaotic regions. This numerical picture is consistent with the
integrable character of the volume-preserving limit $a=0$.
}
\label{fig:integrable}
\end{figure}

The periodic orbit located at the center of the section belongs to a
family of particular solutions that can be obtained directly from the
equations of motion. Indeed, setting
\begin{equation*}
x=k\pi,
\qquad
y=s\pi,
\qquad
k,s\in\mathbb Z,
\label{eq:points}
\end{equation*}
gives the corresponding family of solutions on the torus
\begin{equation}
\label{eq:particular_periodic}
\vvarphi_{k,s}(t)
=
\left(
k\pi,\,
s\pi,\,
z_0+b\left[1-2(-1)^s\right]t
\right)
\pmod{2\pi}.
\end{equation}
For $s$ even, the motion in the $z$ direction has angular velocity
$-b$, whereas for $s$ odd it has angular velocity $3b$. In particular,
the orbit represented at the center of Fig.~\ref{fig:integrable}
corresponds to $k=s=0$ and is given by
\[
\boldsymbol{\varphi}_{0,0}(t)
=
(0,0,z_0-bt)
\pmod{2\pi}.
\]

This orbit is surrounded by nested invariant curves corresponding to
regular motion. The central region is bounded by a separatrix
 passing through the periodic solutions represented at
$(0,\pm\pi)$ and $(\pm\pi,0)$. The structures visible near the corners
of the square also belong to the family
$\vvarphi_{k,s}(t)$. Since opposite sides of the
fundamental domain are identified, these apparently distinct boundary
and corner structures must be understood as different representations
of the corresponding periodic orbits on the torus.

The completely regular organization of the section, together with the
absence of visible chaotic regions, is consistent with the integrable
character of the $a=0$ system established analytically in
Sec.~\ref{sec:integrability}. Moreover, the particular solutions
$\vvarphi_{k,s}(t)$ will be useful later in the
differential-Galois analysis of Sec.~\ref{subsec:galois_nonintegrability}, where the variational
equations along selected members of this family will be investigated.

The natural question is how the regular phase-space organization of the
integrable limit changes when the parameter $a$ is allowed to depart
slightly from zero. We therefore begin with small positive values of
$a$ and compare the resulting dynamics of the trigonometric model with that
of the classical modified Nos\'e--Hoover oscillator.
\begin{figure*}[t]
\centering
\subfigure[ClassicalNos\'e--Hoover system~\eqref{eq:modified_NH}]{
\includegraphics[width=0.4\linewidth]{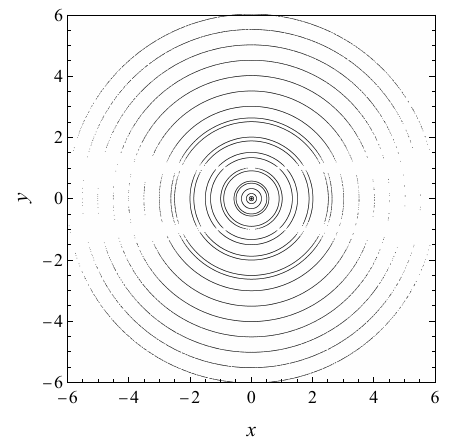}}
\hspace{1cm}
\subfigure[Trigonometric Nos\'e--Hoover system~\eqref{eq:periodic_NH}]{
\includegraphics[width=0.4\linewidth]{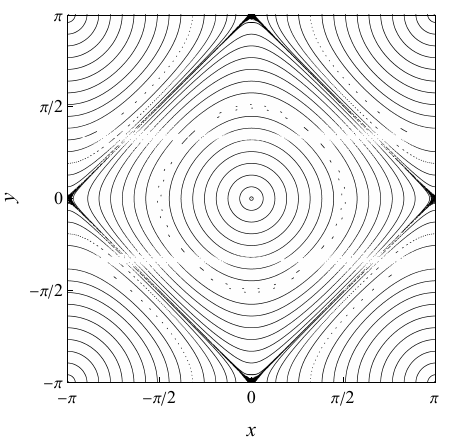}}
\caption{Poincar\'e sections of the classical and trigonometric Nos\'e--Hoover
systems for a weak thermostat coupling, $a=0.001$, with $b=1/2$.
For this small perturbation from the $a=0$ limit, both systems retain a predominantly regular phase-space structure, although thin chaotic layers are already visible in the trigonometric model. The comparison illustrates the close dynamical
correspondence between the classical and trigonometric models in the
weak-coupling regime.}
\label{fig:NH_weak_coupling}
\end{figure*}
Fig.~\ref{fig:NH_weak_coupling} shows the corresponding Poincar\'e
sections for $a=0.001$ and $b=1/2$. At this weak coupling, both systems
remain predominantly regular. In each case, the central periodic orbit
is surrounded by nested invariant curves corresponding to regular
quasiperiodic motion. There is, however, a clear difference in the
geometry of these curves. In the classical system,
Fig.~\ref{fig:NH_weak_coupling}(a), the invariant curves surrounding the
central orbit are nearly circular. This is consistent with the fact
that, for $a\ll1$, the classical Nos\'e--Hoover oscillator represents
only a weak perturbation of the harmonic oscillator.

The geometry of the trigonometric system is already noticeably different,
see Fig.~\ref{fig:NH_weak_coupling}(b). Close to the central periodic
orbit, the invariant curves remain smooth and regular, but they are
more elongated than in the classical case. Further away from the
center, they become increasingly deformed and gradually acquire the
pendulum-like geometry characteristic of the $a=0$ trigonometric system.
This structure is closely related to the nonlinear $(x,y)$ dynamics of
the integrable limit, whose solutions will be expressed in terms of
Jacobi elliptic functions in the next section.

The first traces of chaotic dynamics are also already visible in the
trigonometric model. They occur near the separatrix passing through
$(0,\pm\pi)$ and $(\pm\pi,0)$, where the smooth invariant curves
of the integrable limit begin to break up and are replaced by thin
layers of scattered points. This is precisely the region in which the
invariant curves of the integrable limit are expected to be most
sensitive to perturbations. Thus, even though the perturbation
$a=0.001$ is very small, the trigonometric model already develops
local chaotic layers, whereas no comparable chaotic region is visible
in the classical section at the same parameter values.

The difference between the two systems becomes much more pronounced
when the coupling is increased to $a=0.1$, while keeping $b=1/2$.
The corresponding sections are shown in
Fig.~\ref{fig:NH_a01_comparison}. The central parts of both sections
still exhibit a similar regular organization. In particular, a
$1\!:\!4$ resonant structure appears in both models, manifested by a
chain of four islands surrounding the central region.

\begin{figure*}[t]
\centering
\subfigure[Classical Nos\'e--Hoover system~\eqref{eq:modified_NH}]{
\includegraphics[width=0.4\linewidth]{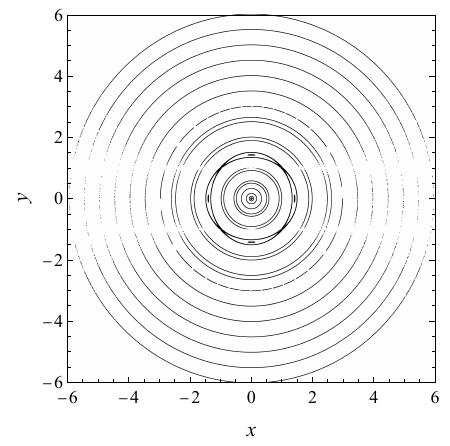}}
\hspace{1cm}
\subfigure[Trigonometric Nos\'e--Hoover system~\eqref{eq:periodic_NH}]{
\includegraphics[width=0.4\linewidth]{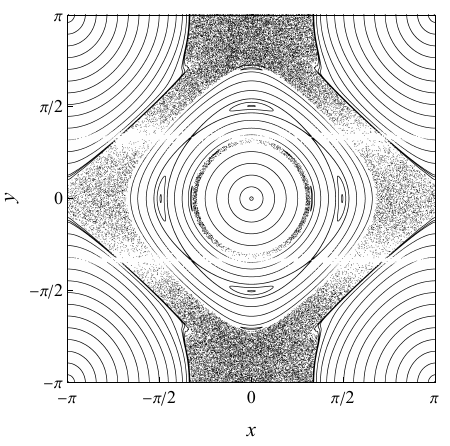}}
\caption{Poincar\'e sections of the classical and trigonometric Nos\'e--Hoover
systems for $a=0.1$ and $b=1/2$. Both systems exhibit a $1\!:\!4$
resonant structure, while the trigonometric system shows highly chaotic dynamics.}
\label{fig:NH_a01_comparison}
\end{figure*}

Away from this regular core, however, the dynamics of the trigonometric
system changes substantially. In Fig.~\ref{fig:NH_a01_comparison}(b),
large regions of the section are occupied by irregularly distributed
points, indicating chaotic motion. The invariant curves that confined
the dynamics in the near-integrable regime are progressively destroyed,
allowing trajectories to explore much larger portions of the accessible
phase space. In contrast, the classical Nos\'e--Hoover oscillator,
Fig.~\ref{fig:NH_a01_comparison}(a), retains a predominantly regular
structure for the same parameter values.

These observations show that the trigonometric modification is considerably
more sensitive to the perturbation parameter $a$. Although both models
remain close to their common regular limit near the central periodic
orbit, the surrounding invariant structures of the trigonometric oscillator
are destroyed much more rapidly. The extensive chaotic regions visible
already for $a=0.1$ therefore provide strong numerical evidence for the
loss of integrability in the trigonometric model.

\subsection{From conservative-like to dissipative-like dynamics}
\label{sec:conservative_dissipative}
\begin{figure*}[t]
\centering

\subfigure[Periodic]{
\includegraphics[width=0.31\linewidth]{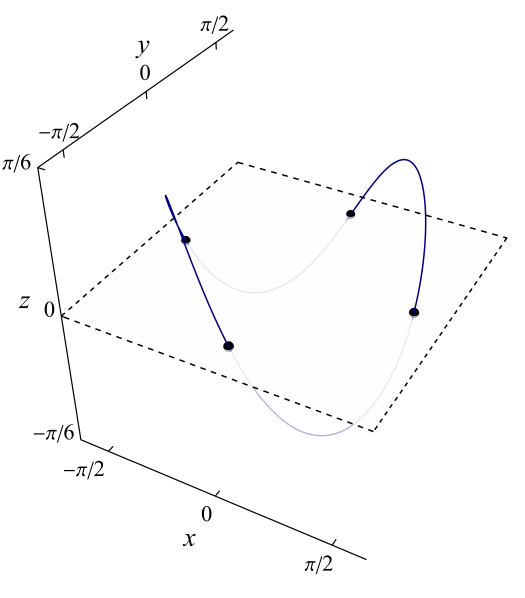}
}
\hfill
\subfigure[Quasi-periodic]{
\includegraphics[width=0.31\linewidth]{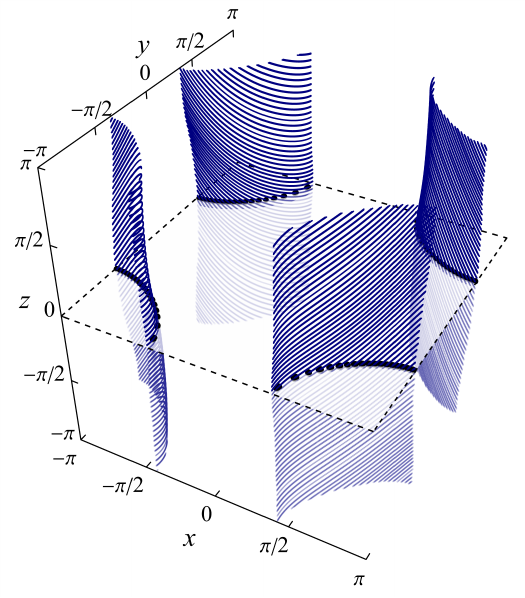}
}
\hfill
\subfigure[Chaotic]{
\includegraphics[width=0.31\linewidth]{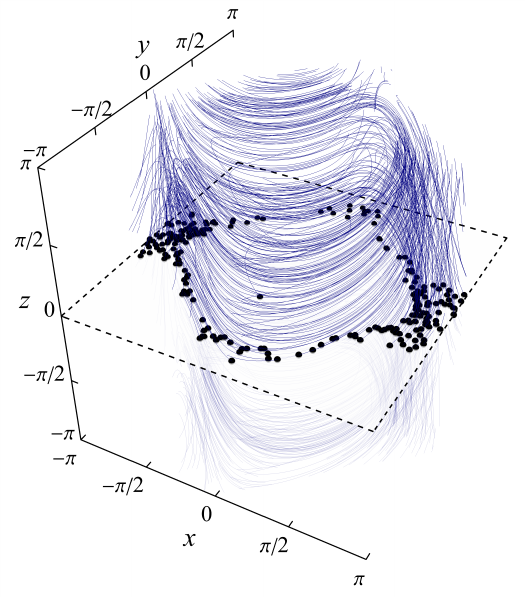}
}

\caption{(Color online)  Periodic, quasi-periodic, and chaotic phase trajectories of the
trigonometric Nos\'e--Hoover system for $a=0.1$ and $b=1/2$. The initial
conditions were selected from the Poincar\'e sections shown in
Fig.~\ref{fig:NH_a01_comparison}: (a) $\vx_0=(0,\pi/2,0)$,
(b) $\vx_0=(3\pi/4,\pi/2,0)$, and
(c) $\vx_0=(3\pi/4,0,0)$. Points of intersection with the
Poincar\'e section $z=0 \pmod{2\pi}$ are indicated by black dots.}
\label{fig:periodic_quasiperiodic_chaotic}
\end{figure*}
\begin{figure}[t]
\centering
\includegraphics[width=0.8\linewidth]{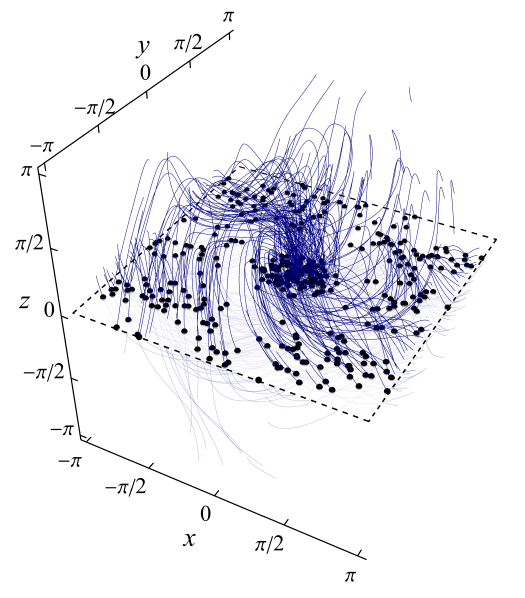}
\caption{(Color online)  Representative three-dimensional chaotic trajectory of the
trigonometric Nos\'e--Hoover oscillator for $a=2$ and $b=1/2$, with initial condition $\vx_0=(10^{-3},10^{-3},0)$, after
discarding the transient. Black dots indicate intersections with the
Poincar\'e section $z=0\pmod{2\pi}$. The chaotic motion extends over a
large part of the accessible phase space.}
\label{fig:trajectory_a3}
\end{figure}
The Poincar\'e sections discussed above show that, for small values of
$a$, regular and chaotic motion coexist in the phase space. Smooth
invariant curves and resonance islands are accompanied by localized
chaotic layers. To complement this two-dimensional picture, we now
examine representative trajectories in the full three-dimensional
phase space.

Fig.~\ref{fig:periodic_quasiperiodic_chaotic} shows three trajectories
of the trigonometric Nos\'e--Hoover oscillator for $a=0.1$ and $b=1/2$,
selected from qualitatively different regions of the Poincar\'e section
in Fig.~\ref{fig:NH_a01_comparison}(b). They represent periodic,
quasiperiodic, and chaotic motion, respectively. The periodic trajectory
forms a closed curve, whereas the quasiperiodic trajectory remains
confined to a smooth torus-like structure. The chaotic trajectory has a
more complicated geometry but remains restricted to a relatively
localized region of the phase space. The black points mark successive
intersections with the section $z=0\pmod{2\pi}$ and establish a direct
correspondence with the Poincar\'e geometry discussed above.

The character of the dynamics changes considerably as the coupling
parameter $a$ is increased. A representative trajectory for $a=2$ and
$b=1/2$, with initial condition 
chosen as
\begin{equation}
\label{eq:ini}
x_0=10^{-3},\qquad
y_0=10^{-3},\qquad
z_0=0,
\end{equation}
 is shown in
Fig.~\ref{fig:trajectory_a3}. After the initial transient has been
discarded, the trajectory develops a strongly folded and geometrically
irregular structure. Its intersections with the Poincar\'e section
spread over a large fraction of the accessible $(x,y)$ domain, in
contrast to the localized chaotic layers observed for small $a$.

The three-dimensional trajectories and Poincar\'e sections alone,
however, do not establish whether the observed dynamics is accompanied
by phase-space contraction. We therefore use the term
\emph{dissipative-like} only to describe the observed geometry. The
dissipative character of the dynamics will be quantified below through
the Lyapunov spectrum, in particular through the sum of the Lyapunov
exponents.

Before turning to these quantitative diagnostics, we examine how the
transition from regular to chaotic dynamics develops When $a$ is varied.
For this purpose, we construct a bifurcation diagram, which reveals the
onset of chaos as well as regular windows embedded in the chaotic
regime. One of these windows exhibits a clear period-doubling cascade,
analyzed in detail below.

\subsection{Bifurcation diagram and period-doubling route to chaos}
\label{sec:bifurcation}
\begin{figure*}[t]
\centering
\subfigure[Global view]{
\includegraphics[width=0.45\linewidth]
{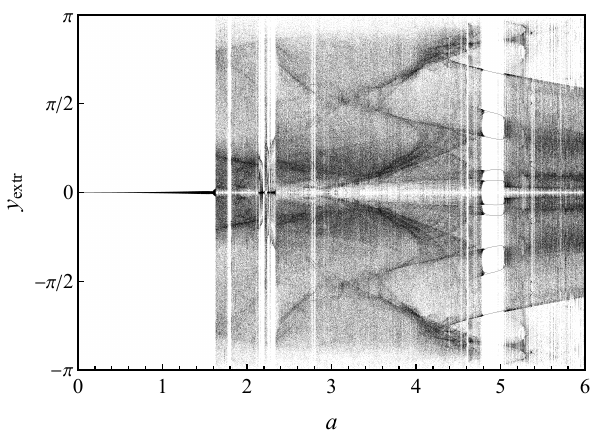}}
\hspace{1cm}
\subfigure[Magnification]{
\includegraphics[width=0.452\linewidth]
{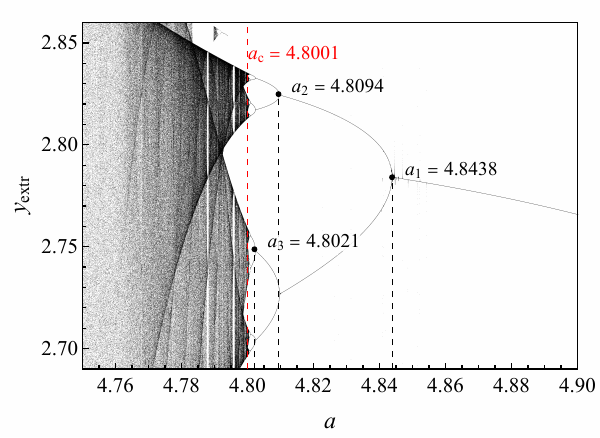}}
\caption{Bifurcation diagram of the trigonometric Nos\'e--Hoover system
versus $a$: (a) global view; (b) magnification of a regular window
showing a period-doubling route to chaos, with the successive
bifurcation values $a_1$, $a_2$, and $a_3$, and the estimated
accumulation point $a_c\simeq4.8001$.}
\label{fig:biff}
\end{figure*}

To follow the transition from regular to chaotic dynamics more
systematically, we construct a bifurcation diagram with $a$ as the
control parameter and $b$ fixed. The initial condition is chosen
according to~\eqref{eq:ini}.

After discarding the initial transient, we record the extremal values
$y_{\mathrm{extr}}$ satisfying
$
\dot y(t)=0
$
and plot them as a function of $a$.

The global bifurcation diagram is shown in
Fig.~\ref{fig:biff}(a). For small values of $a$, the asymptotic motion
is close to periodic and the extrema form a tight smooth branch. This regime
corresponds to the dynamics around the stable periodic solution $\boldsymbol{\varphi}_{0,0}(t)$ located
at the center of the Poincar\'e sections discussed above. When $a$ is
increased, this regular branch persists up to approximately
$a\simeq1.6$. Beyond this value, the bifurcation structure changes
    rapidly: the single periodic branch is replaced by a broad irregular
band extending over almost the entire accessible range
$y_{\mathrm{extr}}\in[-\pi,\pi)$. No sequence of successive period
doublings is visible at this first transition. Instead, the diagram
indicates a rapid transition from periodic to chaotic dynamics, which
will be confirmed below by the Lyapunov spectrum.

The chaotic regime is not uniform over the whole parameter interval.
For larger values of $a$, several regular windows appear within the
chaotic background. In these windows, the broad chaotic bands collapse
into a finite number of branches corresponding to periodic motion.
Thus, the global diagram reveals an alternating sequence of chaotic
regions and regular windows rather than a simple monotonic transition
from periodic to chaotic dynamics.

A particularly clear period-doubling cascade is visible within one of
these regular windows, as shown in the magnification in
Fig.~\ref{fig:biff}(b). When $a$ is decreased, a stable periodic branch
undergoes successive period-doubling bifurcations at approximately
\[
a_1=4.8438,\qquad
a_2=4.8094,\qquad
a_3=4.8021.
\]
The decreasing intervals between successive bifurcations suggest
convergence toward a finite accumulation point~$a_c$.

For a period-doubling cascade exhibiting asymptotic Feigenbaum
scaling, the ratios of the intervals between successive bifurcation
values approach the universal Feigenbaum constant
\cite{Feigenbaum1978,Feigenbaum1979},
\begin{equation}
\label{eq:feigenbaum_constant}
\delta
=
\lim_{n\to\infty}
\frac{a_{n-1}-a_n}{a_n-a_{n+1}}
=
4.6692016\ldots .
\end{equation}
For the three bifurcation values determined numerically above, the
corresponding ratio is
\[
\frac{a_1-a_2}{a_2-a_3}
=
\frac{4.8438-4.8094}{4.8094-4.8021}
\simeq 4.7123,
\]
which is close to the universal Feigenbaum value.

Since the cascade develops toward decreasing values of $a$,
the bifurcation sequence satisfies
\[
a_1>a_2>a_3>\cdots>a_c.
\]
Using the asymptotic Feigenbaum scaling, the accumulation point can
be estimated as
\begin{equation}
\label{eq:feigenbaum_accumulation}
a_c
\simeq
a_n-\frac{a_{n-1}-a_n}{\delta-1}.
\end{equation}
Using the last two numerically determined bifurcation values,
$a_2$ and $a_3$, gives
\[
a_c
\simeq
a_3-\frac{a_2-a_3}{\delta-1}
\simeq 4.8001.
\]
This critical value is marked in Fig.~\ref{fig:biff}(b) and agrees very
well with the accumulation of the bifurcation branches observed directly
in the diagram. As $a$ decreases toward $a_c$, successive period
doublings occur on progressively smaller parameter scales and eventually
give way to chaotic dynamics. The magnified diagram also reveals
smaller-scale branching patterns resembling the main bifurcation
structure, indicating a characteristic self-similar organization near
the transition to chaos. These observations provide further numerical
evidence that the transition follows the classical period-doubling
route to chaos.

The bifurcation diagram therefore provides a clear picture of how
periodic dynamics loses stability and gives way to chaos as the control
parameter is varied. In the next subsection, we quantify this transition
using the Lyapunov spectrum and compare the resulting instability
thresholds with the bifurcation structure found here.

\subsection{Lyapunov spectrum, phase-space contraction, and Kaplan--Yorke dimension}
\label{sec:lyapunov_analysis}

The bifurcation analysis presented above provides a geometrical picture
of the transition from regular to chaotic dynamics. To quantify this
transition, we now consider the complete Lyapunov spectrum. According to chaos theory, the Lyapunov
exponents measure the average exponential rates of growth or decay of
infinitesimal perturbations along a trajectory in phase space~\cite{Pikovsky:16::}. In
particular, a positive largest Lyapunov exponent provides a standard
numerical signature of chaotic dynamics.

Numerous algorithms for computing Lyapunov exponents have been developed
and refined over the years
\cite{Wolf:85::,Rangarajan:98::,Carbonell:02::,Lu:05::,Chen:06::,
Stachowiak:11::,Balcerzak:18::,Balcerzak:20::,Balcerzak:20b::}.
In the present work, we use the standard method of Benettin
\emph{et al.}~\cite{Benettin:80::}, based on the simultaneous integration
of the equations of motion and the corresponding variational equations,
combined with periodic Gram--Schmidt re-orthonormalization of the tangent
vectors. A convenient Mathematica implementation of this approach was
proposed by Sandri~\cite{Sandri:96::}.

Let $\Delta t$ denote the re-orthonormalization interval and let
$r_i^{(k)}$ be the norm of the $i$-th orthogonalized tangent vector at
the $k$-th re-orthonormalization step. After $N$ successive steps, the
Lyapunov exponents are computed as
\begin{equation}
\label{eq:benettin}
\lambda_i 
=
\frac{1}{N\Delta t}
\sum_{k=1}^{N}\ln r_i^{(k)},
\quad
t=N\Delta t,
\quad
i=1,\ldots,n.
\end{equation}
Periodic re-orthonormalization prevents the tangent vectors from aligning
with the most unstable direction and thus enables the computation of the
complete Lyapunov spectrum.

The equations of motion and the associated variational equations are
integrated using the built-in Mathematica \texttt{NDSolve} routine with
an explicit Runge--Kutta scheme and adaptive error control. A maximum
integration step size of $0.01$ is imposed throughout the computations.
The re-orthonormalization interval is fixed at $\Delta t=5$, and
$N=3000$ re-orthonormalization steps are performed, corresponding to the
total integration time
\[
t=N\Delta t=15000.
\]
These settings were selected after extensive numerical testing with
different integration times and re-orthonormalization intervals. The
choice $\Delta t=5$ and $N=3000$ was found to provide stable and accurate
Lyapunov spectra at a reasonable computational cost and was therefore
adopted throughout the numerical analysis.
Throughout this subsection, the Lyapunov exponents are ordered as
\[
\lambda_1\geq\lambda_2\geq\lambda_3.
\]
For a three-dimensional autonomous flow, one Lyapunov exponent is
expected to vanish along a bounded nonstationary trajectory, reflecting
the neutral direction tangent to the flow.

\begin{figure}[t]
\centering
\includegraphics[width=0.95\linewidth]{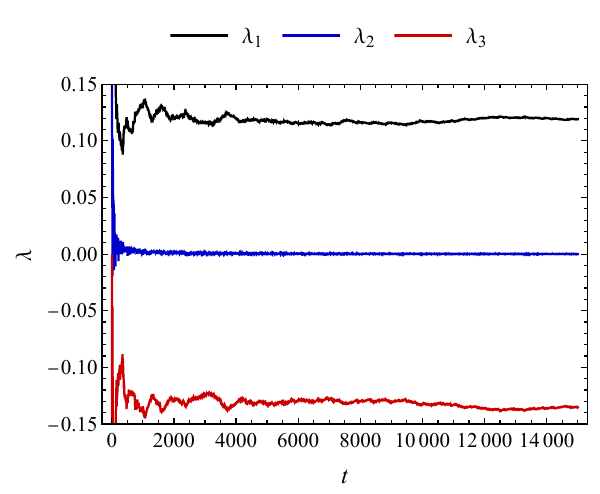}
\caption{(Color online)  Evolution of the complete Lyapunov spectrum
$\lambda_1\geq\lambda_2\geq\lambda_3$ along the chaotic trajectory
shown in Fig.~\ref{fig:trajectory_a3}, computed for $a=2$ and $b=1/2$,
with convergence toward $\lambda_1>0$, $\lambda_2\simeq0$, and
$\lambda_3<0$.}
\label{fig:lyapunov_spectrum_t}
\end{figure}\begin{figure*}[t]
\centering
\subfigure[Global view]{
\includegraphics[width=0.45\linewidth]{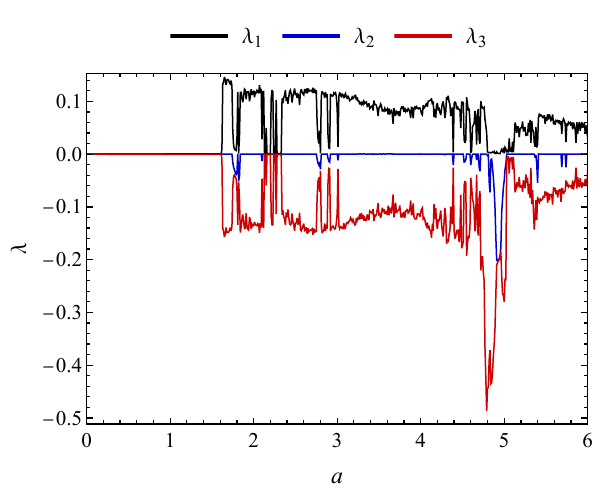}}\hspace{1cm}
\subfigure[Magnification]{
\includegraphics[width=0.45\linewidth]{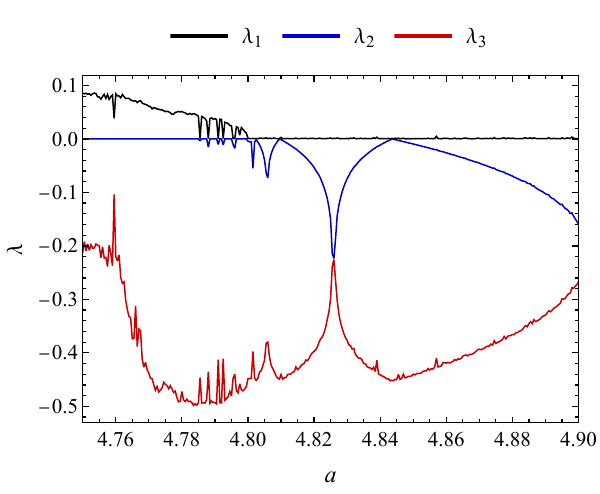}}
\caption{(Color online)  Lyapunov spectrum of the trigonometric Nos\'e–Hoover system
versus $a$ for fixed $b=1/2$: (a) global view; (b) magnification of the
regular window associated with the period-doubling route to chaos shown
in Fig.~\ref{fig:biff}. The curves represent the ordered
Lyapunov exponents $\lambda_1$, $\lambda_2$, and $\lambda_3$.}
\label{fig:lyapunov_spectrum_a}
\end{figure*}

As a representative example, Fig.~\ref{fig:lyapunov_spectrum_t}
shows the evolution of the complete Lyapunov spectrum computed along
the same trajectory as that shown in Fig.~\ref{fig:trajectory_a3},
for $a=2$, $b=1/2$, and the initial condition~\eqref{eq:ini}.
After an initial transient, the exponents gradually stabilize, reaching
the values
\[
\lambda_1\simeq0.1191,\qquad
\lambda_2\simeq3.8\times10^{-5},\qquad
\lambda_3\simeq-0.1352.
\]
Thus, the resulting spectrum has the characteristic form
$\lambda_1>0$, $\lambda_2\simeq0$, $\lambda_3<0$, providing clear
numerical evidence of chaotic dynamics for this trajectory.

Having verified the convergence of the Lyapunov exponents for a
representative chaotic trajectory, we now examine how the complete
spectrum changes with the control parameter $a$.
Fig.~\ref{fig:lyapunov_spectrum_a} shows the complete Lyapunov
spectrum computed for the initial condition~\eqref{eq:ini} as a
function of $a$, with $b=1/2$ fixed. The behavior of the spectrum
agrees remarkably well with the bifurcation diagram shown in
Fig.~\ref{fig:biff}. For small values of $a$, all three Lyapunov
exponents remain close to zero, consistently with the regular,
conservative-like dynamics observed in the Poincar\'e sections and
in the bifurcation diagram. Near $a\simeq1.6$, however, the spectrum
changes rapidly. The largest Lyapunov exponent becomes positive and
reaches values of approximately $\lambda_1\simeq0.15$, indicating the
onset of chaotic dynamics. At the same time, $\lambda_3$ becomes
negative, while $\lambda_2$ remains close to zero.
\begin{figure*}[t]
\centering
\subfigure[Kaplan--Yorke dimension]{
\includegraphics[width=0.44\linewidth]{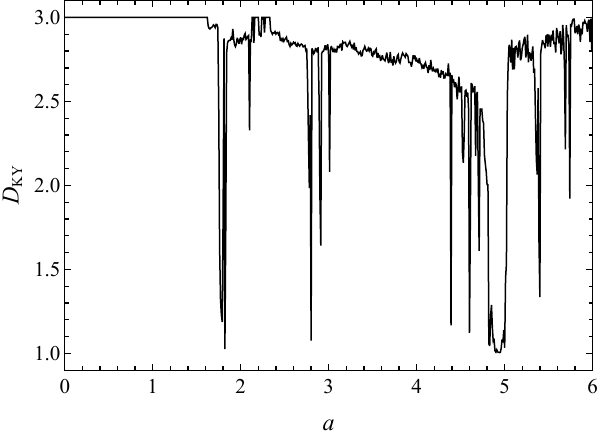}}
\hspace{1cm}
\subfigure[Sum of the Lyapunov exponents]{
\includegraphics[width=0.45\linewidth]{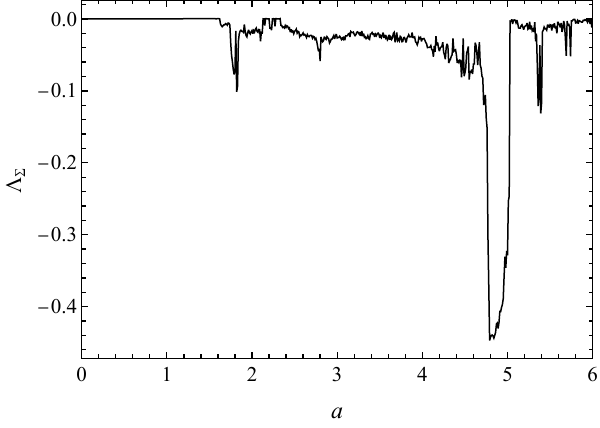}}
\caption{Dynamical characteristics derived from the Lyapunov spectrum
of the trigonometric Nos\'e--Hoover system as functions of the parameter $a$
for fixed $b=1/2$: (a) Kaplan--Yorke dimension $D_{\mathrm{KY}}$;
(b) sum of the Lyapunov exponents
$\mathrm{\Lambda}_{\Sigma}=\lambda_1+\lambda_2+\lambda_3$.}
\label{fig:KY_LambdaSigma}
\end{figure*}

The magnification in Fig.~\ref{fig:lyapunov_spectrum_a}(b) resolves
the regular window associated with the period-doubling cascade shown
in Fig.~\ref{fig:biff}(b). When $a$ is decreased through this window,
the Lyapunov spectrum undergoes the characteristic changes associated
with successive period doublings: the largest exponent approaches
zero at the bifurcation points, while the remaining transverse
exponent stays negative. Beyond the accumulation region, the largest
Lyapunov exponent becomes positive, consistently with the transition
to chaos observed in the bifurcation diagram.

For larger values of $a$, the Lyapunov spectrum exhibits numerous
narrow intervals in which $\lambda_1$ approaches zero. These intervals
correspond closely to the regular windows visible in the global
bifurcation diagram. In particular, the pronounced regular window near
$a\simeq4.8$, in which the period-doubling cascade discussed above
takes place, is clearly reflected in the Lyapunov spectrum. Thus, the
Lyapunov analysis independently confirms the alternation between
chaotic and periodic regimes revealed by the bifurcation structure.

Having determined the complete Lyapunov spectrum, we can also evaluate
the Kaplan--Yorke dimension, also referred to as the Lyapunov
dimension~\cite{Kaplan:79::}. It is defined as
\begin{equation}
\label{eq:kaplan_yorke}
D_{\mathrm{KY}}
=
j+
\frac{1}{|\lambda_{j+1}|}\displaystyle\sum_{i=1}^{j}\lambda_i,
\end{equation}
where $j$ is the largest integer such that
$
\sum_{i=1}^{j}\lambda_i\geq0.
$

Fig.~\ref{fig:KY_LambdaSigma}(a) shows $D_{\mathrm{KY}}$ as a
function of $a$. For $a\lesssim1.6$, the Kaplan--Yorke dimension
computed from the Lyapunov spectrum remains close to
\[
D_{\mathrm{KY}}=3.
\]
This reflects the nearly vanishing sum of the computed Lyapunov
exponents and is consistent with the predominantly regular dynamics
observed in the Poincar\'e sections. Immediately after the transition
near $a\simeq1.6$, $D_{\mathrm{KY}}$ decreases below three as chaotic
dynamics and phase-space contraction develop. For larger values of
$a$, the dimension varies considerably and assumes noninteger values
over broad parameter intervals, reflecting changes in the effective
dimensionality of the chaotic dynamics.

The regular window containing the period-doubling cascade is
particularly well resolved in this representation. Near
$a\simeq4.8$, the Kaplan--Yorke dimension approaches
\[
D_{\mathrm{KY}}\simeq1,
\]
consistently with the presence of a stable attracting periodic orbit.
As chaotic dynamics reappears beyond the regular window,
$D_{\mathrm{KY}}$ increases again and assumes noninteger values
between one and three. The Kaplan--Yorke dimension therefore provides
an additional quantitative characterization of the transitions
between periodic and chaotic dynamics already observed in the
bifurcation diagram and in the Lyapunov spectrum.

To complement the largest Lyapunov exponent, we also consider
$\mathrm{\Lambda}_{\Sigma}$ defined in
Eq.~\eqref{eq:Lyapunov_sum_divergence}. This quantity characterizes the
asymptotic mean phase-space contraction or expansion along a trajectory.
Thus, $\mathrm{\Lambda}_{\Sigma}\simeq0$ indicates no appreciable
average phase-space contraction, whereas
$\mathrm{\Lambda}_{\Sigma}<0$ indicates net contraction.

As shown in Eq.~\eqref{eq:divergence_periodic_NH}, the divergence of
the vector field is not sign-definite. Therefore, the sign of the
instantaneous divergence alone does not determine the asymptotic
behavior; instead, the long-time average along a trajectory determines
$\mathrm{\Lambda}_{\Sigma}$ through
Eq.~\eqref{eq:Lyapunov_sum_divergence}.

The dependence of $\mathrm{\Lambda}_{\Sigma}$ on $a$ is shown in
Fig.~\ref{fig:KY_LambdaSigma}(b). For $a\lesssim1.6$, the sum remains
very close to zero. This is consistent with the conservative-like
geometry observed in the Poincar\'e sections for small $a$, where
regular invariant structures dominate and no appreciable average
phase-space contraction is detected.

The situation changes     rapidly near $a\simeq1.6$. In the same
parameter region in which the largest Lyapunov exponent becomes
positive, $\mathrm{\Lambda}_{\Sigma}$ becomes negative, showing that the onset
of chaos is accompanied by average phase-space contraction. For larger
values of $a$, $\mathrm{\Lambda}_{\Sigma}$ remains predominantly negative,
although its magnitude varies strongly and approaches zero within some
regular parameter intervals.

The strongest contraction is observed in the vicinity of the regular
window near $a\simeq4.8$, where $\mathrm{\Lambda}_{\Sigma}$ reaches its most
negative values. This is precisely the parameter region in which the
period-doubling route to chaos was identified in
Fig.~\ref{fig:biff}(b). The combination of the bifurcation diagram,
the Lyapunov spectrum, the Kaplan--Yorke dimension, and
$\mathrm{\Lambda}_{\Sigma}$ therefore provides a consistent picture of the
transition from conservative-like regular dynamics to chaotic,
phase-space-contracting behavior.

\subsection{Lyapunov maps and phase-space contraction}
\label{sec:lyapunov_maps}

The analysis presented above characterizes the dynamics along a single
trajectory as the control parameter $a$ is varied. A complementary
picture can be obtained by fixing the parameters and computing the
Lyapunov spectrum over a two-dimensional set of initial conditions.
This allows us to determine how regularity, chaos, and phase-space
contraction are distributed throughout the phase space and how this
distribution changes with the coupling parameter $a$.

For this purpose, we construct Lyapunov maps on the same
two-dimensional domain used for the Poincar\'e sections. The initial
conditions $(x_0,y_0)$ are sampled on a uniform grid in
\[
(x_0,y_0)\in[-\pi,\pi)\times[-\pi,\pi),\qquad \text{with}\qquad z_0=0.
\]
 For each
initial condition, the complete Lyapunov spectrum
$\{\lambda_1,\lambda_2,\lambda_3\}$ is computed after discarding the
initial transient. In this way, every point of the initial-condition
plane is assigned a quantitative measure of the corresponding
asymptotic dynamics.

We first consider maps of the largest Lyapunov exponent
$\lambda_1(x_0,y_0)$. Positive values of $\lambda_1$ are associated
with chaotic motion, whereas values close to zero correspond to regular
dynamics. The resulting maps can therefore be compared directly with
the corresponding Poincar\'e sections: invariant curves and resonance
islands correspond to regions where $\lambda_1$ remains close to zero,
while chaotic regions are characterized by positive values of
$\lambda_1$.

To improve the contrast of the Lyapunov maps, the remaining values are
displayed using the nonlinear color transformation
\begin{equation}
\label{eq:lyapunov_map_scaling}
\mathcal{S}(\lambda)
:=
\left(
\frac{\lambda-\lambda_{\min}}
{\lambda_{\max}-\lambda_{\min}}
\right)^{p},
\qquad p>0,
\end{equation}
where $\lambda_{\min}$ and $\lambda_{\max}$ denote the lower and upper
limits of the displayed range. The case $p=1$ corresponds to a linear
color scale. For $p>1$, the transformation compresses the lower part
of the normalized range and enhances the visual separation of larger
Lyapunov exponents, making structures within strongly chaotic regions
more pronounced. In contrast, for $0<p<1$, the transformation expands
the lower part of the range and is therefore useful for revealing
variations among small positive Lyapunov exponents associated with
weakly chaotic dynamics. The parameter $p$ thus controls which part
of the Lyapunov range is visually emphasized. This transformation
affects only the visualization and does not modify the computed
Lyapunov exponents or their numerical classification.

Figs.~\ref{fig:map1}--\ref{fig:map2} show the  representative
Poincar\'e sections with the corresponding maps of the largest Lyapunov
exponent for fixed $b=1/2$ and increasing values of $a$. The clear
correspondence between the two representations is immediately visible.
The Poincar\'e sections provide the geometrical organization of the
dynamics, whereas the Lyapunov maps quantify the stability of the
trajectories associated with different initial conditions.
\begin{figure*}[htp]
\centering
\subfigure[$a=1/2, \quad b=1/2$]{
\includegraphics[width=0.42\linewidth]{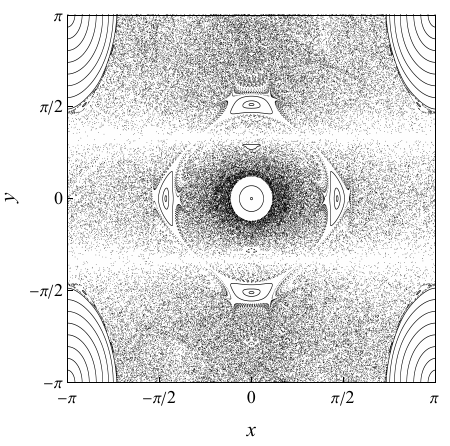}\hspace{0.5cm}
\includegraphics[width=0.55\linewidth]{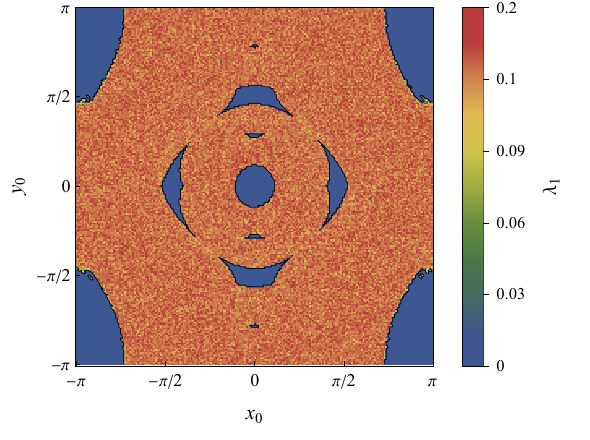}}
\subfigure[$a=2, \quad b=1/2$]{
\includegraphics[width=0.42\linewidth]{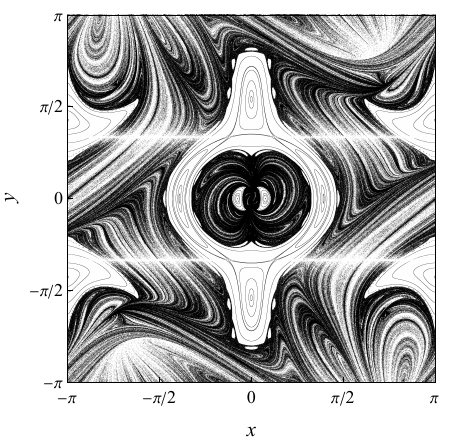}\hspace{0.5cm}
\includegraphics[width=0.55\linewidth]{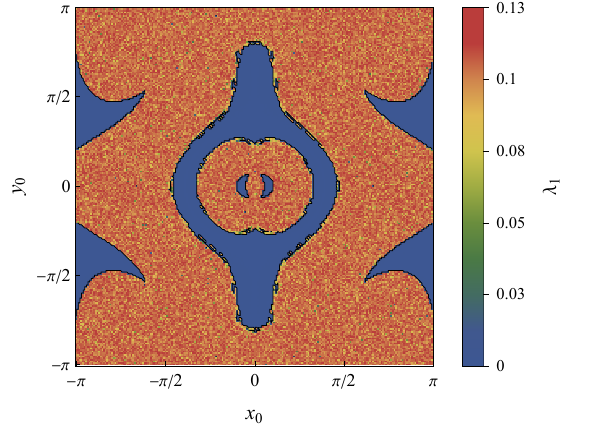}}
\caption{(Color online)  Poincar\'e sections (left) and the corresponding maps of the
largestLyapunov exponent $\lambda_1$ (right) for the trigonometric
Nos\'e--Hoover oscillator with $b=1/2$ and selected values of the
parameters. The color scale is defined by Eq.~\eqref{eq:lyapunov_map_scaling}
with $p=1$. Blue regions correspond to $\lambda_1$ below the adopted
chaos threshold, whereas positive values of $\lambda_1$ indicate
chaotic dynamics. \label{fig:map1}}
\end{figure*}
\begin{figure*}[htp]
\centering
\subfigure[$a=3, \quad b=1/2$]{
\includegraphics[width=0.42\linewidth]{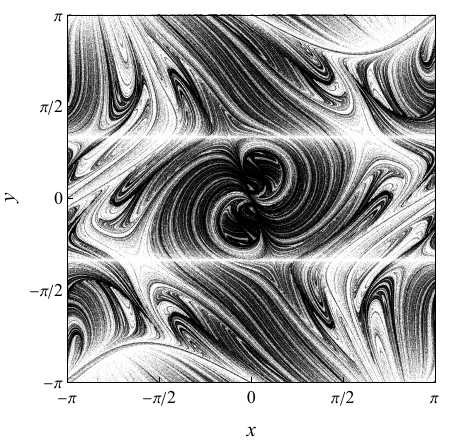}\hspace{0.5cm}
\includegraphics[width=0.55\linewidth]{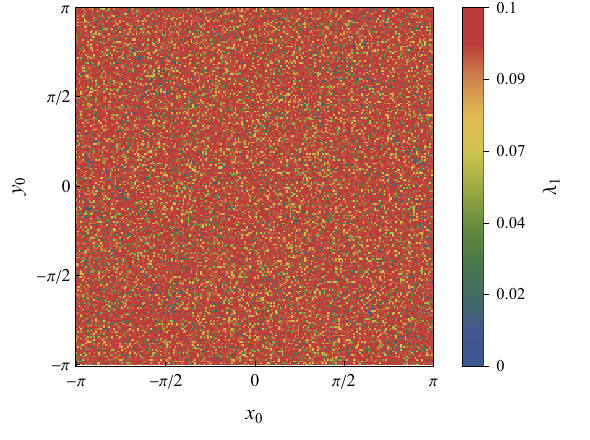}}
\subfigure[$a=6, \quad b=1/2$]{
\includegraphics[width=0.42\linewidth]{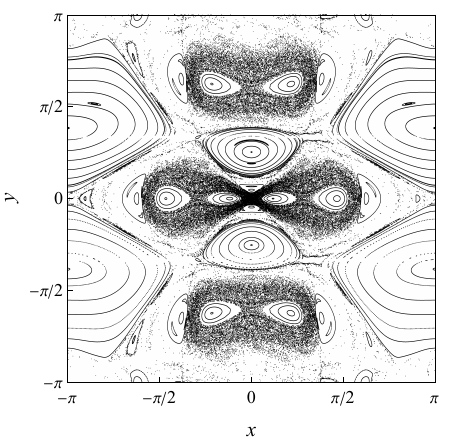}\hspace{0.5cm}
\includegraphics[width=0.55\linewidth]{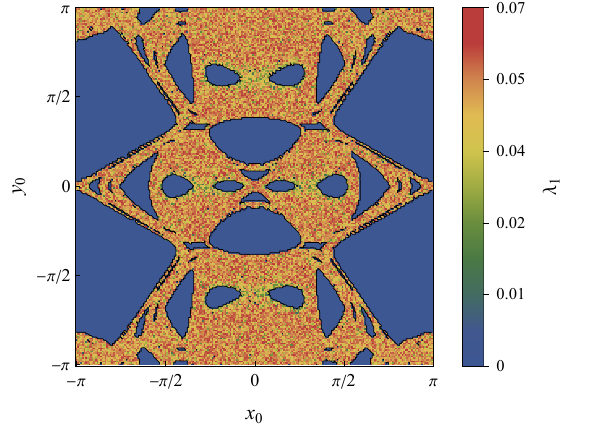}}
\caption{(Color online)  Poincar\'e sections (left) and the corresponding maps of the
largestLyapunov exponent $\lambda_1$ (right) for the trigonometric
Nos\'e--Hoover oscillator with $b=1/2$ and selected values of the
parameters. The color scale is defined by
Eq.~\eqref{eq:lyapunov_map_scaling} with $p=1$. Blue regions correspond
to $\lambda_1$ below the adopted chaos threshold, whereas positive
values of $\lambda_1$ indicate chaotic dynamics.\label{fig:map2}}
\end{figure*}
For $a=1/2$, the chaotic component of the phase space is considerably
larger than for $a=0.1$ shown in Fig.~\ref{fig:NH_a01_comparison}.
Most of the accessible region is now occupied by chaotic trajectories,
while regular motion survives mainly in a small central region and near
the outer boundaries. This picture is consistent with the Lyapunov map,
where positive values of $\lambda_1$ cover most of the phase space and
regions with $\lambda_1\simeq0$ are restricted to these remaining
regular domains. Despite the substantial increase in chaotic motion,
the phase space still retains a conservative-like structure, with
regular islands embedded in and surrounded by chaotic regions.

When $a$ is increased further to $a=2$, the Poincar\'e section becomes
increasingly dominated by chaotic dynamics. The remaining regular
structures are substantially reduced, while irregular trajectories
occupy most of the accessible region. The chaotic part of the section
develops intricate, strongly folded and filamentary patterns, indicating
a further loss of the conservative-like organization observed for smaller
values of $a$.

It is important to emphasize that the dense, folded patterns visible
in the chaotic regions are generated by a single initial condition.
A single chaotic trajectory repeatedly intersects the Poincar\'e plane
at different locations and, after sufficiently long integration,
produces the extended irregular structures observed in the section.
This is fundamentally different from the regular part of the section,
where each family of invariant curves is obtained from a separate
initial condition and represents a trajectory confined to its own
invariant torus.

The corresponding Lyapunov map reproduces this dynamical organization
remarkably well. The regular invariant tori are represented by extended
blue regions with $\lambda_1\simeq0$, whereas the folded and irregular
structures observed in the Poincar\'e section correspond to regions
with positive values of $\lambda_1$. In particular, the large chaotic
domains are characterized by relatively high values of the largest
Lyapunov exponent, confirming their strongly chaotic character.

For $a=3$, the phase-space organization changes qualitatively. The
regular structures that were still clearly visible for $a=2$ have
disappeared, and the Poincar\'e section is now dominated entirely by
chaotic dynamics. The smooth invariant curves and regular islands are
replaced by strongly distorted and folded patterns extending throughout
the section plane. In particular, a pronounced attractor-like structure
with a complex folded geometry develops around the central region of
the section.

The corresponding Lyapunov map provides a clear quantitative
confirmation of this picture. The largest Lyapunov exponent is positive
over essentially the entire $(x_0,y_0)$ plane: initial conditions
sampled on the dense grid lead to positive values of $\lambda_1$,
with values reaching approximately $\lambda_1\simeq0.1$. In contrast
to the case $a=2$, no extended regions with $\lambda_1\simeq0$ remain
visible. Thus, the disappearance of the regular invariant structures
from the Poincar\'e section is accompanied by the spreading of chaotic
dynamics over essentially the whole sampled set of initial conditions.
This result is fully consistent with the parameter-dependent Lyapunov
spectrum discussed above, which identifies this parameter range as a
strongly chaotic regime.

Interestingly, this trend does not continue monotonically When $a$ is
increased. For instance, for larger $a=6$, a rich organization of regular domains reappears.
The Poincar\'e section again contains numerous invariant curves and
island-like structures embedded in chaotic regions, and these structures
are accurately reproduced by the corresponding domains with
$\lambda_1\simeq0$ in the Lyapunov map. Thus, increasing the coupling
parameter does not simply produce progressively stronger chaos.
Instead, the system undergoes a non-monotonic reorganization in which
strongly chaotic regimes alternate with parameter ranges containing
substantial regular structures.

\begin{figure*}[htp]
\centering
\subfigure[$a=1/2, \quad b=1/2$]{
\includegraphics[width=0.32\linewidth]{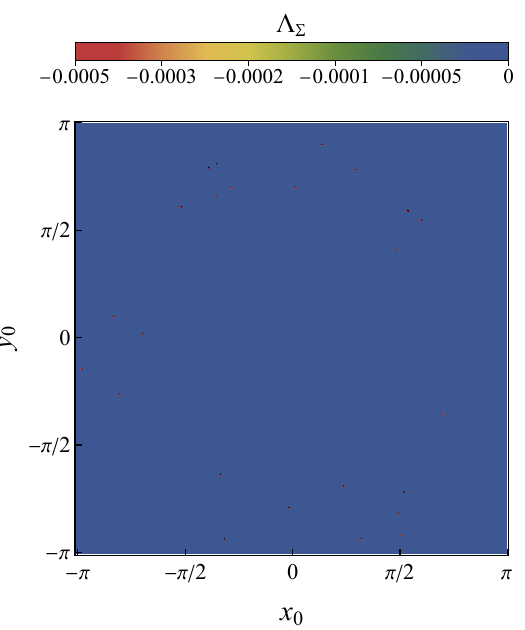}}
\subfigure[$a=2, \quad b=1/2$]{
\includegraphics[width=0.32\linewidth]{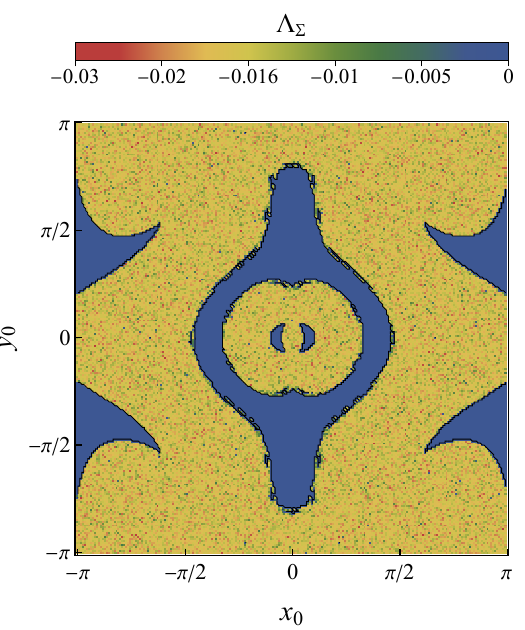}}
\subfigure[$a=3, \quad b=1/2$]{
\includegraphics[width=0.32\linewidth]{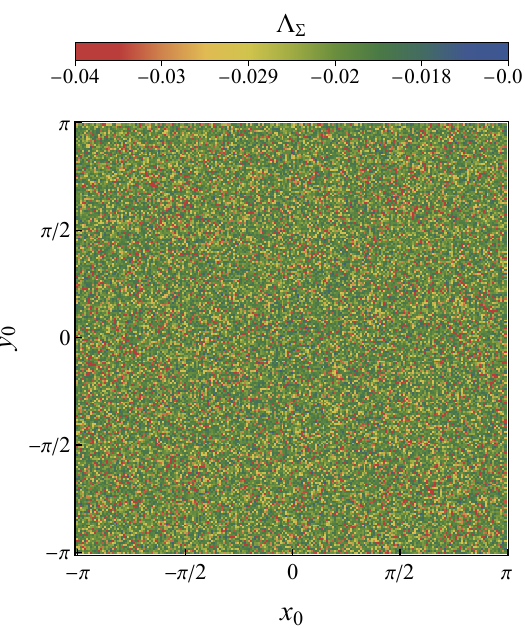}}
\caption{(Color online)  Maps of the Lyapunov-sum quantity $\Lambda_\Sigma$, defined
by Eq.~\eqref{eq:Lyapunov_sum_divergence}, characterizing the asymptotic mean phase-space contraction
rate, for $b=1/2$ and selected values of the parameters corresponding
to the previous figures. The color scale is defined by
Eq.~\eqref{eq:lyapunov_map_scaling} with $p=1$.
\label{fig:lambda_sigma_maps}}
\end{figure*}
The largest Lyapunov exponent characterizes local exponential
instability, but it does not by itself distinguish between chaotic
motion with and without average phase-space contraction. To resolve
this additional property over the initial-condition plane, we also
construct maps of the sum $\mathrm{\Lambda}_{\Sigma}(x_0,y_0)$ of the complete Lyapunov spectrum  over a dense grid of initial conditions $(x_0,y_0)$,
As discussed above, this quantity is related to the long-time averaged
divergence along the corresponding trajectory. Hence,
$\mathrm{\Lambda}_{\Sigma}\simeq0$ identifies initial conditions for which no
appreciable average phase-space contraction is detected, whereas
$\mathrm{\Lambda}_{\Sigma}<0$ indicates contracting dynamics.

Fig.~\ref{fig:lambda_sigma_maps} shows representative maps of
$\mathrm{\Lambda}_{\Sigma}$. For $a=1/2$, the sum remains extremely
close to zero over almost the entire initial-condition plane. This
provides a quantitative confirmation of the conservative-like character
of the phase-space organization observed in the corresponding
Poincar\'e section.

At $a=2$, a pronounced spatial structure develops. Regions with
$\mathrm{\Lambda}_{\Sigma}\simeq0$ coexist with domains in which
$\mathrm{\Lambda}_{\Sigma}<0$. Remarkably, their boundaries closely reproduce
the structures identified independently in the Poincar\'e section and
in the map of $\lambda_1$. The initial-condition plane can therefore
be separated not only into regular and chaotic regions, but also into
regions characterized by markedly different rates of average
phase-space contraction.

For $a=3$, the picture changes again. Negative values of
$\mathrm{\Lambda}_{\Sigma}$ extend over essentially the whole sampled domain,
showing that the widespread chaotic dynamics identified by the
$\lambda_1$ map is accompanied by phase-space contraction. This
provides a quantitative counterpart of the dissipative-like geometry
observed previously in the three-dimensional trajectories and
Poincar\'e sections.

Taken together, the Poincar\'e sections and the two complementary
Lyapunov maps provide a detailed picture of the phase-space
reorganization induced by the parameter $a$. The maps of $\lambda_1$
identify the distribution of chaotic instability, whereas the maps of
$\Lambda_{\Sigma}$ reveal the corresponding distribution of average
phase-space contraction. Their combination thus characterizes the
transition from conservative-like to dissipative-like dynamics both
geometrically and quantitatively.

As mentioned, this transition is clearly non-monotonic. Increasing $a$
initially leads to the progressive destruction of regular structures
and to widespread chaotic, contracting dynamics, whereas organized
regular domains reappear for larger values of the parameter. This
behavior motivates a systematic exploration of the full $(a,b)$
parameter plane.

\subsection{Global organization of the parameter space: LIT analysis}
\label{sec:LIT_analysis}

We now employ the Lyapunov Integrability Test
(LIT)~\cite{SzuminskiLIT2026} to extend the preceding analysis from
selected parameter values and one-parameter families to a sampled
two-parameter domain.

The LIT was originally introduced as an ensemble-based numerical
procedure for identifying parameter values that may admit additional
first integrals. Its principal indicator is the maximum of the largest
Lyapunov exponent over a prescribed set of initial conditions.
Parameter values for which this quantity remains numerically close to
zero over the entire sampled ensemble are selected as candidates for
further analytical investigation of regularity or integrability.
Since the LIT computes the complete Lyapunov spectrum for every sampled
initial condition, the same numerical data can also provide additional
information about the global organization of the dynamics. Here, besides
the principal LIT indicator, we consider the percentage of chaotic
initial conditions  and two measures of phase-space
contraction based on the sum of the Lyapunov exponents.

\subsubsection{Maximum largest Lyapunov exponent $\lambda_{1,\max}$}
\label{sec:LIT_lambda1max}

\begin{figure*}[htp]
\centering
\includegraphics[width=0.49\linewidth]{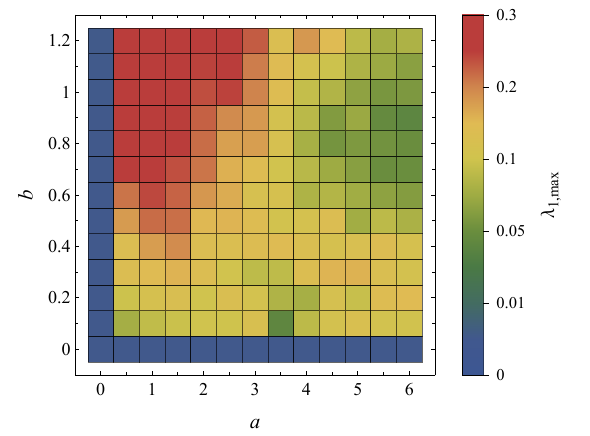}
\includegraphics[width=0.49\linewidth]{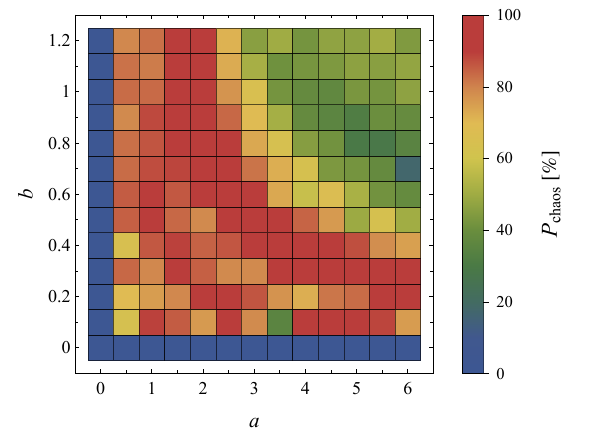}
\caption{(Color online) 
LIT analysis of system~\eqref{eq:periodic_NH} in the parameter plane
$(a,b)$. Left: maximum largest Lyapunov exponent
$\lambda_{1,\max}$ computed over the considered set of initial
conditions. Right: percentage $P_{\mathrm{chaos}}$ of initial
conditions classified as chaotic. The color transformation
in~\eqref{eq:lyapunov_map_scaling} is applied with $p=0.5$ and $p=1$
in the left and right panels, respectively.
}
\label{fig:LIT_parameter_map}
\end{figure*}

Following the LIT procedure, we introduce a uniform grid of the scanning
parameters over the range
\[
\valpha=(a,b)\in[0,6]\times[0,1.2],
\]
with step sizes
\[
h_a=\frac{1}{2},
\qquad
h_b=0.1,
\]
respectively. This defines the discrete parameter set
$P_{h_{\valpha}}$. For each parameter pair
$(a,b)\in P_{h_{\valpha}}$, we consider a uniform grid of initial
conditions on the plane $z_0=0$,
\[
I_{h_{\vx}}
=
\left\{
\vx_0=(x_0,y_0,0):
(x_0,y_0)\in[-\pi,\pi)\times[-\pi,\pi)
\right\},
\]
with spacing $h_{\vx}=0.2$.

For every $\vx_0\in I_{h_{\vx}}$, we compute the complete Lyapunov
spectrum
\[
\lambda_1(\vx_0;a,b)
\geq
\lambda_2(\vx_0;a,b)
\geq
\lambda_3(\vx_0;a,b).
\]
The principal LIT indicator is the maximum of the largest Lyapunov
exponent over the sampled initial conditions,
\begin{equation}
\label{eq:LIT_lambda1max}
\lambda_{1,\max}(a,b)
=
\max_{\vx_0\in I_{h_{\vx}}}
\lambda_1(\vx_0;a,b).
\end{equation}

If $\lambda_{1,\max}(a,b)>0$, at least one trajectory within the sampled
region exhibits chaotic dynamics. The corresponding parameter pair can
therefore be excluded as a candidate for globally regular or integrable
dynamics within the sampled domain. Conversely, parameter regions for
which $\lambda_{1,\max}(a,b)$ remains numerically close to zero over the
entire initial-condition grid are identified by the LIT as candidates
for additional regular or integrable cases.

Fig.~\ref{fig:LIT_parameter_map}(a) shows the map
\[
(a,b)\longmapsto\lambda_{1,\max}(a,b)
\]
for the trigonometric Nos\'e--Hoover oscillator~\eqref{eq:periodic_NH}.
The color scale represents the magnitude of $\lambda_{1,\max}$, with
the square-root transformation defined in
Eq.~\eqref{eq:lyapunov_map_scaling} applied to enhance the visibility
of weakly chaotic regions.

Two distinct lines with $\lambda_{1,\max}\simeq0$ are clearly visible,
namely $a=0$ and $b=0$. These coincide with the parameter cases
identified above as candidates for the presence of additional first
integrals. Their integrability properties will be examined analytically
in Sec.~\ref{sec:integrability}.

At every sampled parameter pair with $a>0$ and $b>0$, a positive value
of $\lambda_{1,\max}$ is detected. Its magnitude, however, varies
considerably across the $(a,b)$ plane, revealing a strongly nonuniform
distribution of chaotic instability. Relatively large values occur over
broad regions of the parameter plane, whereas other regions exhibit
much weaker chaotic instability.

\subsubsection{Percentage of chaotic initial conditions $P_{\rm chaos}$}
\label{sec:Pchaos}

Although $\lambda_{1,\max}$ detects the presence of chaos within the
sampled ensemble and quantifies the strongest detected instability, it
does not indicate how widespread chaotic dynamics is. A positive value
of $\lambda_{1,\max}$ may result either from chaos occurring for almost
all sampled initial conditions or from only a small subset of them.
To distinguish between these situations, we introduce the percentage
of chaotic initial conditions,
\begin{equation}
\label{eq:Pchaos}
P_{\mathrm{chaos}}(a,b)
=
100\,
\frac{\left|I_{h_{\vx}}^{\mathrm{chaos}}(a,b)\right|}
     {\left|I_{h_{\vx}}\right|}\%.
\end{equation}
Here
\[
I_{h_{\vx}}^{\mathrm{chaos}}(a,b)
=
\left\{
\vx_0\in I_{h_{\vx}}:
\lambda_1(\vx_0;a,b)>\lambda_{\mathrm{thr}}
\right\}
\]
denotes the subset of initial conditions classified as chaotic, with
the numerical threshold
\[
\lambda_{\mathrm{thr}}=5\times10^{-4}.
\]
The threshold accounts for small positive finite-time values of
$\lambda_1$ that may also occur for regular trajectories. Thus,
$P_{\mathrm{chaos}}$ ranges from $0$ to $100\%$, with values close to
$0\%$ corresponding to predominantly regular dynamics and values close
to $100\%$ indicating that chaotic trajectories dominate the sampled
initial-condition plane.

Fig.~\ref{fig:LIT_parameter_map}(b) shows the corresponding map
\[
(a,b)\longmapsto P_{\mathrm{chaos}}(a,b).
\]
The resulting distribution is consistent with the dynamical picture
obtained from the Poincar\'e sections and the Lyapunov maps discussed
above. Along the two parameter lines $a=0$ and $b=0$, the chaotic
percentage vanishes, $P_{\mathrm{chaos}}=0$, in agreement with the
corresponding near-zero values of $\lambda_{1,\max}$. Thus, both
ensemble-based indicators single out the same parameter families as
candidates for the presence of additional first integrals.

A markedly different picture emerges in the interior of the parameter
domain, where $a>0$ and $b>0$. Chaotic initial conditions are detected
for every investigated parameter pair. Moreover, chaos is generally
not confined to a small subset of the sampled initial-condition plane.
Approximately $70\%$ of the investigated parameter pairs satisfy
$P_{\mathrm{chaos}}>60\%$, while about $50\%$ satisfy
$P_{\mathrm{chaos}}>80\%$. The mean value of $P_{\mathrm{chaos}}$ over
this part of the parameter plane is approximately $70\%$, with a median
close to $80\%$. Thus, for a substantial part of the parameter space,
the majority of the sampled initial conditions lead to chaotic
trajectories.

Parameter pairs for which the entire sampled initial-condition grid is
classified as chaotic are less common. The condition
$P_{\mathrm{chaos}}=100\%$ occurs for approximately $6\%$ of the
investigated parameter pairs with $a>0$ and $b>0$. Nevertheless, the
overall distribution shows that chaotic dynamics is not only detected
throughout the interior of the investigated parameter domain but is
also widespread over the sampled initial-condition plane for a large
fraction of parameter values.

\subsubsection{Phase-space contraction:
$\mathrm{\Lambda}_{\Sigma,\min}$ and
$\mathrm{\Lambda}_{\Sigma,\mathrm{mean}}$}
\label{sec:LIT_contraction}

\begin{figure*}[htp]
\centering
\includegraphics[width=0.49\linewidth]{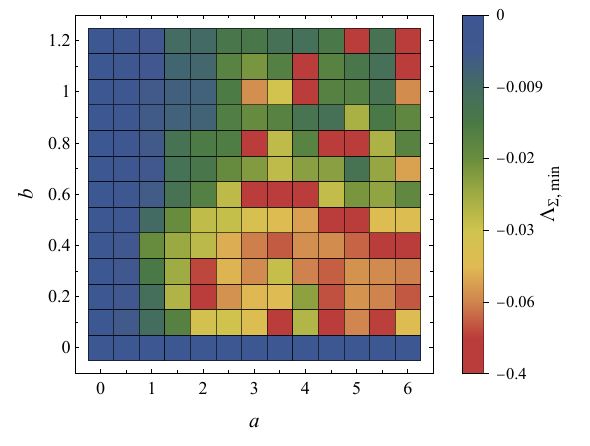}
\includegraphics[width=0.49\linewidth]{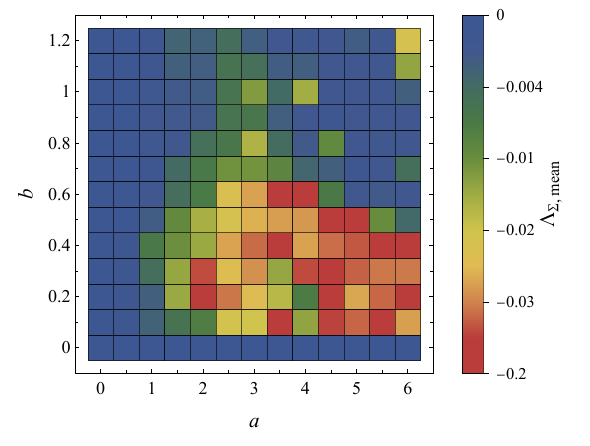}
\caption{(Color online) 
LIT maps of the sum of the Lyapunov exponents
$\mathrm{\Lambda}_{\Sigma}=\sum_i\lambda_i$ in the parameter plane
$(a,b)$. Left: minimum value
$\mathrm{\Lambda}_{\Sigma,\min}$; right: mean value
$\mathrm{\Lambda}_{\Sigma,\mathrm{mean}}$, both evaluated over the set
of initial conditions on the considered Poincar\'e section for each
pair $(a,b)$. The color scale is defined by
Eq.~\eqref{eq:lyapunov_map_scaling} with $p=10$.
}
\label{fig:LIT_Lyapunov_sum}
\end{figure*}

The complete Lyapunov spectra also provide information about
phase-space contraction. For each initial condition and parameter pair
$(a,b)$, we define
\begin{equation}
\label{eq:LIT_LambdaSigma}
\mathrm{\Lambda}_{\Sigma}(\vx_0;a,b)
=
\sum_{i=1}^{3}\lambda_i(\vx_0;a,b).
\end{equation}
Negative values of $\mathrm{\Lambda}_{\Sigma}$ indicate average
phase-space contraction along the corresponding trajectory, whereas
values close to zero indicate the absence of appreciable average
contraction.

To characterize its distribution over the sampled initial-condition
grid, we introduce the minimum
\begin{equation}
\label{eq:LIT_LambdaSigmaMin}
\mathrm{\Lambda}_{\Sigma,\min}(a,b)
=
\min_{\vx_0\in I_{h_{\vx}}}
\mathrm{\Lambda}_{\Sigma}(\vx_0;a,b),
\end{equation}
which identifies the strongest contraction detected among the sampled
initial conditions, and the mean
\begin{equation}
\label{eq:LIT_LambdaSigmaMean}
\mathrm{\Lambda}_{\Sigma,\mathrm{mean}}(a,b)
=
\frac{1}{|I_{h_{\vx}}|}
\sum_{\vx_0\in I_{h_{\vx}}}
\mathrm{\Lambda}_{\Sigma}(\vx_0;a,b),
\end{equation}
which measures the average contraction over the entire ensemble.

Fig.~\ref{fig:LIT_Lyapunov_sum} extends to the $(a,b)$ parameter plane
the dynamical picture inferred earlier from the representative
Poincar\'e sections and the corresponding Lyapunov maps. In particular,
along the previously analyzed section $b=1/2$, the parameter dependence
of the phase-space contraction is consistent with the local phase-space
organization discussed in Figs.~\ref{fig:map1}--\ref{fig:map2}.

For small values of $a$, in particular for $a<1$, both
$\mathrm{\Lambda}_{\Sigma,\min}$ and
$\mathrm{\Lambda}_{\Sigma,\mathrm{mean}}$ remain close to zero. This
agrees with the predominantly conservative-like dynamics observed for
$a=1/2$ in Fig.~\ref{fig:map1}(a) and
Fig.~\ref{fig:lambda_sigma_maps}(a), where extended regular structures
coexist with chaotic layers and no pronounced average phase-space
contraction is detected.

As $a$ increases, both contraction measures become progressively more
negative over a substantial part of the parameter plane. Along
$b=1/2$, this trend is already visible for $a=2$ and becomes more
pronounced for $a=3$, in agreement with the phase-space reorganization
shown in Figs.~\ref{fig:map1}(b) and \ref{fig:map2}(a). In these cases,
the growth of chaotic regions in the Poincar\'e sections and Lyapunov
maps is accompanied by increasingly strong average phase-space
contraction.

For larger values of $a$, the contraction remains substantial, although
its dependence on the parameters is clearly non-monotonic. In the range
$a\gtrsim4$, both $\mathrm{\Lambda}_{\Sigma,\min}$ and
$\mathrm{\Lambda}_{\Sigma,\mathrm{mean}}$ reach some of their most
negative values within the displayed scale. For still larger values,
approximately $a\gtrsim5.5$, the magnitude of both contraction measures
decreases again over part of the investigated range. This weakening of
the average contraction is consistent with the reappearance of extended
regular domains observed for $a=6$ in Fig.~\ref{fig:map2}(b).

Summarizing the results of this section, the numerical analysis reveals
a complex and strongly non-monotonic organization of the dynamics of
the trigonometric Nos\'e--Hoover oscillator. The Poincar\'e sections,
bifurcation diagrams, Lyapunov spectra, and initial-condition Lyapunov
maps show the coexistence and evolution of regular, chaotic, and
contracting dynamics, while the LIT analysis extends this picture to
the full $(a,b)$ parameter plane.

Despite this complexity, the ensemble-based diagnostics consistently
distinguish the two parameter lines $a=0$ and $b=0$. Along these lines,
chaotic dynamics is not detected within the sampled initial-condition
set, while the Lyapunov-sum quantities remain numerically close to
zero. Their exceptional character is therefore not restricted to a
particular trajectory or choice of initial conditions, but persists
over the entire sampled ensemble.

This numerical distinction suggests that the limits $a=0$ and $b=0$
are associated with structural changes in the differential system
itself. We therefore turn from the numerical characterization of the
dynamics to an analytical examination of these two limiting cases and
their integrability properties.


\section{Integrability and partial integrability}
\label{sec:integrability}

The two limiting cases $a=0$ and $b=0$ are fundamentally different.
For $a=0$, the divergence of the vector field vanishes identically and
the dynamics acquires a volume-preserving structure. In contrast, for
$b=0$  the third variable becomes constant along every trajectory,
yielding an immediate first integral and reducing the effective
dimension of the system. We analyze these two cases separately below.
 
\subsection{The volume-preserving limit $a=0$}
\label{subsec:a0}

For $a=0$, system~\eqref{eq:periodic_NH} reduces to
\begin{equation}
\label{eq:periodic_a0}
\begin{cases}
\dot x=\sin y,\\
\dot y=-\sin x,\\
\dot z=b(1-2\cos y).
\end{cases}
\end{equation}
The dynamics of the $(x,y)$ variables is decoupled from $z$ and
constitutes a one-degree-of-freedom Hamiltonian system. It admits the
first integral
\begin{equation}
\label{eq:first_integral_a0}
H(x,y)
=
\cos x+\cos y.
\end{equation}
Thus, the dynamics of the reduced $(x,y)$--subsystem is confined to the
level sets
\begin{equation}
\label{eq:set}
H(x,y)=h,\qquad -2\leq h\leq 2.
\end{equation}
The maximum $h=2$ is attained at $(x,y)=(0,0)$, whereas the minimum
$h=-2$ is attained at $(x,y)=(\pi,\pi)$ modulo $2\pi$. In
particular, the origin is an elliptic equilibrium of the reduced
Hamiltonian subsystem. In its neighborhood,
\begin{equation*}
\label{eq:H0_expansion}
H(x,y)
=
2-\frac{x^2+y^2}{2}
+
O\left(\|(x,y)\|^4\right).
\end{equation*}
Hence, for $h<2$ sufficiently close to $2$, the level sets $H=h$
are closed curves surrounding the origin and are locally
approximated by
\begin{equation*}
x^2+y^2=2(2-h).
\end{equation*}
As $h$ decreases from $2$, these curves deform according to the
global $2\pi$--periodic geometry of the system. The regular levels
$0<h<2$ surround the elliptic equilibrium $(0,0)$, while $h=0$ is
the separatrix level passing through the hyperbolic equilibria
$(0,\pi)$ and $(\pi,0)$ modulo $2\pi$. For $-2<h<0$, the regular
level curves surround the elliptic equilibrium $(\pi,\pi)$ modulo
$2\pi$.

Fig.~\ref{fig:integral} shows representative level curves
$H(x,y)=h$ in the fundamental domain
$(x,y)\in[-\pi,\pi)^2$, covering the full range
$-2\leq h\leq2$. The separatrix level $h=0$ divides the two
families of regular level curves. Their geometry also provides a
direct interpretation of the Poincar\'e section shown in
Fig.~\ref{fig:integrable}. Since the $(x,y)$ dynamics is confined
to a level set of $H$, the intersections of a trajectory with the
section $z=0\pmod{2\pi}$ necessarily belong to the same invariant
level. Consequently, the nested curves observed in the Poincar\'e
section correspond to the invariant level sets of the reduced
Hamiltonian system and provide a direct numerical manifestation of
its integrable structure.
\begin{figure}[t]
\centering
\includegraphics[width=0.85\linewidth]{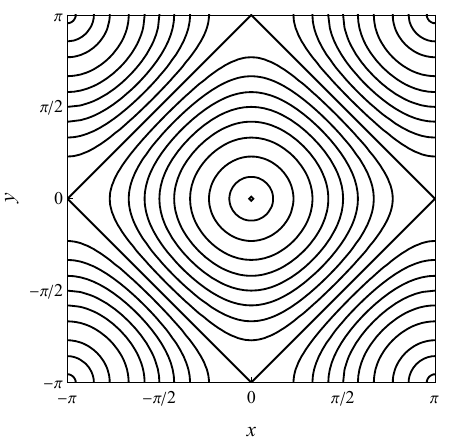}
\caption{Level curves $H(x,y)=h$, with $-2\leq h\leq2$, in the
fundamental domain $(x,y)\in[-\pi,\pi)\times[-\pi,\pi)$. The critical
level $h=0$ is the separatrix level separating the regular level curves
surrounding $(0,0)$ from those surrounding $(\pi,\pi)$ modulo $2\pi$.}
\label{fig:integral}
\end{figure}
 
A second first integral can be obtained directly from the
characteristic equations. Let $F(x,y,z)$ be a first integral of
system~\eqref{eq:periodic_a0}. The condition $\dot F=0$ leads to the
first-order partial differential equation
\[
-\sin x\,F_y+\sin y\,F_x
+b(1-2\cos y)F_z=0.
\]
The associated characteristic equations are
\begin{equation}
\label{eq:characteristics}
\frac{\rmd x}{\sin y}
=
\frac{\rmd y}{-\sin x}
=
\frac{\rmd z}{b(1-2\cos y)}.
\end{equation}
The first two characteristic equations immediately recover the first
integral~\eqref{eq:first_integral_a0}. 

We now restrict the analysis to
the regular level~\eqref{eq:set} 
Combining the second and third equations in~\eqref{eq:characteristics}
and choosing the branch $\sin x>0$ gives
\[
\rmd z
=
-b\frac{1-2\cos y}
{\sqrt{1-(h-\cos y)^2}}\,\rmd y.
\]
Therefore, a second first integral is
\begin{equation}
\label{eq:H2_periodic}
F
=
z+
b\int^y
\frac{1-2\cos u}
{\sqrt{1-(h-\cos u)^2}}\,\rmd u .
\end{equation}
The integral appearing in~\eqref{eq:H2_periodic} is elliptic and can be
expressed as a linear combination of incomplete elliptic integrals of
the first and third kinds. We use the standard notation
\[
J(\phi\mid m)
=
\int_0^\phi
\frac{\rmd\theta}
{\sqrt{1-m\sin^2\theta}}
\]
for the incomplete elliptic integral of the first kind, where $\phi$
is the amplitude and $m$ is the elliptic parameter. Similarly,
\[
\Pi(n;\phi\mid m)
=
\int_0^\phi
\frac{\rmd\theta}
{(1-n\sin^2\theta)
\sqrt{1-m\sin^2\theta}}
\]
denotes the incomplete elliptic integral of the third kind, where
$\phi$ is the amplitude, $m$ is the elliptic parameter, and $n$ is
the characteristic.

For the family of regular levels $0<h<2$ surrounding the elliptic
equilibrium $(0,0)$, the elliptic integral can be written entirely
in real form. We introduce
\[
m_h=\frac{h^2-4}{h^2},
\qquad
n_h=\frac{h-2}{h},
\]
and the real amplitude
\[
\psi_h(y)
=
\arcsin\left(
\sqrt{\frac{h}{2-h}}\tan\frac{y}{2}
\right).
\]
Then the second first integral takes the form
\begin{equation}
\label{eq:H2_elliptic}
F_+
=
z+
\frac{b}{h}
\left[
6J\!\left(\psi_h(y)\mid m_h\right)
-
8\Pi\!\left(n_h;\psi_h(y)\mid m_h\right)
\right].
\end{equation}
The representation~\eqref{eq:H2_elliptic} is understood with a
consistent choice of branches of the square roots and elliptic
functions. It corresponds to the branch $\sin x>0$ used in deriving
\eqref{eq:H2_periodic}; for the opposite branch $\sin x<0$, the sign
of the elliptic contribution in~\eqref{eq:H2_elliptic} is reversed.

The second family of regular levels, $-2<h<0$, surrounds the elliptic
equilibrium $(\pi,\pi)$. To obtain an analogous real representation,
we set
\[
\tilde h=-h\in(0,2),
\qquad
\xi=x-\pi,
\qquad
\eta=y-\pi.
\]
Then the level condition becomes
\[
\cos\xi+\cos\eta=\tilde h,
\]
whereas system~\eqref{eq:periodic_a0} takes the form
\[
\dot\xi=-\sin\eta,
\qquad
\dot\eta=\sin\xi,
\qquad
\dot z=b(1+2\cos\eta).
\]
On the branch $\sin\xi>0$, we therefore obtain
\begin{equation}
\label{eq:H2_periodic_negative}
\frac{\rmd z}{\rmd\eta}
=
b\,
\frac{1+2\cos\eta}
{\sqrt{1-(\tilde h-\cos\eta)^2}}.
\end{equation}

Introducing
\[
m_{\tilde h}
=
\frac{\tilde h^2-4}{\tilde h^2},
\qquad
n_{\tilde h}
=
\frac{\tilde h-2}{\tilde h},
\]
and the real amplitude
\[
\psi_{\tilde h}(\eta)
=
\arcsin\left(
\sqrt{\frac{\tilde h}{2-\tilde h}}
\tan\frac{\eta}{2}
\right),
\]
integration of~\eqref{eq:H2_periodic_negative} yields
\begin{equation}
\label{eq:H2_elliptic_negative}
F_-
=
z+
\frac{b}{\tilde h}
\left[
2J\!\left(
\psi_{\tilde h}(\eta)\mid m_{\tilde h}
\right)
-
8\Pi\!\left(
n_{\tilde h};
\psi_{\tilde h}(\eta)\mid m_{\tilde h}
\right)
\right].
\end{equation}
As for $F_+$, the expression is understood with a consistent choice
of branches, and the sign of the elliptic contribution is reversed
on the opposite branch $\sin\xi<0$.

The singular level $h=0$, which separates the two families of regular
levels, requires a separate treatment, since the elliptic
representations~\eqref{eq:H2_elliptic} and
\eqref{eq:H2_elliptic_negative} are not applicable there. Setting
$h=0$ directly in~\eqref{eq:H2_periodic} gives, on a fixed branch,
\[
F_0
=
z+
b\int^y
\frac{1-2\cos u}
{\sqrt{1-\cos^2u}}\,\rmd u.
\]
On the branch where the square root is represented by $\sin y$, the
integral is elementary and yields
\begin{equation}
\label{eq:H2_separatrix}
F_0
=
z-b\log\left|\sin y\,(1+\cos y)\right|.
\end{equation}
For the opposite choice of the square-root branch, the sign of the
logarithmic contribution is reversed. The first integral
\eqref{eq:H2_separatrix} is defined locally on each regular branch of
the separatrix $H=0$ and has logarithmic singularities at the saddle
points of the reduced $(x,y)$ dynamics. It therefore does not extend
as a regular first integral through the entire singular level.

Consequently, for $a=0$ and $b\neq0$, the second first integral admits
the real representations $F_+$ and $F_-$ on the regular families
$0<h<2$ and $-2<h<0$, respectively, while $F_0$ provides a local
elementary representation on the regular branches of the singular
level $h=0$. Together with the first integral $H$ given
by~\eqref{eq:first_integral_a0}, $F_+$ and $F_-$ establish complete
integrability on every regular domain where the corresponding
branches are chosen consistently and the two first integrals are
functionally independent.

Fig.~\ref{fig:integrables} shows the contour structure associated
with the real representations $F_+$ and $F_-$ of the second first
integral, separated by the singular level $H=0$. In contrast to the
simple level-set geometry of $H(x,y)=h$ shown in
Fig.~\ref{fig:integral}, the contours associated with $F_+$ and $F_-$
exhibit a considerably more intricate structure, reflecting the
nonlinear and branch-dependent form of the second first integral and
its singular behavior near the separatrix.

\begin{figure}[t]
\centering
\includegraphics[width=0.85\linewidth]{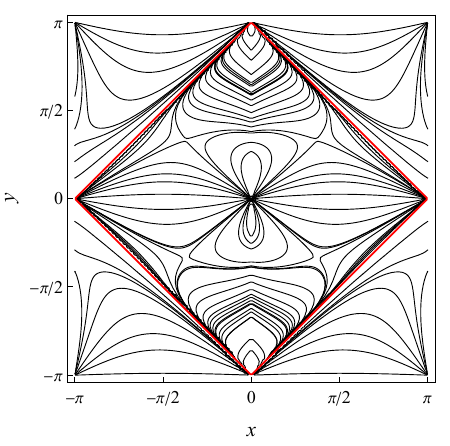}
\caption{(Color online)  Level curves of the second first integral in the fundamental
domain $(x,y)\in[-\pi,\pi)\times[-\pi,\pi)$. The black curves
correspond to the representations $F_+$ and $F_-$ given
by~\eqref{eq:H2_elliptic} and~\eqref{eq:H2_elliptic_negative}
for the regular families $0<h<2$ and $-2<h<0$, respectively.
The red curve represents the separatrix $H=0$, which separates the
two families of regular levels.}
\label{fig:integrables}
\end{figure}

\subsection{The frozen-thermostat limit $b=0$}
\label{subsec:b0}

For $b=0$, system~\eqref{eq:periodic_NH} reduces to
\begin{equation*}
\begin{cases}
\dot x=\sin y,\\
\dot y=-\sin x-a\sin y\sin z,\\
\dot z=0.
\end{cases}
\end{equation*}
Hence,
$
I=z
$
is a first integral, and the phase space is foliated by the invariant
surfaces
\[
z=c,
\qquad c\in\mathbb S^1.
\]
On each of these surfaces, the dynamics reduces to the planar system
\begin{equation}
\label{eq:reduced_b0}
\begin{cases}
\dot x=\sin y,\\
\dot y=-\sin x-\kappa\sin y,
\end{cases}
\qquad
\kappa=a\sin c.
\end{equation}

The corresponding phase trajectories satisfy
\begin{equation}
\label{eq:phase_b0}
\frac{\rmd y}{\rmd x}
=
-\frac{\sin x}{\sin y}-\kappa.
\end{equation}
For $\kappa=0$, this equation can be integrated directly, recovering
the first integral~\eqref{eq:first_integral_a0}. Thus, on the invariant
surfaces $z=k\pi$, the reduced dynamics is integrable.

For $\kappa\neq0$, the same first integral is lost and
\eqref{eq:phase_b0} does not yield an obvious second integral.
We therefore first examine the qualitative dynamics of the reduced
system and then investigate its algebraic integrability using Darboux
theory.

\subsubsection{Qualitative dynamics of the reduced system}
\label{subsubsec:qualitative-b0}

We begin with the vector field associated with
system~\eqref{eq:reduced_b0},
\begin{equation}
\label{eq:Xkappa-reduced-b0}
\mathcal X_\kappa
=
\sin y\,\frac{\partial}{\partial x}
+
\left(-\sin x-\kappa\sin y\right)
\frac{\partial}{\partial y}.
\end{equation}
Its divergence is
\begin{equation}
\label{eq:div-Xkappa-reduced-b0}
\nabla\cdot\mathcal X_\kappa
=
-\kappa\cos y.
\end{equation}
Thus, except for $\kappa=0$, the reduced flow is not
area-preserving. A nonzero divergence alone does not preclude the
existence of a first integral; rather, it measures the local
contraction or expansion of area elements under the flow.

For $\kappa\neq0$, the equilibria of~\eqref{eq:reduced_b0} are
\[
(x,y)=(k\pi,s\pi),
\qquad
k,s\in\{0,1\},
\]
where the angular variables are understood modulo $2\pi$.
 
The Jacobian matrix of~\eqref{eq:reduced_b0} is
\begin{equation*}
\label{eq:Jacobian-reduced-b0}
J(x,y)
=
\begin{pmatrix}
0 & \cos y\\
-\cos x & -\kappa\cos y
\end{pmatrix}.
\end{equation*}

At $(0,0)$ and $(\pi,\pi)$, the eigenvalues of the Jacobian matrix
are, respectively,
\begin{align*}
\rho_{\pm}^{(0,0)}
=
\frac{-\kappa\pm\sqrt{\kappa^2-4}}{2},
\qquad
\rho_{\pm}^{(\pi,\pi)}
=
\frac{\kappa\pm\sqrt{\kappa^2-4}}{2}.
\end{align*}
Consequently, for $0<\kappa<2$, the equilibrium $(0,0)$ is a
stable focus, whereas $(\pi,\pi)$ is an unstable focus. For
$-2<\kappa<0$, their stability is reversed. When $|\kappa|>2$,
the corresponding eigenvalues are real, so the two focal equilibria
are replaced by nodes, with the same stability determined by the
sign of $\kappa$. The limiting values $|\kappa|=2$ correspond to
degenerate nodes.

At the remaining two equilibria, $(0,\pi)$ and $(\pi,0)$, the
determinant of the Jacobian is negative,
\[
\det J(0,\pi)
=
\det J(\pi,0)
=
-1,
\]
and therefore both equilibria are hyperbolic saddles for every
$\kappa\in\mathbb R$.

The case $\kappa=1$ is illustrated in
Fig.~\ref{fig:stream_b0}. The equilibrium $(0,0)$ is a stable focus,
whereas $(\pi,\pi)$, represented by the identified corners of the
fundamental domain, is an unstable focus. The remaining equilibria
$(0,\pi)$ and $(\pi,0)$ are hyperbolic saddles and organize the
separating structures of the phase portrait.

The streamlines are colored according to the divergence of the
reduced vector field. For $\kappa=1$,
\[
\nabla\cdot\mathcal X_{1}=-\cos y.
\]
Thus, negative and positive values correspond to local area
contraction and expansion, respectively. In particular, the
contraction near $(0,0)$ and the expansion near $(\pi,\pi)$ are
consistent with the attracting and repelling character of these
equilibria.

\begin{figure}[t]
\centering
\includegraphics[width=1\linewidth]{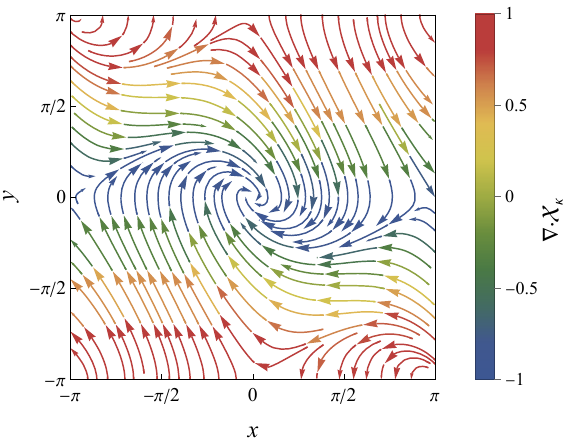}
\caption{(Color online) Stream plot of the reduced system~\eqref{eq:reduced_b0}
for $\kappa=1$ in the fundamental domain
$(x,y)\in[-\pi,\pi)^2$. The equilibrium $(0,0)$ is a stable focus,
$(\pi,\pi)$ is an unstable focus, and $(0,\pi)$ and $(\pi,0)$ are
hyperbolic saddles. The color scale represents the divergence
$\nabla\cdot\mathcal X_{1}=-\cos y$.}
\label{fig:stream_b0}
\end{figure}

The attracting and repelling equilibria also impose a restriction on
regular first integrals of the reduced system. Let $F$ be a continuous
first integral defined on an open set $U$ contained in the basin of
attraction of a stable equilibrium $p_*$. If $\varphi_t$ denotes the
flow of~\eqref{eq:reduced_b0}, then
\[
F(\varphi_t(p))=F(p),
\qquad p\in U.
\]
Since $\varphi_t(p)\to p_*$ as $t\to+\infty$, continuity of $F$
implies
\[
F(p)
=
\lim_{t\to+\infty}F(\varphi_t(p))
=
F(p_*).
\]
Hence $F$ is constant on $U$. The same argument applies to an
unstable equilibrium after reversing time. Thus, the presence of a
focus provides a local obstruction to the existence of a continuous,
locally nonconstant first integral on a full neighborhood contained
in its basin. This obstruction results from the attracting or
repelling dynamics, rather than from the nonzero divergence itself,
and does not exclude singular or branch-dependent first integrals
defined on restricted domains.

This qualitative picture already indicates important restrictions on
the regular integrability of~\eqref{eq:reduced_b0}. To complement this
dynamical description, we now examine the algebraic structure of the
reduced system. In particular, we investigate its invariant algebraic
curves and Darboux polynomials after transforming the system into a
polynomial form.
\subsubsection{Darboux analysis of the reduced system}
\label{subsec:darboux}

To apply Darboux theory, we introduce the tangent half-angle variables
\begin{equation*}
\label{eq:half_angle}
u=\tan\frac{y}{2},
\qquad
v=\tan\frac{x}{2}.
\end{equation*}
In these variables, system~\eqref{eq:reduced_b0} takes the rational form
\begin{equation}
\label{eq:rational_reduced}
\begin{cases}
\displaystyle
\dot u=-\kappa u-
       \frac{(1+u^2)v}{1+v^2},
\\[3mm]
\displaystyle
\dot v=\frac{u(1+v^2)}{1+u^2}.
\end{cases}
\end{equation}

Multiplication of the vector field by
\[
D(u,v)=(1+u^2)(1+v^2)>0
\]
corresponds to a nonsingular reparametrization of time in the real
$(u,v)$-chart and therefore preserves the phase curves and first
integrals. We may thus work with the orbitally equivalent polynomial
system
\begin{equation}
\label{eq:polynomial_reduced}
\begin{cases}
u'=P(u,v),\\
v'=Q(u,v),
\end{cases}
\end{equation}
where
\begin{equation}
\label{eq:PQ}
\begin{split}
P(u,v)
&=
(1+u^2)
\left[
-\kappa u(1+v^2)-v(1+u^2)
\right],
\\
Q(u,v)
&=
u(1+v^2)^2.
\end{split}
\end{equation}
Here the prime denotes differentiation with respect to the
reparametrized time. The associated polynomial vector field is
\[
\mathcal X
=
P\frac{\partial}{\partial u}
+
Q\frac{\partial}{\partial v}.
\]
Recall that a non-constant polynomial $f\in\mathbb C[u,v]$ is a
\emph{Darboux polynomial} of $\mathcal X$ if
\begin{equation}
\label{eq:Darboux_polynomial}
\mathcal X(f)=Kf
\end{equation}
for some polynomial $K\in\mathbb C[u,v]$, called its \emph{cofactor}.
Similarly, an \emph{exponential factor} is a function
\[
E=\exp\left(\frac{g}{h}\right),
\qquad
g,h\in\mathbb C[u,v],
\]
satisfying
\[
\mathcal X(E)=LE
\]
for some polynomial cofactor $L$ of degree at most four.

If Darboux polynomials $f_j$ and exponential factors $E_k$ have
cofactors $K_j$ and $L_k$, respectively, and there exist constants
$\lambda_j,\mu_k\in\mathbb C$, not all zero, such that
\begin{equation}
\label{eq:general_cofactor_relation}
\sum_{j=1}^{p}\lambda_jK_j
+
\sum_{k=1}^{q}\mu_kL_k
=0,
\end{equation}
then
\begin{equation}
\label{eq:general_Darboux_integral}
H=
\prod_{j=1}^{p}f_j^{\lambda_j}
\prod_{k=1}^{q}E_k^{\mu_k}
\end{equation}
is a Darboux first integral.

For system~\eqref{eq:polynomial_reduced}, two Darboux polynomials are
\begin{equation*}
\label{eq:Darboux_f12}
f_1=1+u^2,
\qquad
f_2=1+v^2,
\end{equation*}
with cofactors
\begin{equation*}
\label{eq:Darboux_K12}
\begin{split}
K_1
&=
2u\left[
-\kappa u(1+v^2)-v(1+u^2)
\right],
\\
K_2
&=
2uv(1+v^2).
\end{split}
\end{equation*}
The system also admits the exponential factors
\begin{equation*}
\label{eq:exponential_factors}
E_1=
\exp\left(\frac{1}{1+v^2}\right),
\qquad
E_2=
\exp\left(\frac{v}{1+v^2}\right),
\end{equation*}
with cofactors
\begin{equation*}
\label{eq:exponential_cofactors}
L_1=-2uv,
\qquad
L_2=u(1-v^2).
\end{equation*}

We first consider $\kappa\neq0$. Suppose that the four cofactors
$K_1,K_2,L_1,L_2$ satisfy
\begin{equation*}
\label{eq:cofactor_relation}
\lambda_1K_1+\lambda_2K_2+
\mu_1L_1+\mu_2L_2=0.
\end{equation*}
Comparison of the coefficients of $u^2$, $uv^3$, $u$, and $uv$
shows successively that
\[
\lambda_1=\lambda_2=\mu_2=\mu_1=0.
\]
Thus, for $\kappa\neq0$, the factors $f_1,f_2,E_1,E_2$ do not satisfy
a nontrivial cofactor relation and therefore do not generate a
non-constant Darboux first integral.

The value $\kappa=2$ is nevertheless distinguished by an additional
Darboux polynomial
\begin{equation*}
\label{eq:Darboux_kappa2}
f_3=u+v,
\end{equation*}
with cofactor
\begin{equation*}
\label{eq:Darboux_K_kappa2}
K_3
=
-1-2u^2-u^3v-u^2v^2+uv^3.
\end{equation*}
Hence, the algebraic curve $u+v=0$ is invariant and corresponds
locally to $y=-x\pmod{2\pi}$ in the original variables. Including
$K_3$ in the cofactor relation still gives only the trivial solution,
so this additional Darboux polynomial does not produce a first
integral together with $f_1,f_2,E_1,E_2$. The same reasoning applies
to $\kappa=-2$, for which the additional Darboux polynomial is $u-v$,
corresponding locally to the invariant curve $y=x\pmod{2\pi}$.

For $\kappa=0$, the situation changes. In addition to $f_1$ and $f_2$,
the system admits the Darboux polynomials
\begin{equation*}
\label{eq:Darboux_f34}
f_3=1-uv,
\qquad
f_4=1+uv,
\end{equation*}
with cofactors
\begin{equation*}
\label{eq:Darboux_K34}
\begin{split}
K_3
&=
-(u^2-v^2)(1+uv),
\\
K_4
&=
(u^2-v^2)(1-uv).
\end{split}
\end{equation*}
In this case,
\[
K_1=-2uv(1+u^2),
\qquad
K_2=2uv(1+v^2),
\]
and the four cofactors satisfy the nontrivial relation
\begin{equation*}
\label{eq:cofactor_relation_k0}
K_3+K_4-K_1-K_2=0.
\end{equation*}
It follows that
\begin{equation}
\label{eq:Darboux_integral_k0}
\mathcal H(u,v)
=
\frac{f_3f_4}{f_1f_2}
=
\frac{1-u^2v^2}
     {(1+u^2)(1+v^2)}
\end{equation}
is a Darboux first integral.

Returning to the original variables gives
\[
\cos x+\cos y
=
\frac{2(1-u^2v^2)}
     {(1+u^2)(1+v^2)}
=
2\mathcal H.
\]
Thus, up to a constant factor, the Darboux first
integral~\eqref{eq:Darboux_integral_k0} reproduces the first integral
already obtained for $\kappa=0$ given
in~\eqref{eq:first_integral_a0}.

The Darboux analysis therefore recovers the integrable case
$\kappa=0$ and reveals the additional invariant algebraic curves at
$\kappa=\pm2$. For nonzero $\kappa$, however, the Darboux polynomials
and exponential factors identified above do not satisfy a nontrivial
cofactor relation and hence do not generate a first integral.

This does not establish non-integrability in the complete Darboux
class, since further Darboux polynomials or exponential factors may
exist. A complete result would require their classification or an
independent obstruction to the existence of a Darboux first integral.
In the following sections, we obtain independent obstructions to
regular $C^1$ and meromorphic $B$-integrability using averaging
and differential Galois theory, respectively.

\section{Averaging near  the integrable axes}
\label{sec:periodic-solutions-averaging}

We now investigate periodic solutions of the periodic Nos\'e--Hoover
oscillator~\eqref{eq:periodic_NH} by means of averaging theory. Our analysis is motivated by the approach introduced by Llibre, Messias and
Reinol~\cite{Llibre:21::} in their study of the modified classical
Nos\'e--Hoover oscillator~\eqref{eq:modified_NH}.
We follow their construction as closely as the periodic character of
system~\eqref{eq:periodic_NH} allows.

For the modified classical system~\eqref{eq:modified_NH}, the
unperturbed dynamics in the $(x,y)$ variables reduces to the harmonic
oscillator. Consequently, polar coordinates provide a natural
amplitude--phase parametrization, with the angular variable serving as
the fast variable in the averaging procedure. In contrast, the
corresponding unperturbed subsystem of the periodic
model~\eqref{eq:periodic_NH} is nonlinear because of the trigonometric
terms $\sin x$ and $\sin y$. Therefore, ordinary polar coordinates are
no longer adapted to its trajectories. Instead, we use energy--phase
coordinates associated with the periodic level curves of the
integrable unperturbed subsystem. In this way, the phase variable plays
the same role as the polar angle in the classical construction, while
preserving the intrinsic geometry of the periodic system.

\subsection{The unperturbed periodic family and energy--phase coordinates}
\label{subsec:unperturbed-family-periodic-NH}

We approach the intersection of
the two integrable parameter axes by introducing
\begin{equation}
\label{eq:ab-scaling-periodic-NH}
a=\varepsilon a_1,
\qquad
b=\varepsilon b_1,
\qquad
a_1b_1\neq0,
\qquad
0<\varepsilon\ll1.
\end{equation}
Then system~\eqref{eq:periodic_NH} takes the form
\begin{equation}
\label{eq:periodic-NH-epsilon}
\begin{cases}
\dot x=\sin y,\\[1mm]
\dot y=-\sin x-\varepsilon a_1\sin y\sin z,\\[1mm]
\dot z=\varepsilon b_1(1-2\cos y).
\end{cases}
\end{equation}

For $\varepsilon=0$, the variable $z$ is constant and the
$(x,y)$--subsystem reduces to the one-degree-of-freedom Hamiltonian
system considered in Section~\ref{sec:integrability}, with the first
integral
\[
H(x,y)=\cos x+\cos y;
\]
see~\eqref{eq:first_integral_a0}. For $0<h<2$, the regular level sets
\[
\Gamma_h
=
\left\{
(x,y)\in\mathbb T^2:
H(x,y)=h
\right\}
\]
are periodic orbits surrounding the elliptic equilibrium $(0,0)$.
Hence the unperturbed three-dimensional system possesses the
two-parameter family of periodic orbits
\[
\Gamma_h\times\{z_0\},
\qquad
0<h<2,
\qquad
z_0\in\mathbb S^1.
\]
The limiting levels $h=0$ and $h=2$ correspond, respectively, to the
separatrix containing the saddle equilibria $(0,\pi)$ and $(\pi,0)$,
and to the elliptic equilibrium $(0,0)$.

To parametrize the periodic orbits $\Gamma_h$, we introduce
\[
u=\frac{x+y}{2},
\qquad
v=\frac{x-y}{2}.
\]
Then $x=u+v$, $y=u-v$, and
\begin{equation}
\label{eq:H-uv-periodic}
2\cos u\cos v=h.
\end{equation}
The unperturbed equations give
\begin{equation}
\label{eq:uv-unperturbed}
\dot u=-\cos u\sin v,
\qquad
\dot v=\sin u\cos v.
\end{equation}
Using~\eqref{eq:H-uv-periodic} to eliminate $v$, we obtain
\begin{equation}
\label{eq:udot-square-periodic}
\dot u^{\,2}
=
\cos^2u-\frac{h^2}{4}.
\end{equation}

Let us introduce the elliptic modulus
\begin{equation}
\label{eq:k-def-periodic}
k=k(h)
=
\sqrt{1-\frac{h^2}{4}},
\qquad
k'
=
\sqrt{1-k^2}
=
\frac{h}{2}.
\end{equation}
For $0<h<2$, both $k$ and $k'$ belong to $(0,1)$. Setting
$w=\sin u$, equation~\eqref{eq:udot-square-periodic} becomes
\[
\dot w^{\,2}
=
(1-w^2)(k^2-w^2).
\]
Therefore, up to a translation of the origin of time,
\begin{equation}
\label{eq:u-elliptic-periodic}
\sin u
=
k\,\operatorname{sn}(t,k),
\qquad
\cos u
=
\operatorname{dn}(t,k).
\end{equation}
Using~\eqref{eq:H-uv-periodic} and~\eqref{eq:uv-unperturbed}, the
remaining trigonometric functions are
\begin{equation}
\label{eq:v-elliptic-periodic}
\cos v
=
\frac{k'}{\operatorname{dn}(t,k)},
\qquad
\sin v
=
-\frac{k\,\operatorname{cn}(t,k)}
{\operatorname{dn}(t,k)}.
\end{equation}

We denote by $K(k)$ and $E(k)$ the complete elliptic integrals of the
first and second kinds,
\begin{equation}
\label{eq:complete-elliptic-K}
K(k)
=
\int_0^{\pi/2}
\frac{\rmd\varphi}
{\sqrt{1-k^2\sin^2\varphi}},
\end{equation}
and
\begin{equation}
\label{eq:complete-elliptic-E}
E(k)
=
\int_0^{\pi/2}
\sqrt{1-k^2\sin^2\varphi}\,
\rmd\varphi.
\end{equation}
Throughout this section we use the modulus convention for the Jacobi
elliptic functions and complete elliptic integrals; the corresponding
parameter is $m=k^2$.

The parametrization
\eqref{eq:u-elliptic-periodic}--\eqref{eq:v-elliptic-periodic}
completes one revolution of $\Gamma_h$ over an interval of length
$4K(k)$. Thus
\begin{equation}
\label{eq:T-h-elliptic}
T(h)=4K(k),
\qquad
\Omega(h)
=
\frac{2\pi}{T(h)}
=
\frac{\pi}{2K(k)}.
\end{equation}
Unlike the harmonic oscillator arising in the modified classical
Nos\'e--Hoover model, the present unperturbed system is not
isochronous. Indeed,
\[
T(h)\longrightarrow2\pi,
\qquad
\Omega(h)\longrightarrow1
\qquad
\text{as } h\longrightarrow2^-,
\]
whereas
\[
T(h)\longrightarrow+\infty,
\qquad
\Omega(h)\longrightarrow0
\qquad
\text{as } h\longrightarrow0^+.
\]

We introduce a $2\pi$--periodic phase variable
$\theta\in\mathbb S^1$ along $\Gamma_h$ by
\[
\dot\theta=\Omega(h).
\]
For notational convenience, we define
\begin{equation}
\label{eq:s-theta-periodic}
s=s(h,\theta)
=
\frac{\theta}{\Omega(h)}
=
\frac{2K(k)}{\pi}\theta.
\end{equation}
A shift of the initial phase only changes the origin of the
parametrization along $\Gamma_h$ and will therefore be suppressed.

On every compact annulus contained in $0<h<2$, the variables
$(h,\theta)$ form regular energy--phase coordinates and
$\Omega(h)$ is bounded away from zero. We write
\[
x=X(h,\theta),
\qquad
y=Y(h,\theta)
\]
for the original variables expressed in these coordinates.

Combining $x=u+v$, $y=u-v$ with
\eqref{eq:u-elliptic-periodic}--\eqref{eq:v-elliptic-periodic} gives
\begin{equation}
\label{eq:XY-elliptic}
\begin{aligned}
\cos X(h,\theta)
&=
k'
+
\frac{k^2
\operatorname{sn}(s,k)\operatorname{cn}(s,k)}
{\operatorname{dn}(s,k)},
\\
\sin X(h,\theta)
&=
-k\,\operatorname{cn}(s,k)
+
\frac{kk'\operatorname{sn}(s,k)}
{\operatorname{dn}(s,k)},
\\
\cos Y(h,\theta)
&=
k'
-
\frac{k^2
\operatorname{sn}(s,k)\operatorname{cn}(s,k)}
{\operatorname{dn}(s,k)},
\\
\sin Y(h,\theta)
&=
k\,\operatorname{cn}(s,k)
+
\frac{kk'\operatorname{sn}(s,k)}
{\operatorname{dn}(s,k)}.
\end{aligned}
\end{equation}
Here $k=k(h)$ and $s=s(h,\theta)$ are given by in
\eqref{eq:k-def-periodic} and~\eqref{eq:s-theta-periodic}.
In particular,
\[
\cos X(h,\theta)+\cos Y(h,\theta)=h,
\]
and, for each fixed $h$, the map
\[
\theta
\longmapsto
\bigl(X(h,\theta),Y(h,\theta)\bigr)
\]
parametrizes $\Gamma_h$ with period $2\pi$ in $\theta$. Thus, $(h,\theta)$ play the role of nonlinear energy--phase coordinates,
analogous to the amplitude--phase coordinates used for the harmonic
oscillator in the previous study~\cite{Llibre:21::}.

\subsection{The first-order averaged system}
\label{subsec:averaged-system-periodic-NH}

We next derive the equations governing the slow evolution of $h$ and
$z$. Along system~\eqref{eq:periodic-NH-epsilon},
\begin{equation}
\label{eq:hdot-periodic-NH}
\dot h
=
\varepsilon a_1\sin^2y\sin z,
\end{equation}
whereas
\begin{equation*}
\label{eq:zdot-periodic-NH}
\dot z
=
\varepsilon b_1(1-2\cos y).
\end{equation*}
Thus $h$ and $z$ vary on the slow $O(\varepsilon)$ time scale.

Since the energy--phase transformation is smooth on the compact
annulus under consideration, the perturbed phase equation has the
form
\begin{equation}
\label{eq:thetadot-perturbed}
\dot\theta
=
\Omega(h)
+
\varepsilon G(h,\theta,z,\varepsilon),
\end{equation}
where $G$ is bounded and $2\pi$--periodic in $\theta$. Because
$\Omega(h)$ is bounded away from zero, $\theta$ remains monotone for
all sufficiently small $\varepsilon>0$ and can therefore be used as
the independent variable.

Using~\eqref{eq:hdot-periodic-NH}--\eqref{eq:thetadot-perturbed}, we
obtain
\begin{align}
\frac{\rmd h}{\rmd\theta}
&=
\varepsilon
\frac{a_1}{\Omega(h)}
\sin^2Y(h,\theta)\sin z
+
O(\varepsilon^2),
\label{eq:dh-dtheta-periodic}
\\[1mm]
\frac{\rmd z}{\rmd\theta}
&=
\varepsilon
\frac{b_1}{\Omega(h)}
\bigl(1-2\cos Y(h,\theta)\bigr)
+
O(\varepsilon^2).
\label{eq:dz-dtheta-periodic}
\end{align}
With
\[
\boldsymbol{u}=(h,z)^T,
\]
these equations take the standard first-order averaging form
\begin{equation}
\label{eq:standard-averaging-periodic}
\frac{\rmd\boldsymbol{u}}{\rmd\theta}
=
\varepsilon\boldsymbol{F}(\theta,\boldsymbol{u})
+
O(\varepsilon^2),
\qquad
\boldsymbol{F}(\theta+2\pi,\boldsymbol{u})
=
\boldsymbol{F}(\theta,\boldsymbol{u}),
\end{equation}
where
\begin{equation*}
\label{eq:F-periodic}
\boldsymbol{F}(\theta,h,z)
=
\begin{pmatrix}
\dfrac{a_1}{\Omega(h)}
\sin^2Y(h,\theta)\sin z
\\[3mm]
\dfrac{b_1}{\Omega(h)}
\bigl(1-2\cos Y(h,\theta)\bigr)
\end{pmatrix}.
\end{equation*}

The first-order averaged vector field is
\begin{equation}
\label{eq:Fbar-definition-periodic}
\overline{\boldsymbol{F}}(h,z)
=
\frac{1}{2\pi}
\int_0^{2\pi}
\boldsymbol{F}(\theta,h,z)\,\rmd\theta.
\end{equation}
Since one period $0\leq\theta\leq2\pi$ corresponds, through
\eqref{eq:s-theta-periodic}, to $0\leq s\leq4K(k)$, we introduce the
orbit average
\begin{equation*}
\label{eq:orbit-average-periodic}
\left\langle g\right\rangle
:=
\frac{1}{4K(k)}
\int_0^{4K(k)}
g(s)\,\rmd s.
\end{equation*}

From~\eqref{eq:XY-elliptic},
\[
\cos Y
=
k'
-
k^2
\frac{\operatorname{sn}(s,k)\operatorname{cn}(s,k)}
{\operatorname{dn}(s,k)}.
\]
The second term has zero average over a complete period because
\[
\frac{\rmd}{\rmd s}
\log\operatorname{dn}(s,k)
=
-k^2
\frac{\operatorname{sn}(s,k)\operatorname{cn}(s,k)}
{\operatorname{dn}(s,k)}.
\]
Consequently,
\begin{equation}
\label{eq:cos-average-periodic}
\left\langle\cos Y\right\rangle
=
k'
=
\frac{h}{2}.
\end{equation}

For the remaining average, we define
\begin{equation*}
\label{eq:A-h-periodic}
A(h)
:=
\left\langle\sin^2Y\right\rangle.
\end{equation*}
Using~\eqref{eq:XY-elliptic}, the mixed term in $\sin^2Y$ again has
zero average, and therefore
\begin{equation*}
\label{eq:A-intermediate-periodic}
A(h)
=
k^2
\left\langle
\operatorname{cn}^2(s,k)
\right\rangle
+
k^2k'^2
\left\langle
\frac{\operatorname{sn}^2(s,k)}
{\operatorname{dn}^2(s,k)}
\right\rangle.
\end{equation*}
The standard complete-period identities
\begin{equation*}
\label{eq:cn2-average-periodic}
\left\langle
\operatorname{cn}^2(s,k)
\right\rangle
=
\frac{1}{k^2}
\left(
\frac{E(k)}{K(k)}-k'^2
\right)
\end{equation*}
and
\begin{equation*}
\label{eq:sndn-average-periodic}
\left\langle
\frac{\operatorname{sn}^2(s,k)}
{\operatorname{dn}^2(s,k)}
\right\rangle
=
\frac{1}{k^2k'^2}
\left(
\frac{E(k)}{K(k)}-k'^2
\right)
\end{equation*}
then yield
\begin{equation}
\label{eq:A-explicit-periodic}
A(h)
=
2\left(
\frac{E(k)}{K(k)}-k'^2
\right)
=
2\left(
\frac{E(k)}{K(k)}-\frac{h^2}{4}
\right).
\end{equation}
These identities follow from standard properties of Jacobi elliptic functions
and complete elliptic integrals~\cite[Chapters 19 and 22]{Olver:2010ouy}.

Since $A(h)$ is the average of $\sin^2Y$ over a nontrivial periodic
orbit,
\begin{equation}
\label{eq:A-positive-periodic}
A(h)>0,
\qquad
0<h<2.
\end{equation}

It is convenient to introduce
\begin{equation}
\label{eq:B-h-periodic}
B(h)
:=
\frac{A(h)}{\Omega(h)}
=
\frac{4}{\pi}
\left(
E(k)-\frac{h^2}{4}K(k)
\right),
\end{equation}
and
\begin{equation}
\label{eq:C-h-periodic}
C(h)
:=
\frac{1-h}{\Omega(h)}
=
\frac{2K(k)}{\pi}(1-h).
\end{equation}
Using~\eqref{eq:cos-average-periodic}, the averaged vector field
\eqref{eq:Fbar-definition-periodic} becomes
\begin{equation}
\label{eq:Fbar-periodic}
\overline{\boldsymbol{F}}(h,z)
=
\begin{pmatrix}
a_1B(h)\sin z\\[1mm]
b_1C(h)
\end{pmatrix}.
\end{equation}

By~\eqref{eq:A-positive-periodic} and~\eqref{eq:T-h-elliptic},
$B(h)>0$ for $0<h<2$, whereas~\eqref{eq:C-h-periodic} implies that
$C(h)$ vanishes only at $h=1$. Hence, modulo $2\pi$ in $z$, the
averaged vector field has precisely two zeros,
\begin{equation}
\label{eq:Pk-periodic}
\boldsymbol{P}_j=(1,j\pi),
\qquad
j=0,1.
\end{equation}

This result has a direct analog in the classical Nos\'e--Hoover
oscillator considered in a previous study~\cite{Llibre:21::}. There, the
unperturbed harmonic oscillator is described by amplitude--phase
coordinates, and the averaged thermostat equation selects the circle
$r^2=2$. In the present system, the harmonic parametrization is
replaced by the nonlinear energy--phase coordinates $(h,\theta)$,
and the corresponding averaged condition
\[
1-2\langle\cos y\rangle_h=0
\]
selects the level $h=1$. Thus, the analogue of the classical circle
$x^2+y^2=2$ is the nonlinear periodic orbit
\begin{equation*}
\label{eq:periodic-analogue-circle}
\cos x+\cos y=1.
\end{equation*}
Moreover, the linear dependence on the thermostat variable in the
classical model is replaced here by the periodic coupling $\sin z$,
which selects the two phases $z=0$ and $z=\pi$ modulo $2\pi$ and
reduces locally to $z$ near $z=0$.

At the selected level $h=1$,
\begin{equation}
\label{eq:k-star-periodic}
k_*=\frac{\sqrt3}{2},
\qquad
k_*'=\frac12.
\end{equation}
For brevity, let
\[
K_*=K(k_*),
\qquad
E_*=E(k_*).
\]
Then
\begin{equation}
\label{eq:T-Omega-star-periodic}
T(1)=4K_*,
\qquad
\Omega(1)=\frac{\pi}{2K_*},
\end{equation}
and
\begin{equation}
\label{eq:A-star-periodic}
A(1)
=
2\left(
\frac{E_*}{K_*}-\frac14
\right).
\end{equation}

To verify that the zeros~\eqref{eq:Pk-periodic} are simple, note that
\begin{equation*}
\label{eq:DFbar-general-periodic}
D\overline{\boldsymbol{F}}(h,z)
=
\begin{pmatrix}
a_1B'(h)\sin z
&
a_1B(h)\cos z
\\[2mm]
b_1C'(h)
&
0
\end{pmatrix}.
\end{equation*}
Since
\begin{equation*}
\label{eq:C-prime-star-periodic}
C'(1)
=
-\frac{1}{\Omega(1)}
=
-\frac{2K_*}{\pi},
\end{equation*}
we obtain
\begin{equation}
\label{eq:DFbar-Pk-periodic}
D\overline{\boldsymbol{F}}(\vP_j)
=
\begin{pmatrix}
0
&
(-1)^j a_1B(1)
\\[2mm]
-\dfrac{b_1}{\Omega(1)}
&
0
\end{pmatrix},\qquad j=0,1.
\end{equation}
Its determinant can be written in the particularly simple form
\begin{equation}
\label{eq:det-DFbar-periodic}
\det D\overline{\boldsymbol{F}}(\vP_j)
=
(-1)^j
\frac{a_1b_1A(1)}{\Omega(1)^2},\qquad j=0,1.
\end{equation}
Since $a_1b_1\neq0$ and $A(1)>0$, both zeros are simple.

\subsection{Periodic orbits near the integrable axes}

The two simple zeros of the averaged vector field identified above
select distinguished members of the two-parameter family of periodic
orbits of the unperturbed system. We now apply the first-order
averaging theorem to show that these selected orbits persist as
periodic solutions of the full system for sufficiently small
$\varepsilon>0$. The following result gives their limiting form as
$\varepsilon\to0$ and the corresponding leading-order periods.

\begin{theorem}
\label{thm:periodic-solutions-periodic-NH}
Consider system~\eqref{eq:periodic_NH} under the scaling
\eqref{eq:ab-scaling-periodic-NH}. Then there exists
$\varepsilon_0>0$ such that, for every
$\varepsilon\in(0,\varepsilon_0)$, the system possesses two periodic
solutions
\begin{equation}
\label{eq:per}
\boldsymbol{x}_j(t,\varepsilon)
=
\bigl(
x_j(t,\varepsilon),
y_j(t,\varepsilon),
z_j(t,\varepsilon)
\bigr),
\qquad
j=0,1.
\end{equation}
We denote the corresponding periodic orbits by
\begin{equation}
\label{eq:Gamma-periodic-NH}
\Gamma_{j,\varepsilon}
=
\left\{
\boldsymbol{x}_j(t,\varepsilon):
t\in\mathbb R
\right\},
\qquad
j=0,1.
\end{equation}
As $\varepsilon\to0$, these orbits bifurcate from
\[
\Gamma_1\times\{0\}
\qquad\text{and}\qquad
\Gamma_1\times\{\pi\},
\]
respectively, where
\[
\Gamma_1
=
\left\{
(x,y)\in\mathbb T^2:
\cos x+\cos y=1
\right\}.
\]
More precisely,
\begin{equation}
\label{eq:periodic-limit-orbits}
\Gamma_{j,\varepsilon}
\longrightarrow
\Gamma_1\times\{j\pi\},
\qquad
j=0,1,
\qquad
\varepsilon\to0.
\end{equation}
Their periods satisfy
\begin{equation}
\label{eq:periods-periodic-NH}
T_{j,\varepsilon}
=
T(1)+O(\varepsilon),
\qquad
j=0,1.
\end{equation}
\end{theorem}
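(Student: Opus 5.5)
The plan is to apply the classical first-order averaging theorem (in the Brouwer-degree form used by Llibre and coauthors) to the standard form \eqref{eq:standard-averaging-periodic}, with $\theta$ as the independent variable and $\boldsymbol{u}=(h,z)$ as the slow variables, and then to translate the resulting $2\pi$-periodic solutions in $\theta$ back into periodic solutions of \eqref{eq:periodic-NH-epsilon} in $t$.

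First, we fix a compact annulus $D=\{h\in[h_-,h_+]\}\times\mathbb S^1$ with $0<h_-<1<h_+<2$. On $D$ the energy--phase coordinates $(h,\theta)$ are regular and analytic, because $K(k)$ and the Jacobi functions depend analytically on $h\in(0,2)$. Also $\Omega(h)\ge\Omega_0>0$ on $D$. We will check that the remainder in \eqref{eq:dh-dtheta-periodic}--\eqref{eq:dz-dtheta-periodic} is $\varepsilon^2 R(\theta,\boldsymbol{u},\varepsilon)$ with $R$ continuous, $2\pi$-periodic in $\theta$, and Lipschitz in $\boldsymbol{u}$ on $D$. This follows by writing
\[
\frac{\rmd\boldsymbol{u}}{\rmd\theta}=\frac{\varepsilon\,(\cdot)}{\Omega(h)+\varepsilon G}
\]
and expanding, using smoothness of $G$ and $\Omega\ge\Omega_0$. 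The hypotheses of the averaging theorem then hold. The averaged field $\overline{\boldsymbol F}$ was computed in \eqref{eq:Fbar-periodic}. Its zeros $\vP_j=(1,j\pi)$ lie in the interior of $D$, and their Jacobian determinants \eqref{eq:det-DFbar-periodic} are nonzero since $a_1b_1\neq0$ and $A(1)>0$. So the Brouwer degree is $\pm1$ on small neighborhoods of each zero. The theorem then yields, for $\varepsilon\in(0,\varepsilon_0)$, solutions $\boldsymbol{u}_j(\theta,\varepsilon)$ that are $2\pi$-periodic in $\theta$ and satisfy $\boldsymbol{u}_j(0,\varepsilon)\to\vP_j$ as $\varepsilon\to0$. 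Uniform $O(\varepsilon)$ closeness on $[0,2\pi]$ follows from $\rmd\boldsymbol{u}/\rmd\theta=O(\varepsilon)$.

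Second, we reconstruct the solution in the original variables: $x_j=X(h_j(\theta),\theta)$, $y_j=Y(h_j(\theta),\theta)$, $z_j=z_j(\theta)$. Since $\boldsymbol{u}_j$ is $2\pi$-periodic in $\theta$ and $X,Y$ are $2\pi$-periodic in $\theta$ on the torus, this curve is closed in $\mathbb T^3$. Recovering $t$ from $\rmd t/\rmd\theta=1/(\Omega(h)+\varepsilon G)$ gives a genuine periodic solution of \eqref{eq:periodic_NH} with period
\[
T_{j,\varepsilon}=\int_0^{2\pi}\frac{\rmd\theta}{\Omega(h_j(\theta))+\varepsilon G}.
\]
Because $h_j=1+O(\varepsilon)$ uniformly, this integral equals $2\pi/\Omega(1)+O(\varepsilon)=T(1)+O(\varepsilon)$ by \eqref{eq:T-Omega-star-periodic}, which gives \eqref{eq:periods-periodic-NH}. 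Likewise $(h_j,z_j)\to(1,j\pi)$ uniformly in $\theta$, and continuity of $(X,Y)$ gives Hausdorff convergence of $\Gamma_{j,\varepsilon}$ to $\Gamma_1\times\{j\pi\}$, which is \eqref{eq:periodic-limit-orbits}.

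The main obstacle is rigor in the change to the angle variable. We need to justify that $G$ in \eqref{eq:thetadot-perturbed} is smooth and bounded on $D$, and that the remainder is uniformly $O(\varepsilon^2)$ and Lipschitz. For this I would derive $\dot\theta$ explicitly from $\theta=\Omega(h)\,t(h;x,y)$, where $t$ is the elliptic time along $\Gamma_h$, and observe that its derivatives with respect to $(x,y)$ are analytic away from the equilibria and the separatrix. A secondary point is that the two solutions are distinct, which holds because their $z$-components stay near $0$ and near $\pi$ respectively.
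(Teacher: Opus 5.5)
Your proposal is correct and follows the same route as the paper. Both apply first-order averaging in the slow variables $(h,z)$ with $\theta$ as the independent variable, use the two simple zeros $\boldsymbol{P}_j=(1,j\pi)$ of $\overline{\boldsymbol{F}}$ guaranteed by the nonvanishing determinant, and reconstruct the orbits in $\mathbb{T}^3$ with period $2\pi/\Omega(1)+O(\varepsilon)=T(1)+O(\varepsilon)$. Your extra checks only make explicit what the paper's proof uses implicitly: smoothness of $G$, the Lipschitz $O(\varepsilon^2)$ remainder, and the explicit period integral $\int_0^{2\pi}\mathrm{d}\theta/(\Omega(h_j)+\varepsilon G)$. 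For the $O(\varepsilon)$ period estimate, note that the rate $h_j=1+O(\varepsilon)$ comes from the nonsingular Jacobian at $\boldsymbol{P}_j$, not from the bare Brouwer-degree statement.
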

\begin{proof}
Equations~\eqref{eq:dh-dtheta-periodic} and
\eqref{eq:dz-dtheta-periodic} put the slow dynamics into the standard
first-order averaging form~\eqref{eq:standard-averaging-periodic}.
The corresponding averaged vector field is given by in
\eqref{eq:Fbar-periodic}. Its only zeros in the annulus under
consideration are~\eqref{eq:Pk-periodic}. By~\eqref{eq:det-DFbar-periodic}, both
zeros are simple. Hence, by the first-order averaging theorem, for
every sufficiently small $\varepsilon>0$, each $\vP_j$ gives rise to a
$2\pi$--periodic solution of the $\theta$--system and, consequently,
to a periodic solution
$
\boldsymbol{x}_j(t,\varepsilon)$~\eqref{eq:per}
of the original system~\eqref{eq:periodic_NH}.

We now identify the unperturbed periodic orbits from which these
solutions bifurcate. Since both zeros $P_0$ and $P_1$ lie on the
selected level $h=1$, we use the notation introduced in
\eqref{eq:k-star-periodic}--\eqref{eq:A-star-periodic}. Choosing the
origin of time so that the elliptic parameter $s$ coincides with $t$,
the parametrization~\eqref{eq:XY-elliptic} restricted to $h=1$ yields
\begin{align}
\cos x_*(t)
&=
\frac12
+
\frac34
\frac{
\operatorname{sn}(t,k_*)
\operatorname{cn}(t,k_*)
}{
\operatorname{dn}(t,k_*)
},
\label{eq:x0-cos-periodic-NH}
\\[1mm]
\sin x_*(t)
&=
-\frac{\sqrt3}{2}
\operatorname{cn}(t,k_*)
+
\frac{\sqrt3}{4}
\frac{
\operatorname{sn}(t,k_*)
}{
\operatorname{dn}(t,k_*)
},
\label{eq:x0-sin-periodic-NH}
\\[1mm]
\cos y_*(t)
&=
\frac12
-
\frac34
\frac{
\operatorname{sn}(t,k_*)
\operatorname{cn}(t,k_*)
}{
\operatorname{dn}(t,k_*)
},
\label{eq:ystar-cos-periodic-NH}
\\[1mm]
\sin y_*(t)
&=
\frac{\sqrt3}{2}
\operatorname{cn}(t,k_*)
+
\frac{\sqrt3}{4}
\frac{
\operatorname{sn}(t,k_*)
}{
\operatorname{dn}(t,k_*)
}.
\label{eq:y0-sin-periodic-NH}
\end{align}
In particular,
\[
\cos x_*(t)+\cos y_*(t)=1,
\]
so $(x_*(t),y_*(t))$ parametrizes the unperturbed periodic orbit
$\Gamma_1$.

The two limiting periodic solutions corresponding to $P_0$ and
$P_1$ are therefore
\begin{equation}
\label{eq:x00-periodic-NH}
\boldsymbol{x}_0(t,0)
=
\bigl(
x_*(t),
y_*(t),
0
\bigr),
\end{equation}
and
\begin{equation}
\label{eq:x10-periodic-NH}
\boldsymbol{x}_1(t,0)
=
\bigl(
x_*(t),
y_*(t),
\pi
\bigr),
\end{equation}
respectively. The corresponding unperturbed orbits are
\[
\Gamma_{0,0}
=
\Gamma_1\times\{0\},
\qquad
\Gamma_{1,0}
=
\Gamma_1\times\{\pi\}.
\]
Thus, the periodic orbits obtained by averaging satisfy
\[
\Gamma_{j,\varepsilon}
\longrightarrow
\Gamma_1\times\{j\pi\},
\qquad
j=0,1,
\qquad
\varepsilon\to0.
\]
Equivalently, after a suitable choice of phase along each periodic
solution,
\[
\begin{aligned}
x_j(t,\varepsilon)
&=
x_*(t)+O(\varepsilon),
\\
y_j(t,\varepsilon)
&=
y_*(t)+O(\varepsilon),
\\
z_j(t,\varepsilon)
&=
j\pi+O(\varepsilon),
\end{aligned}
\qquad
j=0,1,
\]
uniformly over one period.

Finally, along either branch, the fast angular frequency converges to
$\Omega(1)$. Therefore,
\[
T_{j,\varepsilon}
=
\frac{2\pi}{\Omega(1)}
+
O(\varepsilon)
=
T(1)+O(\varepsilon),
\qquad
j=0,1,
\]
which proves~\eqref{eq:periods-periodic-NH}.

\end{proof}

\section{Phase portrait and bifurcating periodic orbits}
\label{sec:phase-portrait-periodic-NH}

The averaged vector field~\eqref{eq:Fbar-periodic} provides a
two-dimensional description of the slow dynamics near the
intersection $a=b=0$ of the two integrable parameter axes. Its two
equilibria~\eqref{eq:Pk-periodic} correspond, by
Theorem~\ref{thm:periodic-solutions-periodic-NH}, to the two periodic
branches bifurcating from the degenerate unperturbed family. We first
determine the local character of these equilibria.

\begin{theorem}
\label{thm:periodic-NH-averaged-type}
Assume $a_1b_1\neq0$. The eigenvalues of the Jacobian matrix of the
averaged vector field at $\vP_j$, $j=0,1$, are
\begin{equation}
\label{eq:periodic-NH-averaged-eigenvalues}
\lambda_{j,\pm}
=
\pm
\frac{1}{\Omega(1)}
\sqrt{-(-1)^j a_1b_1A(1)}.
\end{equation}
Consequently, if $a_1b_1>0$, then $P_0$ is a center and $P_1$ is a
hyperbolic saddle. If $a_1b_1<0$, their roles are reversed.
\end{theorem}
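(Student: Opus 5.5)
The plan is to read the eigenvalues directly off the Jacobian~\eqref{eq:DFbar-Pk-periodic}, which has already been computed. At $\vP_j$ that matrix is anti-diagonal, with off-diagonal entries $(-1)^j a_1B(1)$ and $-b_1/\Omega(1)$. Its trace is therefore zero, and its characteristic polynomial is
\[
\lambda^2-\det D\overline{\boldsymbol{F}}(\vP_j)=0 .
\]
Substituting~\eqref{eq:det-DFbar-periodic}, together with $B(1)=A(1)/\Omega(1)$ from~\eqref{eq:B-h-periodic}, gives
\[
\lambda^2=-(-1)^j\frac{a_1b_1A(1)}{\Omega(1)^2}.
\]
Since $\Omega(1)>0$, taking square roots produces exactly~\eqref{eq:periodic-NH-averaged-eigenvalues}.

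The classification then follows by sign analysis, using $A(1)>0$ from~\eqref{eq:A-positive-periodic}. This positivity can also be checked directly from~\eqref{eq:A-star-periodic}, since $E_*/K_*>k_*'^2=1/4$.

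If $a_1b_1>0$, then at $j=0$ the radicand is negative, so the eigenvalues are purely imaginary and nonzero, and the linearization is a center. At $j=1$ the radicand is positive, so the eigenvalues are real, nonzero and of opposite signs, which gives a hyperbolic saddle. If $a_1b_1<0$, the factor $(-1)^j$ exchanges the two cases.

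The only delicate point is the word \emph{center} for $\vP_0$. Purely imaginary eigenvalues show only that the point is a linear center. The saddle conclusion is robust by hyperbolicity, but the center conclusion needs a nonlinear argument. I would first state the result at the level of the linearization. To justify the nonlinear center, I would use that $\overline{\boldsymbol{F}}$ has the separable structure $(a_1B(h)\sin z,\;b_1C(h))$, which admits the explicit first integral
\[
\mathcal{I}(h,z)=a_1\cos z+b_1\int^{h}\frac{C(\sigma)}{B(\sigma)}\,\rmd\sigma .
\]
One checks that $\mathcal{I}$ is conserved by differentiating along the flow, since $\dot{\mathcal I}=-a_1\sin z\,b_1C+b_1\frac{C}{B}\,a_1B\sin z=0$.

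Its Hessian at $\vP_0$ is $\operatorname{diag}\!\bigl(b_1C'(1)/B(1),\,-a_1\bigr)$, where $C'(1)<0$ and $B(1)>0$. When $a_1b_1>0$ the two diagonal entries have the same sign, so the Hessian is definite. Hence $\vP_0$ is a nondegenerate extremum of $\mathcal{I}$, its nearby level sets are closed curves, and $\vP_0$ is a genuine nonlinear center. The same argument at $\vP_1$ produces an indefinite Hessian, consistent with the saddle.
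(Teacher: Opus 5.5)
Your proof is correct and follows the paper's argument. The eigenvalues come from the traceless anti-diagonal Jacobian, and the nonlinear center is certified by a first integral with a definite Hessian at $\vP_0$. Your $\mathcal I$ is just $-\mathcal H_{\rm av}$ up to an additive constant; the paper reaches it through the time reparametrization $\rmd\tau=B(h)\,\rmd\theta$, whereas you check conservation by differentiating directly.

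There is one slip. For a traceless $2\times2$ matrix the characteristic polynomial is $\lambda^2+\det D\overline{\boldsymbol{F}}(\vP_j)$, not $\lambda^2-\det D\overline{\boldsymbol{F}}(\vP_j)$. Your next displayed formula for $\lambda^2$ is nevertheless the correct one, so nothing downstream is affected.
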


\begin{proof}
The characteristic equation of the matrix
\eqref{eq:DFbar-Pk-periodic} immediately gives
\eqref{eq:periodic-NH-averaged-eigenvalues}.

To distinguish a nonlinear center from a merely linearly elliptic
equilibrium, we use the Hamiltonian structure of the averaged
dynamics. Since $B(h)>0$ for $0<h<2$, the positive time
reparametrization
\[
\rmd\tau=B(h)\,\rmd\theta
\]
does not change the trajectories in the $(h,z)$--plane. Up to the
common factor $\varepsilon$, the averaged system is therefore
equivalent to
\begin{equation}
\label{eq:averaged-hz-rescaled}
\frac{\rmd h}{\rmd\tau}
=
a_1\sin z,
\qquad
\frac{\rmd z}{\rmd\tau}
=
b_1\frac{C(h)}{B(h)}.
\end{equation}
This system is Hamiltonian with
\begin{equation}
\label{eq:averaged-hz-Hamiltonian}
\mathcal H_{\rm av}(h,z)
=
-a_1\cos z
-
b_1\int_1^h
\frac{C(s)}{B(s)}\,\rmd s.
\end{equation}
Indeed, its equations have the canonical form
\[
\frac{\rmd h}{\rmd\tau}
=
\frac{\partial\mathcal H_{\rm av}}{\partial z},
\qquad
\frac{\rmd z}{\rmd\tau}
=
-\frac{\partial\mathcal H_{\rm av}}{\partial h}.
\]

Using the definitions~\eqref{eq:B-h-periodic} and
\eqref{eq:C-h-periodic}, together with $C(1)=0$, we obtain
\begin{equation}
\label{eq:Hessian-Hav-Pk}
\det D^2\mathcal H_{\rm av}(\vP_j)
=
(-1)^j\frac{a_1b_1}{A(1)},\qquad j=0,1.
\end{equation}
Hence, for $a_1b_1>0$, $P_0$ is a nondegenerate extremum of
$\mathcal H_{\rm av}$ and therefore a center, whereas $P_1$ is a
saddle. For $a_1b_1<0$, the classification is reversed.

\end{proof}

\begin{figure*}[t]
\centering
\subfigure[Phase portrait of the averaged system]{
\includegraphics[width=0.41\textwidth]{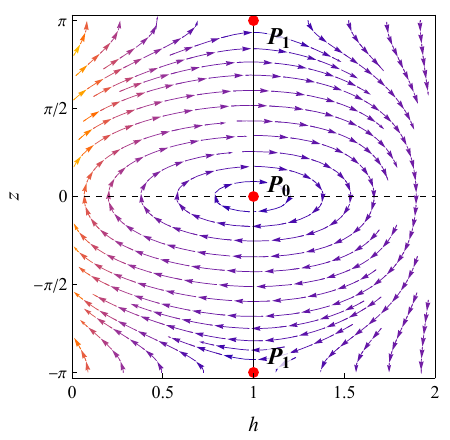}}
\hspace{1cm}
\subfigure[Poincar\'e section of the trigonometric Nos\'e--Hoover
oscillator]{
\includegraphics[width=0.4\textwidth]{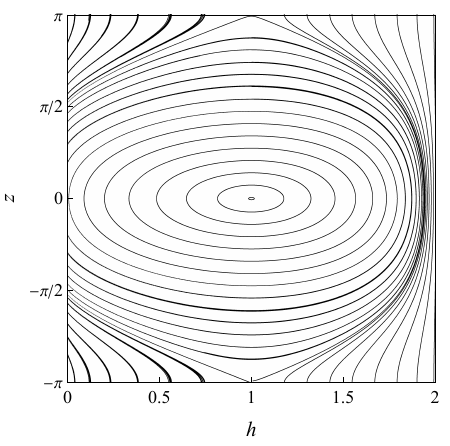}}
\caption{(Color online)
Comparison of the first-order averaged dynamics with the Poincar\'e
section of the trigonometric Nos\'e--Hoover oscillator for the
parameters~\eqref{eq:parameters-numerical-averaging-NH}.
(a) Phase portrait of the averaged system~\eqref{eq:Fbar-periodic}, with a center at $P_0$
and a hyperbolic saddle at $P_1$; the boundaries $z=\pm\pi$ are
identified.
(b) Poincar\'e section of the  trigonometric Nos\'e--Hoover
oscillator~\eqref{eq:periodic_NH}, defined
by~\eqref{eq:Poincare-section-x0-NH} and represented in the slow
coordinates~\eqref{eq:Poincare-hz-map}, showing the corresponding
center--saddle structure.}
\label{fig:Poincare-averaged-comparison-NH}
\end{figure*}
We now compare the phase portrait of the averaged system with a
Poincar\'e section of the full trigonometric Nos\'e--Hoover
oscillator. We take
\begin{equation}
\label{eq:parameters-numerical-averaging-NH}
a=b=10^{-3},
\end{equation}
corresponding through~\eqref{eq:ab-scaling-periodic-NH} to
$\varepsilon=10^{-3}$ and $a_1=b_1=1$. The factor $\varepsilon$
changes only the time scale of the first-order averaged dynamics and
therefore does not affect its phase portrait.

In this case, Theorem~\ref{thm:periodic-NH-averaged-type} gives a
center at $P_0$ and a hyperbolic saddle at $P_1$. Using
\eqref{eq:T-Omega-star-periodic} and~\eqref{eq:A-star-periodic}, the
corresponding eigenvalues can be written explicitly as
\begin{equation}
\label{eq:eigenvalues-star-periodic-NH}
\begin{aligned}
\lambda_{0,\pm}
&=
\pm\rmi\,
\frac{2K_*}{\pi}
\sqrt{
2\left(
\frac{E_*}{K_*}-\frac14
\right)
},
\\[1mm]
\lambda_{1,\pm}
&=
\pm
\frac{2K_*}{\pi}
\sqrt{
2\left(
\frac{E_*}{K_*}-\frac14
\right)
}.
\end{aligned}
\end{equation}
Numerically,
\begin{equation*}
\lambda_{0,\pm}
\simeq
\pm1.08376\,\rmi,
\qquad
\lambda_{1,\pm}
\simeq
\pm1.08376.
\end{equation*}

For the full system, we consider the Poincar\'e section
\begin{equation}
\label{eq:Poincare-section-x0-NH}
x=0\pmod{2\pi}.
\end{equation}
To facilitate a direct comparison with the averaged dynamics, the
intersections are represented in the slow coordinates $(h,z)$.
Since $H=\cos x+\cos y$, on
\eqref{eq:Poincare-section-x0-NH} this gives
\begin{equation}
\label{eq:Poincare-hz-map}
(h,z)=(1+\cos y,z),
\end{equation}
where $z$ is taken modulo $2\pi$ and represented in
$[-\pi,\pi)$.
\begin{figure*}[t]
\centering
\subfigure[Three-dimensional representation.]{
\includegraphics[width=0.45\textwidth]{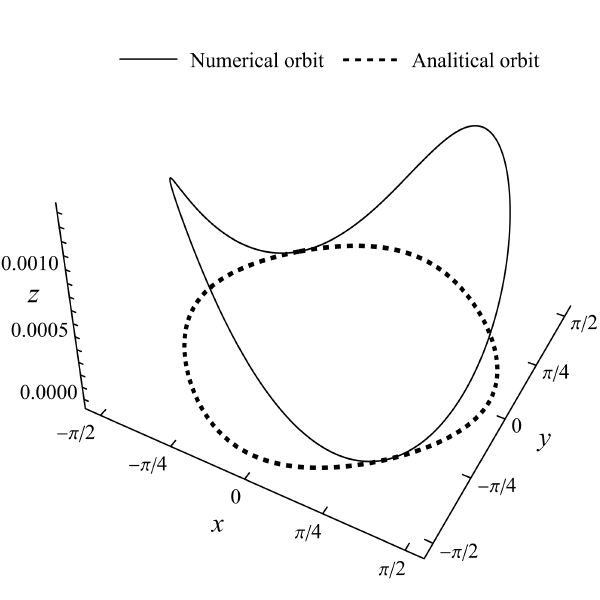}}\hspace{1cm}
\subfigure[Projection onto the $(x,y)$-plane.]{
\includegraphics[width=0.35\textwidth]{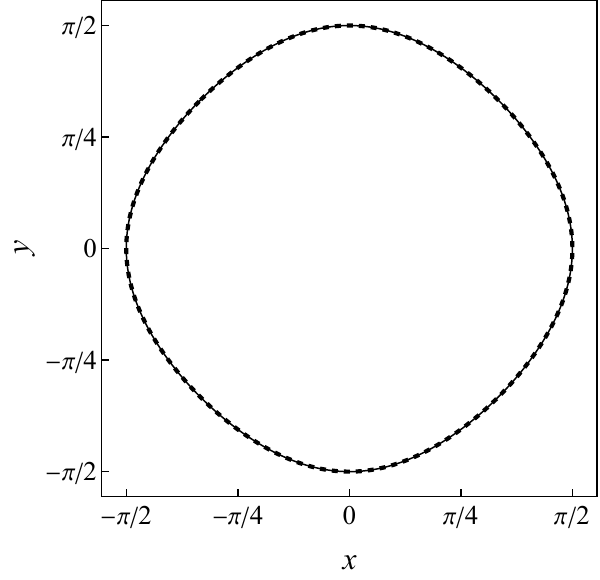}}
\caption{Comparison of a numerical trajectory of the full
trigonometric Nos\'e--Hoover oscillator for $a=b=10^{-3}$ (solid
curve) with the analytically parametrized limiting periodic orbit
$\Gamma_1\times\{0\}$ (dashed curve) associated with the branch
$\Gamma_{0,\varepsilon}$ selected by $P_0$:
(a) three-dimensional representation and
(b) projection onto the $(x,y)$-plane.}
\label{fig:periodic-orbit-NH}
\end{figure*}
The unperturbed orbit $\Gamma_1$ intersects this section at
$y=\pm\pi/2$. Both intersections are mapped to $h=1$, so that the
two crossings of the same unperturbed orbit have the same energy
coordinate and correspond directly to the energy level selected by
the averaged system.

Fig.~\ref{fig:Poincare-averaged-comparison-NH}(a) shows the phase
portrait of the averaged vector field~\eqref{eq:Fbar-periodic} on
the cylinder $(0,2)\times\mathbb S^1$. The center $P_0$ is
surrounded by closed level curves of $\mathcal H_{\rm av}$, whereas
the stable and unstable separatrices of $P_1$ form the corresponding
hyperbolic structure. Since $z$ is an angular variable, $z=\pi$ and
$z=-\pi$ represent the same point on the cylinder.

Fig.~\ref{fig:Poincare-averaged-comparison-NH}(b) shows the corresponding
Poincar\'e section of the original system~\eqref{eq:periodic_NH} in the
same $(h,z)$ coordinates. The regular invariant curves surrounding the
location corresponding to $P_0$ are consistent with the center predicted
by the averaged dynamics. In contrast, the neighborhood of $P_1$
exhibits the hyperbolic structure associated with a saddle, with
separatrix-like branches separating the surrounding regions of motion.
Thus, the Poincar\'e section reproduces the local center--saddle
organization predicted by the averaged system.

The comparison also illustrates the selection mechanism underlying
Theorem~\ref{thm:periodic-solutions-periodic-NH}. At
$\varepsilon=0$, the periodic orbits form the two-parameter family
$\Gamma_h\times\{z_0\}$. The first-order averaged dynamics selects
the energy level $h=1$ and the two phases $z=0$ and $z=\pi$, thereby
selecting the periodic branches $\Gamma_{0,\varepsilon}$ and
$\Gamma_{1,\varepsilon}$ from the degenerate unperturbed family.

For $a_1=b_1=1$, these branches correspond, respectively, to the
center $P_0$ and the hyperbolic saddle $P_1$ of the averaged system.
The elliptic structure around $\Gamma_{0,\varepsilon}$ observed in
the Poincar\'e section is consistent with the first-order averaged
dynamics, but this observation alone does not constitute a complete
nonlinear stability result for the corresponding periodic orbit of
the full three-dimensional system.

Fig.~\ref{fig:periodic-orbit-NH} illustrates the persistence of the
periodic orbit associated with the branch $\Gamma_{0,\varepsilon}$
for the same parameter values as in
Fig.~\ref{fig:Poincare-averaged-comparison-NH}. The solid curve is a
numerical trajectory of system~\eqref{eq:periodic_NH} with the initial
condition
\[
(x(0),y(0),z(0))
=
\left(-\frac{\pi}{3},\frac{\pi}{3},0\right),
\]
which lies on the limiting periodic orbit. The dashed curve is the
analytical parametrization of $\Gamma_1\times\{0\}$ given by in
\eqref{eq:x0-cos-periodic-NH}--\eqref{eq:y0-sin-periodic-NH}.

At $\varepsilon=0$, this orbit has period $T(1)$ given
by~\eqref{eq:T-Omega-star-periodic}. For $a=b=10^{-3}$, the numerical
trajectory remains close to the plane $z=0$, with only a small
oscillation in the $z$ direction, as seen in
Fig.~\ref{fig:periodic-orbit-NH}(a). Its $(x,y)$-projection is
practically indistinguishable, on the scale of
Fig.~\ref{fig:periodic-orbit-NH}(b), from the analytical orbit
$\Gamma_1$. This numerical agreement is consistent with the
persistence of the nearby periodic branch $\Gamma_{0,\varepsilon}$
established in Theorem~\ref{thm:periodic-solutions-periodic-NH}.

\section{$C^1$ non-integrability}
\label{sec:C1-nonintegrability-periodic-NH}

The periodic solutions obtained in
Theorem~\ref{thm:periodic-solutions-periodic-NH} can be used to
derive a local obstruction to the existence of regular first
integrals. The argument combines the relation between the
linearization of the averaged vector field and the characteristic
multipliers of the periodic solutions obtained by averaging
\cite{LlibreZhang:R3} with the Poincar\'e-type criterion of Llibre
and Valls~\cite{Llibre:11a::}.

Since system~\eqref{eq:periodic_NH} is autonomous, every periodic
orbit has the trivial characteristic multiplier $1$ associated with
the tangent direction to the flow. For the periodic solutions
obtained by first-order averaging, the leading-order behavior of the
remaining characteristic multipliers is determined by the
eigenvalues of the Jacobian matrix of the averaged vector field.
Consequently, their monodromy matrices need not be computed
explicitly.

\begin{theorem}
\label{thm:C1-nonintegrability-periodic-NH}
Consider system~\eqref{eq:periodic_NH} under the scaling
\eqref{eq:ab-scaling-periodic-NH}. For all sufficiently small
$\varepsilon>0$, the system admits no nonconstant first integral
$F\in C^1$ defined in a neighborhood of either periodic orbit
$\Gamma_{j,\varepsilon}$, $j=0,1$, of
Theorem~\ref{thm:periodic-solutions-periodic-NH}, such that
\[
\nabla F(\vx)\neq\boldsymbol{0}
\qquad
\text{for all }\vx\in\Gamma_{j,\varepsilon}.
\]
\end{theorem}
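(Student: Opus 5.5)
The plan is to combine two ingredients: the relation between the averaged Jacobian and the characteristic multipliers (Llibre--Zhang~\cite{LlibreZhang:R3}), and the Poincaré-type criterion of Llibre and Valls~\cite{Llibre:11a::}. The criterion states that if an autonomous system in $\mathbb R^3$, or locally on $\mathbb T^3$, has a $C^1$ first integral $F$ with $\nabla F\neq\boldsymbol 0$ along a periodic orbit, then $1$ is a characteristic multiplier of that orbit with multiplicity at least two. Indeed, $\nabla F$ along the orbit is a nonzero periodic solution of the adjoint variational equation. This yields a left eigenvector of the monodromy matrix with eigenvalue $1$, independent of the trivial flow direction. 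The proof therefore reduces to showing that, for small $\varepsilon>0$, the orbits $\Gamma_{j,\varepsilon}$ have exactly one multiplier equal to $1$.

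\textbf{Multipliers from the averaged system.} Under the scaling~\eqref{eq:ab-scaling-periodic-NH}, the system is written with $\theta$ as independent variable in the form~\eqref{eq:standard-averaging-periodic}. The periodic solution of the $\theta$-system associated with the simple zero $\vP_j$ then has a $2\times2$ monodromy matrix of the form
\[
M_j(\varepsilon)=I+2\pi\varepsilon\,D\overline{\boldsymbol F}(\vP_j)+O(\varepsilon^2).
\]
This is the standard first-order averaging result of~\cite{LlibreZhang:R3}, obtained by differentiating the displacement map $\boldsymbol u_0\mapsto\boldsymbol u(2\pi,\boldsymbol u_0,\varepsilon)-\boldsymbol u_0=2\pi\varepsilon\overline{\boldsymbol F}(\boldsymbol u_0)+O(\varepsilon^2)$. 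Its eigenvalues are $1+2\pi\varepsilon\lambda_{j,\pm}+O(\varepsilon^2)$, with $\lambda_{j,\pm}$ from Theorem~\ref{thm:periodic-NH-averaged-type}. By~\eqref{eq:det-DFbar-periodic}, $\lambda_{j,\pm}\neq0$, since $a_1b_1A(1)\neq0$. Hence for small $\varepsilon$ neither eigenvalue equals $1$.

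\textbf{Transfer to the three-dimensional flow.} The $\theta$-system is the equation of the Poincaré map of the full flow on the transversal section $\theta=\mathrm{const}$. This is legitimate because $\dot\theta=\Omega(h)+O(\varepsilon)$ is bounded away from zero on the compact annulus around $h=1$. The nontrivial multipliers of $\Gamma_{j,\varepsilon}$ are the eigenvalues of the derivative of this Poincaré map, which is $M_j(\varepsilon)$ up to smooth conjugation by the energy--phase change of coordinates. The monodromy matrix of $\Gamma_{j,\varepsilon}$ therefore has eigenvalues $1$ (flow direction) and $1+2\pi\varepsilon\lambda_{j,\pm}+O(\varepsilon^2)\neq1$.

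\textbf{Conclusion.} The eigenvalue $1$ thus has multiplicity exactly one. Suppose a $C^1$ first integral $F$ with $\nabla F\neq\boldsymbol0$ on $\Gamma_{j,\varepsilon}$ existed. Then $\nabla F$ along the orbit is invariant under the transposed monodromy and is orthogonal to the flow vector, since $\nabla F\cdot\vv=0$. This forces a second independent eigenvalue $1$ of the Poincaré-map derivative, which is a contradiction. The case distinction of Theorem~\ref{thm:periodic-NH-averaged-type} (center versus saddle) is irrelevant; the argument needs only that the eigenvalues are nonzero. The main obstacle is the transfer step: justifying that the $O(\varepsilon^2)$ remainders are uniform and that the $\theta$-reparametrization does not alter multipliers. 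I would handle this by citing~\cite{LlibreZhang:R3} directly, after checking that the vector field in $(h,\theta,z)$ coordinates is $C^2$ on a neighborhood of $h=1$, where $0<h<2$.
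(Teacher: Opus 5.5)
Your proposal is correct and follows essentially the same route as the paper. The paper argues that the nonzero eigenvalues $\lambda_{j,\pm}$ of $D\overline{\boldsymbol F}(\vP_j)$ give, via Llibre--Zhang, nontrivial multipliers $1+2\pi\varepsilon\lambda_{j,\pm}+O(\varepsilon^2)\neq1$, so the multiplier $1$ is simple, which contradicts the Poincar\'e--Llibre--Valls criterion for a $C^1$ first integral with nonvanishing gradient on $\Gamma_{j,\varepsilon}$. Your added details (the adjoint-variational justification of the criterion and the transfer through the Poincar\'e map on $\theta=\mathrm{const}$) only make explicit what the paper's proof takes from the cited references.
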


\begin{proof}
By~\eqref{eq:periodic-NH-averaged-eigenvalues} and the assumptions
$a_1b_1\neq0$, $A(1)>0$, and $\Omega(1)>0$, both eigenvalues
$\lambda_{j,\pm}$ of the Jacobian matrix of the averaged vector field
at $\vP_j$ are nonzero.

The relation between the averaged linearization and the
characteristic multipliers of the corresponding periodic solution
\cite{LlibreZhang:R3} gives
\[
\mu_{j,\pm}(\varepsilon)
=
1+2\pi\varepsilon\lambda_{j,\pm}
+O(\varepsilon^2),\qquad j=0,1.
\]
Hence
\[
\mu_{j,\pm}(\varepsilon)\neq1
\]
for all sufficiently small $\varepsilon>0$. Since the third
characteristic multiplier is the trivial multiplier $1$ associated
with the flow direction, the multiplier $1$ is simple.

Suppose that a nonconstant first integral $F\in C^1$ is defined in a
neighborhood of $\Gamma_{j,\varepsilon}$ and satisfies
\[
\nabla F(\vx)\neq\boldsymbol{0}
\qquad
\text{on }\Gamma_{j,\varepsilon}.
\]
By the Poincar\'e--Llibre--Valls criterion
\cite{Llibre:11a::}, the existence of such a regular first integral
would imply that the characteristic multiplier $1$ has algebraic
multiplicity at least two. This contradicts its simplicity established
above. Therefore no such first integral exists.

\end{proof}

\begin{remark}
Theorem~\ref{thm:C1-nonintegrability-periodic-NH} provides a local
and perturbative obstruction to $C^1$ integrability near the two
periodic orbits selected by averaging. It applies along the parameter
directions specified by~\eqref{eq:ab-scaling-periodic-NH} as
$\varepsilon\to0^+$.

The regularity condition on the first integral is essential: the
theorem does not exclude a $C^1$ first integral whose gradient
vanishes somewhere on $\Gamma_{j,\varepsilon}$. In the next section,
differential Galois theory provides a complementary obstruction to
meromorphic $B$-integrability in a neighborhood of a particular
non-equilibrium phase curve. This obstruction is non-perturbative
with respect to the parameters and does not require $a$ and $b$ to
be small.
\end{remark}

\section{Meromorphic non-integrability in the Bogoyavlensky sense}
\label{subsec:galois_nonintegrability}
\subsection{Differential-Galois integrability criteria}
\label{subsec:galois_criteria}
Differential Galois theory provides powerful obstructions to
meromorphic integrability through the analysis of variational
equations along a particular non-equilibrium solution. For
Hamiltonian systems, this approach is developed within the
Morales--Ramis theory and its subsequent
extensions~\cite{Morales:99::,Morales:01::}. It has been successfully
used to prove the meromorphic non-integrability of numerous
Hamiltonian systems arising in rigid body dynamics, celestial
mechanics, and other problems of mathematical
physics~\cite{Maciejewski:11::,Maciejewski:05::,10.1063/5.0200592,
mp:13::e,Combot:18::,Szuminski2024,Szuminski2025,Szuminski2026}.

Although the Morales--Ramis theory was originally developed for
Hamiltonian systems, the corresponding Galoisian obstruction was
extended to non-Hamiltonian systems by Ayoul and
Zung~\cite{Ayoul:10::}, using the notion of integrability introduced
by Bogoyavlensky. Since then, differential Galois methods have been
applied to the integrability analysis of various non-Hamiltonian
dynamical systems. In particular, effective integrability obstructions have been established for broad classes of nonlinear three-dimensional differential systems~\cite{Szuminski:20c::,Li:12::,HuangShiLi2017}. These methods have
also been applied directly to the classical Nos\'e--Hoover
oscillator~\cite{LiShiYang2022}.

The trigonometric Nos\'e--Hoover oscillator considered here is also
non-Hamiltonian, so the usual notion of Liouville integrability is not
directly applicable. We therefore work with meromorphic
$B$-integrability in the sense of Bogoyavlensky~\cite{Bogoyavlenski:96a::,Bogoyavlenski:96b::},
which provides the setting for the Ayoul--Zung extension~\cite{Ayoul:10::,MaciejewskiPrzybylska:09::}
used below.

Consider an autonomous system
\begin{equation}
\label{eq:general-nonham-system}
\dot{\boldsymbol{x}}
=
\boldsymbol{v}(\boldsymbol{x}),
\qquad
\boldsymbol{x}\in\mathbb C^n.
\end{equation}
For a non-Hamiltonian system, complete integrability corresponds to
the existence of $n-1$ functionally independent first integrals,
whereas the more general notion of $B$-integrability allows first
integrals to be complemented by commuting vector fields.

More precisely, system~\eqref{eq:general-nonham-system} is said to be
meromorphically $B$-integrable if there exist $k$ functionally
independent meromorphic first integrals
\[
F_1,\ldots,F_k,
\]
and $n-k$ linearly independent meromorphic vector fields
\[
\boldsymbol{u}_1=\boldsymbol{v},
\boldsymbol{u}_2,\ldots,\boldsymbol{u}_{n-k},
\]
such that
\[
[\boldsymbol{u}_i,\boldsymbol{u}_j]=0,
\qquad
\boldsymbol{u}_j[F_i]=0,
\]
for all admissible $i$ and $j$. Thus, the vector fields commute
pairwise and the functions $F_i$ are common first integrals of all
of them.

The connection between $B$-integrability and differential Galois
theory is provided by the result of Ayoul and
Zung~\cite{Ayoul:10::}. They showed that a meromorphically
$B$-integrable non-Hamiltonian system admits a cotangent lift
which is meromorphically integrable in the Liouville sense. As a
consequence, the Morales--Ramis obstruction extends to this class
of systems. A detailed discussion of $B$-integrability and its
relation to differential Galois theory is provided in the Appendix
of our previous work~\cite{Szuminski:18::}.

\begin{theorem}[Ayoul--Zung]
\label{thm:Ayoul-Zung-periodic-NH}
Assume that system~\eqref{eq:general-nonham-system} is
meromorphically $B$-integrable in a neighborhood of a
non-equilibrium phase curve $\Gamma$. Then the identity component
of the differential Galois group of the variational equations along
$\Gamma$ is Abelian.
\end{theorem}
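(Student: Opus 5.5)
The plan follows the Ayoul--Zung strategy: lift the $B$-integrable system to a Liouville-integrable Hamiltonian system on the cotangent bundle, then apply the Morales--Ramis theorem to that lift.

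\emph{Step 1: the cotangent lift.} Let $F_1,\dots,F_k$ and commuting fields $\boldsymbol{u}_1=\boldsymbol{v},\dots,\boldsymbol{u}_{n-k}$ realize $B$-integrability near $\Gamma$. On $T^*\mathbb C^n$, with coordinates $(\boldsymbol{x},\boldsymbol{y})$ and the canonical symplectic form, introduce the fibrewise-linear Hamiltonians
\[
H_{\boldsymbol{u}}(\boldsymbol{x},\boldsymbol{y})=\langle\boldsymbol{y},\boldsymbol{u}(\boldsymbol{x})\rangle .
\]
The Hamiltonian flow of $H_{\boldsymbol{v}}$ is
\[
\dot{\boldsymbol{x}}=\boldsymbol{v}(\boldsymbol{x}),\qquad \dot{\boldsymbol{y}}=-D\boldsymbol{v}(\boldsymbol{x})^T\boldsymbol{y},
\]
so it projects onto the original system.

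\emph{Step 2: $n$ Poisson-commuting integrals.} Take
\[
H_{\boldsymbol{u}_1},\dots,H_{\boldsymbol{u}_{n-k}},\quad F_1\circ\pi,\dots,F_k\circ\pi .
\]
The brackets are
\[
\{H_{\boldsymbol{u}_i},H_{\boldsymbol{u}_j}\}=\pm H_{[\boldsymbol{u}_i,\boldsymbol{u}_j]}=0,\qquad \{H_{\boldsymbol{u}_i},F_l\circ\pi\}=\pm\boldsymbol{u}_i[F_l]=0,
\]
and functions of $\boldsymbol{x}$ alone Poisson-commute. These $n$ functions are meromorphic. They are functionally independent near the lifted curve because the $F_l$ are independent on the base, while the $H_{\boldsymbol{u}_i}$ are linear in $\boldsymbol{y}$ with coefficient vectors $\boldsymbol{u}_i(\boldsymbol{x})$ that are linearly independent. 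The lift is therefore meromorphically Liouville integrable near
\[
\tilde\Gamma=\{(\boldsymbol{x}(t),0)\},
\]
which is a non-equilibrium phase curve of the lifted system.

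\emph{Step 3: apply Morales--Ramis.} By the Morales--Ramis theorem, the identity component of the Galois group of the variational equations of the lifted system along $\tilde\Gamma$ is Abelian. Along $\boldsymbol{y}=0$, these variational equations are block triangular:
\[
\dot{\boldsymbol{\xi}}=D\boldsymbol{v}\,\boldsymbol{\xi},\qquad \dot{\boldsymbol{\eta}}=-D\boldsymbol{v}^T\boldsymbol{\eta}.
\]
This is the direct sum of the original variational equation and its dual (adjoint) system. The solutions of the dual are generated over the base field by those of the original system: the fundamental matrix of the dual is the inverse transpose, so no new transcendentals are added. Hence the Picard--Vessiot extensions coincide and the differential Galois groups are isomorphic. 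The identity component for the original variational equations along $\Gamma$ is then Abelian, as claimed.

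\emph{Main obstacle.} I expect the main obstacle to be Step 2's functional independence and the non-degeneracy of the integrals along $\tilde\Gamma$, since $\boldsymbol{y}=0$ is a special locus where the $H_{\boldsymbol{u}_i}$ vanish. The fix I would try first is to use the version of Morales--Ramis requiring integrals that are only independent in a neighborhood of the curve rather than on it. Failing that, I would restrict to a nearby level $\boldsymbol{y}\neq 0$ and use the linearity of the $H_{\boldsymbol{u}_i}$ in $\boldsymbol{y}$ to transport the conclusion back to $\tilde\Gamma$.
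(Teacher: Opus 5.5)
Your proposal is correct and follows the same route the paper indicates when it cites Ayoul--Zung: lift to $T^*\mathbb C^n$ using the fibrewise-linear Hamiltonians $\langle\boldsymbol{y},\boldsymbol{u}_i(\boldsymbol{x})\rangle$ together with $F_l\circ\pi$, apply Morales--Ramis along the zero-section lift $\tilde\Gamma$, and note that the lifted variational equation is the original one plus its dual, whose Picard--Vessiot extension adds nothing. The obstacle you anticipate is not real: Morales--Ramis only needs the integrals to be independent in a neighborhood of $\tilde\Gamma$, and in any case at $\boldsymbol{y}=0$ one has $\mathrm{d}H_{\boldsymbol{u}_i}=\langle\boldsymbol{u}_i(\boldsymbol{x}),\mathrm{d}\boldsymbol{y}\rangle$, so the $n$ differentials are already independent at generic points of the zero section and your fallback argument is unnecessary.
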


Differential Galois theory also provides an obstruction to the
existence of individual first integrals. In dimension three, the
following criterion will be used~\cite{Li:12::,Szuminski:20c::}.

\begin{theorem}[Li--Shi]
\label{thm:Li-Shi-periodic-NH}
Consider a three-dimensional differential system and a
non-equilibrium phase curve $\Gamma$. If the system possesses a
non-constant meromorphic first integral in a neighborhood of
$\Gamma$, then the identity component of the differential Galois
group of the corresponding normal variational equation is solvable.
\end{theorem}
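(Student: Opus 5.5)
The plan is to follow the Ziglin--Morales--Ramis scheme, specialised to a single first integral in dimension three. The point is that a non-constant meromorphic first integral $F$ near $\Gamma$ produces a non-constant rational invariant of the differential Galois group $G$ of the normal variational equation (NVE). The NVE is a rank-two linear system, so $G\subset\mathrm{GL}(2,\mathbb C)$. A connected algebraic subgroup of $\mathrm{GL}(2,\mathbb C)$ with a non-constant rational invariant must be solvable.

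\emph{Step 1 (leading term of $F$).} Parametrise $\Gamma$ by $\boldsymbol{\varphi}(t)$, with $t$ in a Riemann surface. Let $K$ be the field of meromorphic functions on $\Gamma$, which is the base field of the variational equations. Expand $F(\boldsymbol{\varphi}(t)+\boldsymbol{\xi})$ in the variation $\boldsymbol{\xi}$. Writing $F=P/Q$ locally, I take the lowest-degree homogeneous parts of $P$ and $Q$ and set $f(t,\boldsymbol{\xi})=P_m/Q_l$. This $f$ is rational and homogeneous in $\boldsymbol{\xi}$, with coefficients in $K$.

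Differentiating $\dot F=0$ and keeping the lowest-order terms shows that $f$ is a first integral of the variational equation. Differentiating $F(\boldsymbol{\varphi}(t+s)+\dots)$ in $s$ shows that $f$ is invariant under $\boldsymbol{\xi}\mapsto\boldsymbol{\xi}+c\,\boldsymbol{v}(\boldsymbol{\varphi}(t))$, to leading order. Hence $f$ descends to a rational first integral $\bar f$ of the NVE on the normal bundle $T\mathbb C^3|_\Gamma/\langle\boldsymbol{v}\rangle$.

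\emph{Step 2 (main obstacle: non-constancy).} The leading term may be constant, or may depend only on the tangential component, so Step 1 alone can lose information. I expect this to be the delicate point. I would handle it with Ziglin's lemma.

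First, if $f\equiv c$, replace $F$ by $F-c$ and, if necessary, by $1/(F-c)$. This raises the order, and since $F$ is non-constant the process terminates. It yields a non-constant leading term.

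Second, to exclude dependence on the tangential direction only, I would use that $F$ is constant along $\Gamma$ and along the flow. Working in flow-box coordinates $(t,\boldsymbol{\eta})$ near a regular point of $\Gamma$, $F$ depends only on the transverse variables $\boldsymbol{\eta}$. Its leading term in $\boldsymbol{\eta}$ is then a genuine non-constant function on the normal fibre, and analytic continuation along $\Gamma$ identifies it with $\bar f$.

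\emph{Step 3 (Galois invariant).} Let $\Psi(t)$ be a fundamental matrix of the NVE, with entries in the Picard--Vessiot extension $L\supset K$. Then $I(\boldsymbol{c})=\bar f(t,\Psi(t)\boldsymbol{c})$ is independent of $t$ and is a non-constant rational function of $\boldsymbol{c}\in\mathbb C^2$. Every $\sigma\in G$ fixes $K$ and acts by $\Psi\mapsto\Psi M_\sigma$, which gives $I(M_\sigma\boldsymbol{c})=I(\boldsymbol{c})$. So $I$ is a non-constant rational invariant of $G$, and a fortiori of the identity component $G^0$.

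\emph{Step 4 (group theory).} If $G^0$ were not solvable, it would be a connected non-solvable algebraic subgroup of $\mathrm{GL}(2)$, and hence would contain $\mathrm{SL}(2,\mathbb C)$. But $\mathrm{SL}(2,\mathbb C)$ acts transitively on $\mathbb C^2\setminus\{0\}$, which is dense in $\mathbb C^2$. Therefore every rational invariant of $\mathrm{SL}(2,\mathbb C)$ is constant, contradicting Step 3. Hence $G^0$ is solvable.
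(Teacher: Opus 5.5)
The paper gives no proof of this statement; it quotes it from Li--Shi and Szumi\'nski--Przybylska. Your proposal is a correct, self-contained version of the standard Ziglin-type argument behind those references: leading term of $F$, then a rational first integral of the NVE, then a non-constant rational invariant of $G^0$, hence $G^0\not\supset\mathrm{SL}(2,\mathbb{C})$. For Step~4 you should say why a connected non-solvable $G^0\subset\mathrm{GL}(2,\mathbb{C})$ contains $\mathrm{SL}(2,\mathbb{C})$: its derived group is a connected non-solvable subgroup of $\mathrm{SL}(2,\mathbb{C})$, hence all of it.

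Step~2 can be shortened. After clearing denominators, the lowest-order part of $\boldsymbol{v}(\boldsymbol{\varphi}(t)+\boldsymbol{\xi})\cdot\nabla_{\boldsymbol{\xi}}\bigl[F(\boldsymbol{\varphi}(t)+\boldsymbol{\xi})\bigr]=0$ is exactly $\boldsymbol{v}(\boldsymbol{\varphi}(t))\cdot\nabla_{\boldsymbol{\xi}}f=0$. So $f$ is invariant under tangential translations exactly, not merely to leading order. If $f$ does not depend on $\boldsymbol{\xi}$, it is a constant $c$, being a first integral of the VE. The leading term of $F-c$ then has positive degree and is automatically non-constant. Neither the flow box nor $1/(F-c)$ is needed. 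What you should add instead is that $P_m/Q_l$ is independent of the local representation $F=P/Q$, because leading terms are multiplicative. This is what makes its coefficients globally meromorphic on $\Gamma$.

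Compared with the bare citation, writing the proof out also makes the base field explicit, $K=\mathcal{M}(\Gamma)$. This matters for the paper's use of the theorem. The NVE there is non-singular on $\Gamma$ itself; its singularities sit at $s=0,\infty$. So over $\mathcal{M}(\Gamma)$ its Galois group is the Zariski closure of a cyclic (or trivial) monodromy group, hence abelian. The $\mathrm{SL}(2,\mathbb{C})$ computed over $\mathbb{C}(s)$ can therefore only be exploited by running your argument on $\overline{\Gamma}=\Gamma\cup\{s=0,\infty\}$. That requires the first integral to be meromorphic near those two points as well, so that the coefficients of $f$ lie in $\mathbb{C}(s)$.
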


We shall use both Galoisian obstructions below. By the Ayoul--Zung
theorem, a non-Abelian identity component excludes meromorphic
$B$-integrability. On the other hand, by the Li--Shi theorem, a
non-solvable identity component of the differential Galois group of
the normal variational equation excludes the existence of any
non-constant meromorphic first integral in a neighborhood of
$\Gamma$.

\subsection{Variational equations and differential Galois group}
\label{subsec:NH_NVE_Galois}

For the differential-Galois analysis, we consider the particular
solution from the family~\eqref{eq:particular_periodic} corresponding
to $k=s=0$,
\begin{equation}
\label{eq:NH_particular_solution}
\vvarphi_{0,0}(t)
=
\left(
0,0,z_0-bt
\right)
\pmod{2\pi}.
\end{equation}
For $b\neq0$, this is a non-equilibrium solution lying on the invariant
phase curve
\begin{equation}
\label{eq:NH_invariant_curve}
\Gamma
=
\left\{
(x,y,z)\in\mathbb C^3:
x=y=0
\right\}.
\end{equation}

We can now state the main Galoisian non-integrability result.

\begin{theorem}
\label{th:NH_galois_nonintegrability}
Assume that $ab\neq0$. In any neighborhood of the non-equilibrium
phase curve $\Gamma$, system~\eqref{eq:periodic_NH} is not
meromorphically $B$-integrable and does not possess any non-constant
meromorphic first integral.
\end{theorem}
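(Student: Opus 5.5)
The plan is to compute the variational equations along the particular solution $\vvarphi_{0,0}$ of \eqref{eq:NH_particular_solution}, rationalize them, and show that the differential Galois group of the normal variational equation has identity component $\mathrm{SL}(2,\mathbb C)$. Since $\mathrm{SL}(2,\mathbb C)$ is neither solvable nor Abelian, the Li--Shi theorem (Theorem~\ref{thm:Li-Shi-periodic-NH}) and the Ayoul--Zung theorem (Theorem~\ref{thm:Ayoul-Zung-periodic-NH}) then give both conclusions of Theorem~\ref{th:NH_galois_nonintegrability} simultaneously.

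\textbf{Variational equations.} Along $x=y=0$, $z=z_0-bt$, the Jacobian of \eqref{eq:periodic_NH} has rows $(0,1,0)$, $(-1,-a\sin z,0)$ and $(0,0,0)$, because $2b\sin y=0$ and $a\sin y\cos z=0$ there. Hence the variational equations read
\[
\dot X=Y,\qquad \dot Y=-X-a\sin z(t)\,Y,\qquad \dot Z=0.
\]
The $(X,Y)$ block is the normal variational equation, and it is equivalent to
\[
\ddot X+a\sin z(t)\,\dot X+X=0 .
\]
Because the $Z$-equation is trivial, the Galois group of the full variational system coincides with that of this second-order equation. It therefore suffices to prove that its identity component is not solvable; non-Abelian then follows automatically.

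\textbf{Rationalization.} For $b\neq0$ I would take $z$ as the independent variable, using $\rmd/\rmd t=-b\,\rmd/\rmd z$, and then set $\xi=\cos z$. Using $X_z=-\sin z\,X_\xi$ and $X_{zz}=\sin^2 z\,X_{\xi\xi}-\cos z\,X_\xi$, this gives
\[
b^2(1-\xi^2)X''-b^2\xi X'+ab(1-\xi^2)X'+X=0 .
\]
This equation has rational coefficients; alternatively one can use $w=e^{\rmi z}$. The map $t\mapsto\xi$ is a finite branched covering, so it preserves the identity component of the Galois group, in the spirit of the Morales--Ramis change-of-variable lemma. The singular points $\xi=\pm1$ are regular, with exponents $0$ and $1/2$. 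The point $\xi=\infty$ is irregular of Poincaré rank one, precisely because of the constant term $a/b\neq0$ in the coefficient of $X'$. This is exactly where $ab\neq0$ enters: for $a=0$ the equation becomes a Chebyshev-type equation with only regular singular points, and the group is virtually Abelian.

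\textbf{Kovacic analysis.} Next I would pass to the reduced form $w''=r(\xi)w$ and run the Kovacic algorithm. Case~3 (finite primitive groups) is excluded immediately, since such groups force all singularities to be regular, whereas $\infty$ is irregular. Case~2 (imprimitive groups) I would handle with the standard $E_c$ sets. At $\xi=\pm1$, with exponent difference $1/2$, the sets are $E_{\pm1}=\{1,2,3\}$. At $\infty$, the order of $r$ is $0$ because of the irregular term, which gives $E_\infty=\{0\}$. The required degree $d=\tfrac12(e_\infty-e_{1}-e_{-1})$ must be a nonnegative integer, and I expect that the rank-one irregular behaviour at infinity kills every admissible choice.

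\textbf{Main obstacle.} The hard step is Case~1, the reducible or exponential-type solutions $e^{\int\omega}$ with $\omega$ rational. Because of the irregular point, $\omega$ has a nonzero constant term at $\infty$, of the form $\pm\rmi/b$-type or $\pm a/b$-type data after the shift. One must then show that the resulting polynomial $P$ of degree $d$ cannot satisfy the auxiliary equation. I would attack this by writing the auxiliary equation as a three-term recurrence on the coefficients of $P$, and showing that the top coefficient forces a relation such as $a^2=0$ or $d<0$ for every sign choice. If that proves messy, my fallback is to identify the equation, after a gauge transformation $X=(1\pm\xi)^{1/2}e^{\alpha\xi}u$, with a confluent Heun equation. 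One can then use known criteria for Liouvillian solutions of confluent Heun equations, namely that polynomial solutions are possible only when a terminating condition on the parameters holds, and check that this condition is incompatible with $ab\neq0$. Once Case~1 is excluded, the Galois group is $\mathrm{SL}(2,\mathbb C)$, and the theorem follows.
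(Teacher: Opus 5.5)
Your setup is sound, and your rationalization differs from the paper's. You use the extra symmetry $z\mapsto -z$ of the normal variational equation and pass to $\xi=\cos z$. This gives a confluent Heun equation with regular singular points $\xi=\pm1$ (exponents $0,\tfrac12$) and an irregular point at $\infty$. The paper instead uses $s=e^{\mathrm{i}(z_0-bt)}$ and gets a double confluent Heun equation whose only finite pole, $s=0$, has order four.

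Strictly, $t\mapsto\xi$ is not a finite covering. What you need is that $s\mapsto\xi=(s+s^{-1})/2$ is a degree-two branched covering of $\mathbb P^1$, so your $\xi$-equation and the paper's $s$-equation have the same identity component; the step $t\mapsto s$ is common to both routes. The cost of your choice is that the order-two poles at $\xi=\pm1$ leave Kovacic's Case~2 open a priori. In the paper, Cases~2 and~3 are excluded by the pole-order conditions alone.

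The gap is that the proposal stops exactly at the decisive step. Case~2 is left as an expectation, and Case~1 as a ``main obstacle'' with a recurrence or confluent-Heun fallback. Neither fallback is needed. Write the equation as $X''+pX'+qX=0$ with $\beta=a/b$, $p=\beta+\frac{\xi}{\xi^2-1}$ and $q=\frac{1}{b^2(1-\xi^2)}$. Then
\[
r=\frac{\beta^2}{4}+\frac{\beta}{2}\,\frac{\xi}{\xi^2-1}-\frac{3}{16(\xi-1)^2}-\frac{3}{16(\xi+1)^2}+\Bigl(\frac18+\frac{1}{b^2}\Bigr)\frac{1}{\xi^2-1}.
\]

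\textbf{Case~2.} Your sets $E_{\pm1}=\{1,2,3\}$ and $E_\infty=\{0\}$ are right. Hence $d=\tfrac12(0-e_1-e_{-1})\le-1$ for every choice, and Case~2 is excluded.

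\textbf{Case~1.} Here $E_{\pm1}=\{\tfrac14,\tfrac34\}$, $[\sqrt r]_\infty=\beta/2$ and $E_\infty=\{\pm\tfrac12\}$. The only non-negative integer $d$ is $d=0$, coming from $\alpha_1=\alpha_{-1}=\tfrac14$ and $\alpha_\infty=+\tfrac12$. This gives $\omega=\frac{\beta}{2}+\frac{\xi}{2(\xi^2-1)}=\frac p2$. Then $P=1$ and
\[
\omega'+\omega^2-r=q=\frac{1}{b^2(1-\xi^2)}\neq0,
\]
so Case~1 fails and the group is $\mathrm{SL}(2,\mathbb C)$.

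This is the same mechanism as in the paper, where the unique candidate is also $\omega=p/2$ and the residual is $-1/(b^2s^2)$. The only exponential-type candidate corresponds to $X\equiv\mathrm{const}$, which the restoring term $X$ rules out. In particular, the constant term of $\omega$ at infinity is $a/(2b)$, not of ``$\pm\mathrm{i}/b$'' type, and no three-term recurrence arises. With these lines supplied, your conclusion via Li--Shi and Ayoul--Zung goes through as stated.
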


\begin{proof}
Let $(X,Y,Z)^T$ denote the variations of $(x,y,z)^T$ along the
particular solution~\eqref{eq:NH_particular_solution}. The first
variational equations along the phase curve $\Gamma$ are
\begin{equation}
\label{eq:NH_variational_equations}
\frac{\rmd}{\rmd t}
\begin{pmatrix}
X\\Y\\Z
\end{pmatrix}
=
\begin{pmatrix}
0&1&0\\
-1&-a\sin(z_0-bt)&0\\
0&0&0
\end{pmatrix}
\begin{pmatrix}
X\\Y\\Z
\end{pmatrix}.
\end{equation}
The variation $Z$ is tangent to the invariant curve, while the normal
variations $(X,Y)$ satisfy
\begin{equation*}
\label{eq:NH_NVE_system}
\dot X=Y,
\qquad
\dot Y=-X-a\sin(z_0-bt)Y.
\end{equation*}
Hence the normal variational equation can be written as
\begin{equation}
\label{eq:NH_NVE_t}
\ddot X
+
a\sin(z_0-bt)\dot X
+
X
=
0.
\end{equation}
To rationalize the trigonometric coefficients, we introduce directly
the new independent variable
\begin{equation}
\label{eq:NH_galois_s}
s=e^{\mathrm{i}(z_0-bt)}.
\end{equation}
The derivatives transform according to
\begin{equation*}
\label{eq:NH_derivative_transformation}
\frac{\rmd}{\rmd t}
=
-\mathrm{i}\,b\,s\frac{\rmd}{\rmd s},
\qquad
\frac{\rmd^2}{\rmd t^2}
=
-b^2\left(
s^2\frac{\rmd^2}{\rmd s^2}
+
s\frac{\rmd}{\rmd s}
\right).
\end{equation*}
Therefore,
\[
\dot X
=
-\mathrm{i}bsX',
\qquad
\ddot X
=
-b^2\left(s^2X''+sX'\right),
\]
where the prime denotes differentiation with respect to~$s$.
Moreover,
\[
\sin(z_0-bt)
=
\frac{1}{2\mathrm{i}}
\left(
s-\frac{1}{s}
\right).
\]
Substituting these expressions into
Eq.~\eqref{eq:NH_NVE_t} and dividing by $-b^2s^2$, we obtain
\begin{equation}
\label{eq:NH_NVE_rational}
X''
+
\left(
\frac{a}{2b}
+\frac{1}{s}
-\frac{a}{2bs^2}
\right)X'
-
\frac{1}{b^2s^2}X
=
0.
\end{equation}
Thus, the normal variational equation is transformed into a
second-order linear differential equation with rational coefficients.
Removing the first-derivative term by the standard transformation
\begin{equation}
\label{eq:NH_reduced_transformation}
X(s)
=
w(s)s^{-1/2}
\exp\left[
-\frac{a}{4b}
\left(
s+\frac{1}{s}
\right)
\right]
\end{equation}
gives the reduced equation
\begin{equation}
\label{eq:NH_reduced_DCHE}
w''=r(s)w,
\end{equation}
with
\begin{equation}
\label{eq:NH_r_DCHE}
r(s)
=
\alpha^2
+\frac{\alpha}{s}
+
\frac{b^{-2}-2\alpha^2-\frac14}{s^2}
+
\frac{\alpha}{s^3}
+
\frac{\alpha^2}{s^4},
\end{equation}
where \[\alpha=\frac{a}{4b}.\]
Equation~\eqref{eq:NH_reduced_DCHE} belongs to the family of double
confluent Heun equations.
 To study
its differential Galois group, we apply the Kovacic
algorithm~\cite{Kovacic:86::}.

\begin{theorem}[Kovacic]
\label{th:Kovacic_NH}
Let $\mathcal G$ be the differential Galois group of
\[
w''=r(s)w,
\qquad
r(s)\in\mathbb C(s).
\]
Then one of the following four cases occurs:
\begin{enumerate}
\item
$\mathcal G$ is triangularizable and the equation admits a
Liouvillian solution with rational logarithmic derivative;

\item
$\mathcal G$ is conjugate to a subgroup of the infinite dihedral
group and the logarithmic derivative of a solution is algebraic of
degree two over $\mathbb C(s)$;

\item
$\mathcal G$ is finite and all solutions are algebraic;

\item
$\mathcal G=\mathrm{SL}(2,\mathbb C)$ and the equation has no
Liouvillian solutions.
\end{enumerate}
\end{theorem}

The possible cases can be restricted by examining the orders of the
poles of $r$. Let $\Sigma'$ denote the set of finite poles and put
$\Sigma=\Sigma'\cup\{\infty\}$. If
\[
r(s)=\frac{p(s)}{q(s)},
\]
with relatively prime polynomials $p$ and $q$, we use
\[
o(\infty)=\deg q-\deg p.
\]

The following necessary conditions will be sufficient for our
purposes.

\begin{theorem}[Kovacic]
\label{th:Kovacic_NC_NH}
The following conditions are necessary for the first three cases of
Theorem~\ref{th:Kovacic_NH}:
\begin{enumerate}
\item
every finite pole has even order or order one, while
$o(\infty)$ is even or greater than two;

\item
there exists at least one finite pole of order two or of odd order
greater than two;

\item
every finite pole has order at most two and $o(\infty)\geq2$.
\end{enumerate}
\end{theorem}

We now apply these conditions to~\eqref{eq:NH_r_DCHE}. Throughout
the proof we assume
\[
a\neq0,
\qquad
b\neq0,
\]
and hence $\alpha\neq0$. The only finite pole of $r(s)$ is $s=0$ of order
$
o(0)=4.
$
Moreover, the expansion at infinity is
\begin{equation*}
\label{eq:NH_Laurent_infty}
r(s)
=
\alpha^2+\frac{\alpha}{s}
+O\left(\frac{1}{s^2}\right).
\end{equation*}
Thus,  the degree of infinity is $
o(\infty)=0$.

Hence, taking into account the character of these singularities, we
conclude that the necessary conditions for Case~1 are satisfied.
Case~2 is excluded because $r(s)$ has no finite pole of order two or
of odd order greater than two, whereas Case~3 is excluded because
$s=0$ is a pole of order four. Consequently, only Case~1 or Case~4
of Theorem~\ref{th:Kovacic_NH} can occur.

\begin{lemma}
\label{lem:NH_Galois}
For $a\neq0$ and $b\neq0$, the differential Galois group of
equation~\eqref{eq:NH_reduced_DCHE} is
$\operatorname{SL}(2,\mathbb C)$.
\end{lemma}

\begin{proof}
We apply the first case of the Kovacic algorithm.

The finite singularity $s=0$ is a pole of order
\[
o(0)=4=2\nu,
\qquad
\nu=2.
\]
Following the algorithm, we determine the principal part
$[\sqrt r]_0$ from the Laurent expansion of $\sqrt r$ at $s=0$.
From~\eqref{eq:NH_r_DCHE}, we have
\[
[\sqrt r]_0=\frac{\alpha}{s^2}.
\]
Moreover, the coefficient of $s^{-3}$ in $r(s)$ is $\alpha$, whereas
$([\sqrt r]_0)^2=\alpha^2/s^4$ has no term proportional to
$s^{-3}$. Hence,
\begin{equation*}
\label{eq:NH_E0}
E_0
=
\left\{
\alpha_0^+,\alpha_0^-
\right\}
=
\left\{
\frac32,\frac12
\right\}.
\end{equation*}

At infinity, we have
\[
o(\infty)=0=-2\nu,
\qquad
\nu=0.
\]
The corresponding polynomial part of $\sqrt r$ is
\[
[\sqrt r]_\infty=\alpha.
\] 
The coefficient of $s^{-1}$ in $r(s)$ is again $\alpha$, while
$([\sqrt r]_\infty)^2=\alpha^2$ contains no term proportional to
$s^{-1}$. Hence, for a singularity at infinity of order $-2\nu\leq0$, the algorithm
gives
\begin{equation*}
\label{eq:NH_Einfty}
E_\infty
=
\left\{
\alpha_\infty^+,\alpha_\infty^-
\right\}
=
\left\{
\frac12,-\frac12
\right\}.
\end{equation*}

We now pass to Step~2 of the Kovacic algorithm. We calculate the
Cartesian product
\[
E=E_0\times E_\infty
\]
and consider only those elements
\[
e=(e_0,e_\infty)\in E
\]
for which
\begin{equation*}
\label{eq:NH_de}
d(e)=e_\infty-e_0
\end{equation*}
is a non-negative integer. In the present case, there exists only one
element satisfying this condition, namely
\begin{equation*}
\label{eq:NH_admissible_element}
e=(\alpha_0^-,\alpha_\infty^+),
\qquad
d(e)=0.
\end{equation*}
For this choice, the corresponding rational function entering
Step~3 of the Kovacic algorithm is
\begin{equation*}
\label{eq:NH_Kovacic_omega}
\begin{split}
\omega(s)
&=
-[\sqrt r]_0
+\frac{\alpha_0^-}{s}
+[\sqrt r]_\infty
=
-\frac{\alpha}{s^2}
+\frac{1}{2s}
+\alpha.
\end{split}
\end{equation*}

We now pass to Step~3 of the algorithm. We look for a monic polynomial
$P(s)$ of degree $d(e)$ satisfying
\begin{equation}
\label{eq:NH_Kovacic_P}
P''
+
2\omega P'
+
\left(
\omega'
+
\omega^2
-
r(s)
\right)P
=
0.
\end{equation}
Since $d(e)=0$, we have $P=1$, and hence
Eq.~\eqref{eq:NH_Kovacic_P} reduces to
\begin{equation*}
\label{eq:NH_Kovacic_final_condition}
\omega'+\omega^2=r(s).
\end{equation*}
A direct computation gives
\begin{equation}
\label{eq:NH_Kovacic_residual}
\omega'(s)+\omega(s)^2-r(s)
=
-\frac{1}{b^2s^2}.
\end{equation}
Since $b\neq0$, this expression does not vanish identically.
Consequently, the polynomial required in Step~3 does not exist, and
Case~1 of the Kovacic algorithm is excluded.

Cases~2 and~3 have already been excluded by the necessary conditions
on the orders of the poles. Hence, only Case~4 remains, and therefore
the differential Galois group of the reduced equation
~\eqref{eq:NH_reduced_DCHE} is
$
\mathcal G
=
\operatorname{SL}(2,\mathbb C).
$
\end{proof}

Since
$
\mathcal G=\operatorname{SL}(2,\mathbb C),
$
its identity component is non-Abelian and non-solvable. The
reduction~\eqref{eq:NH_reduced_transformation} preserves solvability
of the identity component. Moreover, non-Abelianity of the identity
component of the reduced equation implies non-Abelianity of the
identity component of the unreduced normal variational equation
~\cite{HuangShiLi2017,Szuminski:20c::}. Hence, the identity component
of the differential Galois group of the normal variational equation
is both non-Abelian and non-solvable.

By the Ayoul--Zung theorem, its non-Abelianity excludes meromorphic
$B$-integrability in any neighborhood of $\Gamma$. Moreover, by
Theorem~\ref{thm:Li-Shi-periodic-NH}, its non-solvability excludes
the existence of any non-constant meromorphic first integral there.
Hence, for $ab\neq0$, system~\eqref{eq:periodic_NH} is not
meromorphically $B$-integrable and possesses no non-constant
meromorphic first integral in such a neighborhood.

\end{proof}

 \section{Conclusions}
\label{sec:conclusions}
In this paper, we have studied a trigonometric generalization of the
classical Nos\'e--Hoover oscillator, motivated by the relation between
the harmonic oscillator and its periodic pendulum counterpart. Although
the resulting system reproduces the classical model locally to leading
order, its natural phase space is the compact three-dimensional torus
$\mathbb{T}^3$. Our results show that this change in global geometry
has significant consequences for both the dynamics and integrability
properties of the system.

The numerical analysis reveals a transition from predominantly regular
dynamics to the coexistence of regular and chaotic motion. For weak
coupling $a$, the trigonometric and classical Nos\'e--Hoover oscillators
remain qualitatively similar near the central regular region, whereas
away from it the invariant structures of the trigonometric model are
destroyed more rapidly. As $a$ increases, chaotic layers grow and merge
into extended chaotic regions. Bifurcation diagrams identify a
period-doubling route to chaos, while the Lyapunov spectra and
Kaplan--Yorke dimensions characterize the transition and indicate
phase-space contraction in parts of the chaotic regime. Together with
the Poincar\'e sections and Lyapunov maps, these results reveal a
nontrivial organization of regular, chaotic, and contracting dynamics.

The limiting cases $a=0$ and $b=0$ differ substantially. For $a=0$,
the dynamics separates into an integrable planar subsystem and a driven
angular equation, and two functionally independent first integrals can
be constructed on the corresponding regular domains. For $b=0$, the
variable $z$ is constant and the phase space is foliated by invariant
two-dimensional tori. On each torus, the reduced dynamics can be
transformed into a polynomial vector field and analyzed using Darboux
theory. Thus, the integrable structures of the classical model are only
partially inherited by its trigonometric counterpart.

Near the intersection of the parameter axes, first-order averaging
yields two periodic solutions of the full system for sufficiently small
nonzero parameters. Their different linear types reproduce the
center--saddle organization observed in the Poincar\'e sections, while
the averaged linearization determines the leading behavior of their
nontrivial characteristic multipliers. Since these multipliers differ
from unity, the Poincar\'e--Llibre--Valls criterion provides an
obstruction to the existence of a $C^1$ first integral whose gradient
does not vanish along the corresponding periodic orbit.

This perturbative result is complemented by a differential-Galois
analysis. Along a suitable non-equilibrium particular solution, the
normal variational equation reduces to a second-order linear equation
of double confluent Heun type. The Kovacic algorithm shows that its
differential Galois group is $\operatorname{SL}(2,\mathbb C)$. The
Ayoul--Zung theorem therefore excludes meromorphic $B$-integrability,
while the Li--Shi criterion excludes any non-constant meromorphic first
integral. For $ab\neq0$, both obstructions hold in a neighborhood of
the corresponding non-equilibrium phase curve.

The two analytical approaches are complementary. Averaging provides a
local $C^1$ non-integrability obstruction in the perturbative regime,
whereas differential Galois theory gives an independent meromorphic
obstruction for arbitrary nonzero $a$ and $b$. To the best of our
knowledge, combining these two types of non-integrability obstructions
within the same non-Hamiltonian system has rarely been considered.

Taken together, our results show that the trigonometric Nos\'e--Hoover
oscillator is not merely a bounded periodic counterpart of the
classical model. It preserves the local leading-order structure while
changing the global geometry, the organization of regular and chaotic
motion, and the structure of the limiting cases. The model therefore
provides a simple setting in which numerical dynamics, periodic
solutions, and complementary notions of integrability and
non-integrability can be studied within a unified framework.

Although the present study is quite extensive, several open problems remain. It
would be interesting to study the statistical properties of the
trigonometric thermostat and, in particular, to determine whether it
possesses a physically meaningful invariant measure. The chaotic
attracting sets observed numerically also deserve a more detailed
analysis. In particular, the system shows an interesting transition as
the parameter $a$ is increased. For small $a$, the phase space has a
largely conservative-like structure. At intermediate values of $a$,
strongly dissipative dynamics develops, together with chaotic attracting
sets. Surprisingly, for larger $a$, the phase-space structure becomes
conservative-like again, with the attracting structures gradually
disappearing. Our additional numerical experiments indicate that this
behavior persists even for very large values of $a$, although these
results are not presented in the present paper. Understanding the
mechanism behind the transition from conservative-like to strongly
dissipative and then back to conservative-like dynamics remains an
interesting problem for further study.

Another open problem concerns the reduced two-dimensional system for
$b=0$. Although we have identified its qualitative phase-space structure
and analyzed its Darboux polynomials and exponential factors, a complete
classification of its integrability remains open. In particular, it
would be interesting to determine whether further exceptional integrable
cases exist beyond those identified here. More generally, these problems
may help to clarify which properties of the classical Nos\'e--Hoover
oscillator survive when its unbounded phase space is replaced by the
periodic geometry considered in this work.

\section*{Author contributions}

W.~S. conceived and formulated the research problem, developed the
analytical framework, performed the analytical and numerical computations,
prepared the figures, and wrote the original draft. J.~L. contributed to
the analytical framework, theoretical discussion, and interpretation of the
non-integrability results, and critically revised the manuscript.
Both authors read and approved the final version of the manuscript.
\newpage
\section*{Code availability}
The Mathematica code used to perform the numerical analysis presented
in this manuscript, including the computation of Lyapunov spectra,
Lyapunov maps, the Kaplan--Yorke dimension, bifurcation diagrams, and
the Lyapunov Integrability Test (LIT), is publicly available, together
with subsequent updates, at:

\url{https://github.com/WojciechSzuminski/LIT}.

\section*{Acknowledgements}

The present work was   carried out during a research stay at the Department of Mathematics, Universitat Autònoma de Barcelona, at the invitation of Professor Jaume Llibre.

\section*{Funding}
W.~S. was supported by the Polish National Agency for Academic Exchange
(NAWA) through the Bekker Programme (grant no.~BPN/BEK/2025/1/00055).
J.~L. was partially supported by the Agencia Estatal de Investigación of Spain
(grant no. PID2022-136613NB-100).
		\section*{Data availability }
The numerical data generated in this study can be reproduced using the
publicly available code described in the Code availability section.
Additional data supporting the findings of this study are available
from the corresponding author upon reasonable request.
	\section*{Declarations}
	\subsection*{Conflicts of interest} The author  declares no conflict of interest.\\
	
\bibliographystyle{unsrt}

\end{document}

%% file: mathdef.tex
\newcommand\mvector{\boldsymbol}

\newcommand\vv{\mvector{v}}

\newcommand\vx{\mvector{x}}

\newcommand\vP{\mvector{P}}

\newcommand\valpha{\mvector{\alpha}}

\newcommand\vvarphi{\mvector{\varphi}}

\newcommand\field{\mathbb}

\newcommand\Z{\field{Z}}

\newcommand\T{\field{T}}

\newcommand\rmd{\mathrm{d}}

\newcommand\rmi{\mathrm{i}\mspace{1mu}}